\newtheorem{lemma}{Lemma}
\newtheorem{proposition}{Proposition}
\newtheorem{definition}{Definition}
\newtheorem{corollary}{Corollary}
\newtheorem{remark}{Remark}
\def\C{\mathbb C}
\def\R{\mathbb R}
\def\Z{\mathbb Z}
\def\qed{$\Box$}
\def\phi{\varphi}
\def\epsilon{\varepsilon}
\begin{document}
\title{Fractional Hamiltonian Monodromy from a Gauss-Manin Monodromy}
\author{D. Sugny$^1$,P. Marde\v{s}i\'c$^2$,M. Pelletier$^2$,A.
Jebrane$^2$,H. R. Jauslin$^1$}
\address{$^1$ Institut Carnot de
Bourgogne, UMR 5209 CNRS-Universit\'e de Bourgogne, BP 47870,
21078 Dijon, France}
\address{$^2$ Institut de Math\'ematiques de Bourgogne, UMR CNRS 5584, BP
47870, 21078 Dijon, France} \email{dominique.sugny@u-bourgogne.fr}
\subjclass{34M35,37J20,37J30,58K10} \keywords{Hamiltonian
monodromy, Gauss-Manin monodromy, resonance, Abelian integral}
\date{\today}
\maketitle
\begin{abstract}
Fractional Hamiltonian Monodromy is a generalization of the notion
of Hamiltonian Monodromy, recently introduced by N. N.
Nekhoroshev, D. A. Sadovski\'i and B. I. Zhilinski\'i for
energy-momentum maps whose image has a particular type of
non-isolated singularities. In this paper, we analyze the notion
of Fractional Hamiltonian Monodromy in terms of the Gauss-Manin
Monodromy of a Riemann surface constructed from the
energy-momentum map and associated to a loop in complex space
which bypasses the line of singularities. We also prove some
propositions on Fractional Hamiltonian Monodromy for $1:-n$ and
$m:-n$ resonant systems.
\end{abstract}
\section{Introduction}
We consider an integrable system on a four dimensional symplectic
manifold defined by an energy-momentum map. For a proper map, the
Liouville-Arnold theorem allows to foliate the phase space by tori
or a disjoint union of tori over the regular values of the image
of the map. Hamiltonian monodromy is the monodromy of this
fibration \cite{duist,cushman}. The word \textit{Hamiltonian} is
added to distinguish this monodromy from the Gauss-Manin monodromy
of Riemann surfaces which is also used in this paper. A non
trivial monodromy can be expected if the set of regular values of
the image of the energy-momentum map is not simply connected.
Hamiltonian monodromy has profound implications both in classical
and quantum mechanics \cite{san1} since it is the simplest
topological obstruction to the existence of global action-angle
variables \cite{duist} and thus of \textit{global good quantum
numbers} \cite{san1}. The phenomenon of Hamiltonian monodromy has
been exhibited in a large variety of physical systems both in
classical and quantum mechanics
\cite{ezra,sadov,kozin,child2,sadov2,waalkens,co22,efstathiou}.

The presence of non-trivial monodromy in energy-momentum maps with
isolated singularities of focus-focus type is now
well-established. The non-trivial monodromy in the spherical
pendulum is due to this singularity \cite{duist,cushman}.
Recently, the definition of Hamiltonian monodromy has been
extended to characterize not only isolated singularities but also
some types of non-isolated singularities, leading to the concept
of Fractional Hamiltonian Monodromy
\cite{frac1,frac2,frac3,frac4}. More precisely, one considers an
energy-momentum map with a 1-dimensional set $C$ of \textit{weak
critical values} defined by the property that each point of this
set lifts to a particular type of singular torus, a curled torus,
i.e. for the simplest case two cylinders glued together along a
line whose extremities are identified after a half-twist. The
standard Hamiltonian monodromy describes the possible
non-triviality of a 2-torus bundle over a loop in the set of
regular values of the image of the energy-momentum map. The
monodromy matrix is an automorphism of the first homology group
$H_1$ of the torus with integer coefficients. Fractional monodromy
can appear if the set of admissible paths is enlarged to include
loops which cross the singular line $C$. For such paths, the
singularity of $C$ being sufficiently weak, it can be shown that
the monodromy action can still be defined but only on a subgroup
of $H_1$. The formal extension of this action to the whole group
leads to monodromy matrices with fractional coefficients and to
the denomination fractional monodromy. One of the main motivations
for the introduction of this new concept is given by the quantum
manifestation of monodromy in the discrete joint spectrum of the
energy-momentum map \cite{san1,frac2}. This spectrum can be
represented as a lattice of points in $\R^2$. The focus-focus
singularity can be detected by a point defect of this lattice
which prevents it to be a regular lattice isomorphic to $\Z^2$. In
the same way, fractional monodromy can be interpreted as a line
defect of the lattice and appears therefore as a natural
generalization of standard hamiltonian monodromy.

The presence of fractional monodromy has been shown in a system of
coupled oscillators in $m:-n$ resonance with $m$ or $n$ different
from 1. Two constructions have been given based on geometric
\cite{frac2,nekonew} or analytic \cite{frac4} arguments to define
rigorously the crossing of $C$. The geometric construction
consists in following a basis of cycles of a regular torus through
the crossing of $C$. Not all the cycles can cross continuously the
singularity, only those corresponding to a subgroup of $H_1$ can.
When $C$ is crossed, one allows cycles to break up and reconnect,
the orientation of the cycles being preserved. The second
construction uses, as in the original paper of Duistermaat
\cite{duist,cushman}, the period lattice of the torus
\cite{arnold} which is however not defined on the singular line
$C$. The period lattice is defined through two functions $\Theta$
and $T$ at each point of the regular values of
 the image of the energy-momentum map. Some regularizations of
these functions can be made in order to cross continuously the
line of singularities \cite{frac4}. Note that the preceding
geometric point of view can be reconstructed from this analytic
approach since the basis of cycles can be determined from the
functions $\Theta$ and $T$.

The goal of this work is to study fractional monodromy by
complexifying the phase space in order to bypass the line of
singularities. Somewhat similar studies for the Lagrange top and
the spherical pendulum have already been published
\cite{compl1,compl2,compl3} and have highlighted the relation
between Hamiltonian and complex monodromy. A parallel can also be
made with Bohr-Sommerfeld rules for semi-classical quantization.
Such calculations can be undertaken in the $C^\infty$
\cite{colin1,colin2,colin3} or in the analytic context
\cite{voros,pham1,pham2}. The real approach needs regularization
of the sub-principal term whereas the complex approach avoids such
problems by avoiding the singularity. In this paper, we show that
fractional hamiltonian monodromy is given by a Gauss-Manin
monodromy \cite{AGZV,zoladek} of a Riemann surface constructed
from the energy-momentum map. The construction can be made in the
reduced phase space \cite{cushman} which is well suited to the
introduction of Riemann surfaces. The Gauss-Manin connection is
defined for a complex semi-circle around the line $C$ of
singularities and can be calculated by applying the
Picard-Lefschetz theory. We also introduce the complex extension
of the functions $\Theta$ and $T$ which are viewed as integrals of
rational forms over a cycle of the Riemann surface. In other
words, the regularizations of $\Theta$ and $T$ in the real
approach are replaced by a complex continuation of these
functions. The variations of $\Theta$ and $T$ along the bypass in
the complex domain are deduced from the Gauss-Manin monodromy. The
functions $\Theta$ and $T$ allow us to go back to the real
approach and to define a real monodromy.
 We show that this monodromy corresponds to
  fractional hamiltonian monodromy asymptotically in the limit
 where the radius of the complex semi-circle goes to 0. Using this construction,
we recover the results of the real approach obtained in Refs.
\cite{frac1,frac2,frac3,frac4} for the 1:-2 resonance and Ref.
\cite{nekonew} for $m:-n$ resonance. Moreover, for $1:-n$ and
$m:-n$ resonant systems, we give new proofs of these results from
the real and the complex approaches. A
 geometric point of view of the Gauss-Manin monodromy can be
given by inspecting the motion of the branching points of the
Riemann surface along the semi-circle around $C$. For $1:-n$
resonant systems, the Riemann surface has locally $n$ complex
branching points in a neighborhood of $C$ lying on a circle around
the origin with an angle of $2\pi/n$ between each other. Along a
complex semi-circle around $C$, the $n$ points turn by an angle of
$2\pi/n$ and exchange their positions. The variation of $\Theta$
near $C$ is then calculated as a residue of a given 1-form. This
characterizes the line of singularities $C$ and fractional
monodromy.

The organization of this article is as follows : We first consider
the real approach and we determine the monodromy matrices for
1:-2, $1:-n$ and $m:-n$ resonances in Sec. \ref{real}. We next
show in Sec.
 \ref{complex} how fractional monodromy can be defined in the
 complex approach and we detail its geometric interpretation. We recover the different
 results obtained in the real approach. Concluding remarks and perspectives are given in
 Sec. \ref{conc}. Appendices \ref{appgeom} and \ref{semi} present a schematic
 representation of the geometric construction of fractional monodromy for
 1:-2 and $1:-n$ resonant systems in the real approach and the semi-classical
 point of view for $1:-n$ and $m:-n$ resonances.
 The material of these appendices complements the
 existing literature on these two points. Appendix \ref{appred} finally
 deals with the reduction procedure in the complex approach.
\section{The real approach}\label{real}
The goal of this section is to furnish a short overview of
fractional hamiltonian monodromy in the real approach. Starting
from the example introduced in Ref. \cite{frac3,frac4} for the
1:-2 resonance, we extend it to $1:-n$ and $m:-n$ resonances. The
core of the results presented in this section are already
contained in \cite{frac1,frac2,frac3,frac4}. We describe it in
some detail since we need the results for the complex approach.
Furthermore, we present analytic computations in this section and
a geometric construction in appendix \ref{appgeom} which are
slightly different from the original ones and give a new view on
fractional hamiltonian monodromy. The originality of the
analytical computation for the 1:-2 resonance relies in the
introduction of a local description of the energy-momentum map
near the origin of the bifurcation diagram which considerably
simplifies the computation of the monodromy matrix. Since these
arguments are generalizable to $1:-n$ and $m:-n$ resonances, they
allow us to prove some propositions stated in
\cite{frac2,frac3,nekonew}. For the geometric description, we
introduce the standard representation of a torus i.e. a rectangle
whose some edges are identified. This geometric construction can
be extended straightforwardly to $1:-n$ resonant systems.
\subsection{The 1:-2 resonance} \label{real1m2}
We consider the symplectic manifold $M=T^*\R^2$ with standard
symplectic form $\omega=dq_1\wedge dp_1+dq_2\wedge dp_2$. We
introduce the energy-momentum map $F=(H,J):M\to\R^2$ where the two
 functions $J$ and $H$ have zero Poisson brackets
$\{J,H\}=0$. Following Refs. \cite{frac1,frac2,frac3,frac4}, we
choose a system corresponding to the $1:-2$ resonance defined by
\begin{eqnarray} \label{eq1}
\left\{ \begin{array}{ll} H=\sqrt{2}[\left(q_1^2-p_1^2\right)p_2+
2q_1p_1q_2]+2\varepsilon (q_1^2+p_1^2)(q_2^2+p_2^2)\\
J=\frac{1}{2}\left(q_{1}^2+p_{1}^2\right)-\left(q_2^2+p_2^2\right)
\end{array} \right. \ ,
\end{eqnarray}
where $\varepsilon$ is a non-zero real number. We denote by
$\mathcal{R}$ the image of $F$ and by $\mathcal{R}_{reg}$ the
regular values of $\mathcal{R}$. We recall that a point
$M\in\mathcal{R}$ is regular if the 1-forms $dH$ and $dJ$ are
linearly independent at all points of $F^{-1}(h,j)$.

The flow of $J$ defines an $S^1$-action on the phase space but
this action is not principal since the isotropy groups of the
points $\{p_1=0,q_1=0,p_2,q_2\}$ are isomorphic to $\Z/2\Z$. The
reduced phase space $J^{-1}(j)/S^1$ can be constructed by using
the algebra of invariant polynomials with values in $\R$ which is
generated by \cite{frac3,cushman} :
\begin{eqnarray} \label{eq2}
\left\{ \begin{array}{llll}
J({\bf p,q})=1/2\left(q_{1}^2+p_{1}^2\right)-\left(q_2^2+p_2^2\right)\\
\pi_1({\bf p,q})=1/2\left(q_{1}^2+p_{1}^2\right)+\left(q_2^2+p_2^2\right)\\
\pi_2({\bf p,q})=\sqrt{2}[\left(q_1^2-p_1^2\right)q_2-2q_1p_1p_2]\\
\pi_3({\bf
p,q})=\sqrt{2}[\left(q_1^2-p_1^2\right)p_2+2q_1p_1q_2]\end{array}\right.
 \ ,
\end{eqnarray}
with the constraint $|J|\leq \pi_1$. The reduced phase spaces
$P_j=J^{-1}(j)/S^1$ are defined in the space
$\R^3=\left(\pi_1,\pi_2,\pi_3\right)$ by the equations :
\begin{equation} \label{eq4}
\pi_{2}^2+\pi_{3}^2=\left(\pi_{1}-j\right)\left(\pi_{1}+j\right)^2
\ ,
\end{equation}
and correspond to non-compact surfaces with a conical singularity
for $j<0$. Having introduced the invariant polynomials, some
comments can be made on the choice of $H$. Since $\{H,J\}=0$, it
can be shown that if $H$ is polynomial in $\left(p_i,q_i\right)$
then it can be written as a polynomial function in $J$, $\pi_1$,
$\pi_2$ and $\pi_3$. We also notice that the reduced phase space
for the 1:-2 resonance being non-compact, not every choice of $H$
leads to a proper map for the energy-momentum map $F=(H,J)$. This
explains why a term of degree four has been added to $H$ [see Eq.
(\ref{eq1})]. The image of the energy-momentum map defined by Eq.
(\ref{eq1}) has the particularity to present a line of
singularities $C$, each point of this line except the origin lifts
to a singular curled torus (the origin lifts to a pinched-curled
torus). The topology of these singular tori can be determined by
the intersection of the reduced phase space $P_j$ with the level
set $H_j=h$ (see Sec. \ref{res1n}). $H_j$ is the reduced
Hamiltonian, i.e., a map from $P_j$ to $\R$ that sends a point of
$P_j$ to $H(\pi_1,\pi_2,\pi_3,j)$. Moreover, since the two
energy-momentum maps $F=(H,J)$ and $F'=(H-f(J),J)$, where $f$ is a
polynomial function, define up to a diffeomorphism the same
fibration of the phase space, we can consider an example such that
$H=0$ for the line of singularities $C$. Figure \ref{fig1}
displays the bifurcation diagram of $F$ (see \cite{frac3,frac4}
for details on this construction).
\begin{figure}
\begin{center}
\includegraphics[scale=0.4]{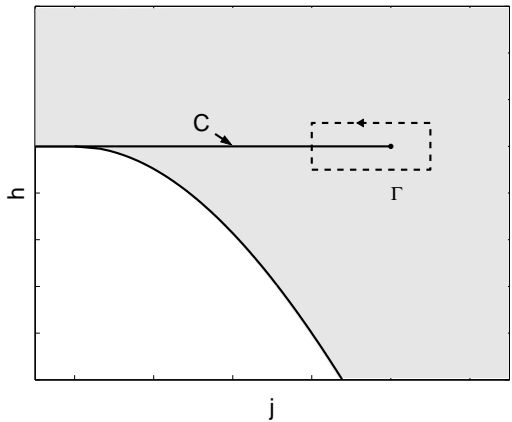}
\caption{\label{fig1} Image $\mathcal{R}$ of the energy-momentum
map $F$ (in grey). The singular line $C$ is represented by the
horizontal solid line. The small full dot indicates the position
of the image of the pinched-curled torus. A loop $\Gamma$ crossing
transversally $C$ is depicted in dashed lines.}
\end{center}
\end{figure}
Singular points (i.e. not regular) are represented by solid lines
in Fig. \ref{fig1}. $C$ is a line of weak singularities. The word
\textit{weak} means that for each point of $C$, there exist points
of the corresponding pre-image such that the rank of $F$ is 1. We
recall that for a regular point, the rank of $F$ is always equal
to 2 and that the rank of $F$ is 0 for one point of $F^{-1}(0,0)$.
\subsection{Fractional Hamiltonian Monodromy} \label{fhm}
We begin this section by recalling some basic facts about integer
and fractional monodromy. The word \textit{hamiltonian} in
hamiltonian monodromy will be omitted when confusion is unlikely
to occur.

Standard or integer monodromy is defined for a loop $\Gamma$ along
regular values of $\mathcal{R}$. Fractional monodromy is obtained
by extending the possible loops, allowing them to cross some
particular type of weak singular lines such as $C$. In the
standard case, we consider a point $(h,j)$ of $\Gamma$ which lifts
to a torus $T^2(h,j)$. We fix a basis of the homology group
$H_1(T^2(h,j),\Z)$. There exists a natural connection of the torus
bundle which allows to transport this basis along $\Gamma$
\cite{cushman,bates}. The monodromy matrix, which is an
automorphism of $H_1$, is the holonomy of this connection. This
construction can be generalized to fractional monodromy but only a
subgroup of $H_1(T^2(h,j),\Z)$ can be transported continuously
across the line $C$ \cite{frac2,frac4}. These remarks can be
understood by the following construction. Let $(h,j)\in
\mathcal{R}_{reg}$. Denoting by $\phi_J$ and $\phi_H$ the flows
associated to the Hamiltonians $J$ and $H$, the period lattice of
$F$ at a point $(h,j)$ is the set
\begin{equation} \label{fhm1}
\{ (t_1,t_2)\in \R^2 | \phi_J^{t_1}\circ\phi_H^{t_2}(z)=z\} \ ,
\end{equation}
for all $z\in F^{-1}(h,j)$. A basis for this period lattice, which
is isomorphic to $\Z^2$, is given by the vectors $v_1=(2\pi,0)$
and $v_2=(-\Theta,T)$ where $\Theta$ is the rotation angle and $T$
the first return time of the flow $\phi_H$ defined as follows
\cite{duist,cushman}. The Hamiltonian $J$ generates an
$S^1$-action on $F^{-1}(h,j)$. We denote by $\theta$ an angle
conjugated to the action $J$. Following $\phi_J$ which is
parameterized by $\theta$, one goes back to the starting point
when $\theta$ increases by $2\pi$, which gives $v_1$. If we
consider now $\phi_H$ from a point of an orbit of the flow
$\phi_J$, one sees that the first intersection of these two flows
takes places at time $T$. The two points of intersection of the
two flows define two angles $\theta_f$ and $\theta_i$ and the
twist $\Theta=\theta_f-\theta_i$ which is determined with respect
to the direction of $\phi_J$. Note that a different choice of the
angle $\theta$ leads to a different basis for the period lattice.
The corresponding rotation angles $\Theta$ differ by a multiple of
$2\pi$. The monodromy matrix associated to a loop lying in the
regular values of the image of the energy-momentum map is related
to the behavior of the functions $\Theta$ and $T$ along this loop.
For the standard monodromy, after a counterclockwise loop around
an isolated critical value (focus-focus singularity) it can be
shown that the rotation angle is increased by $2\pi$ whereas the
first return time is unchanged. As a consequence, $v_1$ is
transformed into $v_1$ and $v_2$ into $-v_1+v_2$, and thus the
monodromy matrix $M$ written in the local basis $(v_1,v_2)$ is
equal to
\begin{eqnarray} \label{fhm2}
M= \left( \begin{array}{cc}
1 & 0 \\
-1 & 1
\end{array} \right) \ .
\end{eqnarray}
The two functions $\Theta$ and $T$ allow to define a basis of
cycles for the homology group $H_1(F^{-1}(h,j),\Z)$ and thus to
recover a more geometric point of view. A basis
$([\beta_1],[\beta_2])$ of $H_1(F^{-1}(h,j),\Z)$ is given by the
cycles associated respectively to the flows of the vector fields
\begin{eqnarray}\label{fhm2a}
\left\{ \begin{array}{ll}
X_1=2\pi X_J \\
X_2=-\Theta(h,j)X_J+T(h,j)X_H
\end{array} \right. \ .
\end{eqnarray}
The flows $\phi_{X_1}^t$ ($t\in [0,1]$) and $\phi_{X_2}^t$ ($t\in
[0,1]$) generate respectively the closed cycles $\beta_1$ and
$\beta_2$. These two cycles are schematically represented in Fig.
\ref{fig1a} for $\Theta=-\pi$. In the basis
$([\beta_1],[\beta_2])$, the monodromy matrix is given by the same
matrix as Eq. (\ref{fhm2}) obtained for the period lattice.
\begin{figure}
\begin{center}
\includegraphics[scale=0.4]{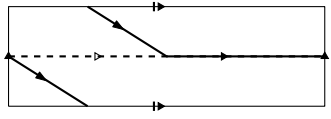}
\caption{\label{fig1a} Schematic representation on the torus
$F^{-1}(h,j)$ of the cycles $\beta_1$ and
 $\beta_2$ depicted respectively in dashed and solid lines.}
\end{center}
\end{figure}

The situation is slightly more complicated for fractional
monodromy. The analytic construction of fractional monodromy
follows the same steps as for the standard case \cite{frac4}.
Returning to the example of Eq. (\ref{eq1}) and considering the
loop $\Gamma$ of Fig. \ref{fig1}, the question which naturally
arises is the definition of the crossing of the line $C$ since
$\Theta$ has a discontinuity of size $\pi$ on this line and $T$
diverges. The idea of the method proposed in Ref. \cite{frac4}
consists in prolonging by continuity the function $\Theta$ (the
continuous function is called $\tilde{\Theta}$) and in rescaling
the time to obtain a finite first return time denoted $\tau$. Note
that this rescaling does not modify the definition of the cycles
but only the time to cover them. This leaves therefore the
monodromy matrix unchanged. $\tilde{\Theta}$ is defined by
$\tilde{\Theta}(h,j)=\Theta(h,j)$ for points $(h,j)$ of $\Gamma$
before the crossing of the line $C$ and by
$\tilde{\Theta}(h,j)=\Theta(h,j)+\pi$ for points after
\cite{frac4}. A basis of the period lattice is given by the
vectors $v_1=(2\pi,0)$ and $v_2=(-\tilde{\Theta},\tau)$ and the
cycles $\beta_1$ and $\beta_2$ are now associated to the vector
fields
\begin{eqnarray}\label{fhm2b}
\left\{ \begin{array}{ll}
X_1=2\pi X_J \\
\tilde{X}_2=-\tilde{\Theta}(h,j)X_J+\tau(h,j)X_H
\end{array} \right. \ .
\end{eqnarray}
The construction of these cycles has however to be carefully
examined. $\phi_{X_1}^t$ ($t\in[0,1]$) generates a cycle $\beta_1$
for all points of $\Gamma$. Before the crossing,
$\phi_{\tilde{X}_2}^t$ ($t\in[0,1]$) generates a closed cycle
$\beta_2$ but after the crossing, $\phi_{\tilde{X}_2}^t$
($t\in[0,1]$) generates only half of a cycle. To get a complete
cycle, we thus have to take $t\in[0,2]$. This means geometrically
that only a cycle $\beta_2$ covered twice can be transported
continuously across the line $C$. Thus, not all cycles in the
homology group can be transported along $\Gamma$. Only a subgroup
corresponding to the cycles that are run twice by $\tilde{X}_2$
can be transported. In terms of the period lattice, the crossing
of $C$ is only possible for the sublattice generated by $v_1$ and
$2v_2$. We can then define the monodromy matrix for a
counterclockwise loop crossing transversally $C$ once. The
monodromy matrix $M$ reads in the basis $(v_1,2v_2)$ or in the
basis $([\beta_1],[2\beta_2])$
\begin{eqnarray} \label{fhm3}
M= \left( \begin{array}{cc}
1 & 0 \\
-1 & 1
\end{array} \right) \ .
\end{eqnarray}
Extending formally the definition of $M$ to the whole homology
group or the whole period lattice, we obtain in the basis
$(v_1,v_2)$ or in the basis $([\beta_1],[\beta_2])$
\begin{eqnarray} \label{fhm3}
M= \left( \begin{array}{cc}
1 & 0 \\
-\frac{1}{2} & 1
\end{array} \right) \ .
\end{eqnarray}
We finally note that this \textit{generalized} monodromy is still
topological in the sense that it depends only on the homotopy
equivalence class of the loop considered. We also point out that
these cycles prolonged continuously to the curled torus allow to
recover the geometric construction of fractional monodromy
described in appendix \ref{appgeom}.
\subsection{Local computation of the monodromy matrix}
The determination of the monodromy matrix is based on the behavior
of the functions $\Theta$ and $\tau$ on the line of singularities
$C$. More precisely, the monodromy matrix can be constructed
uniquely from the size of the discontinuity of $\Theta$ and from
the fact that $\tau$ is continuous. In Ref. \cite{frac4}, the
computation was done by using global expressions for $\Theta$ and
$\tau$ in terms of elliptic integrals. It is clear that such a
global calculation can be expected to be done explicitly only for
simple energy-momentum maps.

We propose a computation of the fractional monodromy matrix based
on local arguments for the energy-momentum map $F$, i.e., a
Puiseux expansion in $h$ and $j$ around $(h=0,j=0)$. This
expansion is not trivial because a particular dissymmetry in $h$
and $j$ has to be preserved.
\begin{lemma} \label{lemma1}
In the variables $(\pi_1,\pi_2,\pi_3,J)$ and for a point
$(h,j)\in\mathcal{R}_{reg}$, the functions $\Theta$ and $\tau$
 are given by the following expressions :
\begin{equation} \label{fhm4}
\Theta(h,j)=h\int_{\pi_1^-}^{\pi_1^+}\frac{d\pi_1}{(j+\pi_1)\sqrt{Q(\pi_1)}}
\ ,
\end{equation}
and
\begin{equation} \label{fhm5}
\tau(h,j)=\frac{1}{2}\int_{\pi_1^-}^{\pi_1^+}\frac{j+\pi_1}{\sqrt{Q(\pi_1)}}d\pi_1
\ ,
\end{equation}
where $Q(\pi_1)$ is a polynomial given by
\begin{equation} \label{fhm6}
Q(\pi_1)=(\pi_1-j)(\pi_1+j)^2-[h-\epsilon(\pi_1^2-j^2)]^2 \ .
\end{equation}
$\pi_1^+$ and $\pi_1^-$ are the two largest real roots of $Q$ with
$\pi_1^-<\pi_1^+$.
\end{lemma}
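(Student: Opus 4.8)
The plan is to push the entire computation down to the reduced phase space $P_j$, where the regular fibre $F^{-1}(h,j)$ collapses to the single curve $\{H_j=h\}$, and to reconstruct the $S^1$-angle afterwards. First I would rewrite $H$ in terms of the invariants (\ref{eq2}). Noting that the cubic term of $H$ coincides with $\pi_3$, and that $(q_1^2+p_1^2)(q_2^2+p_2^2)=(\pi_1+J)(\pi_1-J)/2=(\pi_1^2-J^2)/2$, one gets $H=\pi_3+\epsilon(\pi_1^2-J^2)$, so that the reduced Hamiltonian is $H_j=\pi_3+\epsilon(\pi_1^2-j^2)$. Substituting $\pi_3=h-\epsilon(\pi_1^2-j^2)$ into the defining relation (\ref{eq4}) of $P_j$ immediately produces $\pi_2^2=Q(\pi_1)$ with $Q$ as in (\ref{fhm6}); this identifies the level curve and lets me parametrise it by $\pi_1\in[\pi_1^-,\pi_1^+]$ with the two branches $\pi_2=\pm\sqrt{Q}$.

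Second, I would extract the reduced dynamics from the Poisson algebra of the invariants. A direct computation of the standard brackets gives $\{\pi_1,\pi_2\}=-4\pi_3$ and $\{\pi_1,\pi_3\}=4\pi_2$, while $\{\pi_1,J\}=0$. Since $\{\pi_1,\pi_1^2-J^2\}=0$, the reduced flow obeys $\dot\pi_1=\{\pi_1,H\}=4\pi_2=\pm4\sqrt{Q}$, whence $dt=d\pi_1/(4\pi_2)$. Integrating over the closed curve (both branches add) gives the honest first-return time $T=\tfrac12\int_{\pi_1^-}^{\pi_1^+}d\pi_1/\sqrt{Q}$. On the line $C$, i.e. $h=0$ and $j<0$, the factor $(\pi_1+j)^2$ in $Q$ creates a double root at $\pi_1=-j$ and $T$ diverges logarithmically; following the time rescaling of Ref. \cite{frac4}, which replaces $dt$ by $d\tau=(\pi_1+j)\,dt$ and thereby cancels this divergence, one lands on the finite expression (\ref{fhm5}).

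Third, and this is the crux, I would reconstruct the rotation angle $\Theta$. On the regular part $q_1^2+p_1^2=\pi_1+j>0$ the $S^1$-action of $J$ rotates $q_1+ip_1$ at unit rate, so $\alpha=\arg(q_1+ip_1)$ is, up to sign, the angle conjugate to $J$, and the closure condition $\phi_H^T(z)=\phi_J^{\Theta}(z)$ yields $\Theta=-\oint\dot\alpha\,dt$. Writing $\dot\alpha=(q_1\dot p_1-p_1\dot q_1)/(q_1^2+p_1^2)=-E_1H/(\pi_1+j)$ with the Euler field $E_1=q_1\partial_{q_1}+p_1\partial_{p_1}$, the decisive algebraic fact is that $\pi_3$ is homogeneous of degree two in $(q_1,p_1)$ whereas $E_1\pi_1=E_1J=q_1^2+p_1^2=\pi_1+j$; this forces $E_1H=2\pi_3+2\epsilon(\pi_1^2-j^2)$, and substituting $\pi_3=h-\epsilon(\pi_1^2-j^2)$ collapses it to the constant $E_1H=2h$ along the level set. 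Hence $\dot\alpha=-2h/(\pi_1+j)$ and $\Theta=2h\oint dt/(\pi_1+j)=h\int_{\pi_1^-}^{\pi_1^+}d\pi_1/[(\pi_1+j)\sqrt{Q}]$, which is (\ref{fhm4}).

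The hard part will be this last step. One must correctly identify $\Theta$ with the total variation of the fibre angle $\alpha$ over one reduced period, so that the inevitable $2\pi$-ambiguity of the conjugate angle is properly accounted for, and, above all, obtain the miraculous simplification $E_1H=2h$, which is precisely what produces both the overall factor $h$ and the pole $1/(\pi_1+j)$ responsible for the discontinuity of $\Theta$ across $C$. By comparison the Poisson-bracket computations of the second step are routine, but they must be carried out carefully to pin down the numerical constants $4$ and $\tfrac12$ that make the two integrals agree with (\ref{fhm4})--(\ref{fhm5}) exactly.
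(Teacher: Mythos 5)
Your proof is correct and takes essentially the same route as the paper, whose own proof of lemma \ref{lemma1} merely cites Ref. \cite{frac4} and defers to the general construction of lemma \ref{lemmatheta} in Secs. \ref{res1n}--\ref{mnres}: there, too, one writes $H=\pi_3+\varepsilon(\pi_1^2-J^2)$ in the invariants, computes $\dot\theta$ from Hamilton's equations (your identity $E_1H=2h$ is exactly the paper's $nh-nR+\frac{\partial R}{\partial p_1}p_1+\frac{\partial R}{\partial q_1}q_1$ specialized to $n=2$, $R=\varepsilon(\pi_1^2-J^2)$, for which the $R$-terms cancel and hence $\Theta_0\equiv 0$), converts time integrals over the reduced orbit via $\dot\pi_1=2n\pi_2=\pm 2n\sqrt{Q}$, and rescales by $(q_1^2+p_1^2)^{n-1}$ to obtain $\tau$. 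The only deviations are inessential conventions (your angle $\arg(q_1+ip_1)$ with the compensating sign versus the paper's $\theta=\arg(p_1+iq_1)$), and you explicitly carry out the $1{:}-2$ computation that the paper leaves to the citation.
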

\begin{proof}
see \cite{frac4} and Sec. (\ref{mnres}) for a more general
construction. We recall that $\tau$ is the first return time of
the flow of the rescaled vector field
$\frac{1}{q_1^2+p_1^2}X_H$.\qed
\end{proof}
\begin{proposition}
The monodromy matrix for a counterclockwise oriented loop $\Gamma$
crossing
 once the line $C$ transversally at a point different from the
origin (see Fig. \ref{fig1}) is given by
\begin{eqnarray} \label{fhm7a}
M= \left( \begin{array}{cc}
1 & 0 \\
-1/2 & 1
\end{array} \right) \ .
\end{eqnarray}
\end{proposition}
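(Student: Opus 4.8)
The plan is to extract the monodromy from the period lattice exactly as in the construction of Subsection~\ref{fhm}. Along $\Gamma$ the lattice is spanned by $v_1=(2\pi,0)$, the cycle of the $S^1$-action generated by $J$, and by $v_2=(-\tilde\Theta,\tau)$. As recalled just before Lemma~\ref{lemma1}, the monodromy is then fixed uniquely by two data: the size of the discontinuity of $\Theta$ as $\Gamma$ crosses $C$ once, and the fact that $\tau$ is continuous there. Since the $S^1$-orbit survives the crossing, $v_1$ is transported into itself; hence everything reduces to reading off $\Delta\tau$ and the jump of $\Theta$ from the integral formulas \eqref{fhm4} and \eqref{fhm5} of Lemma~\ref{lemma1} in the limit $h\to0$.

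First I would dispose of $\tau$. In \eqref{fhm5} the numerator $(j+\pi_1)$ cancels the pole at $\pi_1=-j$ that is responsible for the divergence of the unrescaled first return time, so the integrand stays bounded near $\pi_1=-j$; combined with the continuous dependence of the turning points $\pi_1^\pm$ (the two largest roots of $Q$) on $(h,j)$ through the crossing, this shows that $\tau$ extends continuously across $C$ with a finite limit, i.e. $\Delta\tau=0$. This is what forces the diagonal entries of $M$ to equal $1$ and sends $v_2$ to $v_2$ plus a (half-)integer multiple of $v_1$.

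The hard part will be the size of the jump of $\Theta$. From \eqref{fhm6}, at $h=0$ the polynomial $Q$ has the double root $\pi_1=-j$ produced by the factor $(\pi_1+j)^2$; for $h\neq0$ this double root splits into two simple roots at distance $O(h)$ straddling the pole $\pi_1=-j$ of the integrand in \eqref{fhm4}, and one of the turning points collides with that pole as $h\to0$. The idea is to localise $\Theta$ there: writing $\pi_1+j=hs$ and keeping the leading quadratic $Q\simeq(-2j)(\pi_1+j)^2-4j\epsilon\,h(\pi_1+j)-h^2$, the prefactor $h$, the Jacobian and the pole combine so that $\Theta$ reduces to an elementary integral of the form $\int \mathrm{d}s\,[\,s\sqrt{(-2j)s^2-4j\epsilon s-1}\,]^{-1}$, an arcsine whose value at the colliding turning point is a half-period; the overall factor $h$ makes the two sides of $C$ differ in sign, so that $\Theta\to\pm\pi/2$ and the jump is $\pi$. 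The genuine obstacle here is rigour rather than computation: I must identify which root of $Q$ is the physical turning point on each side of $C$, so as to get the correct sign of the half-period, and prove that the contribution of the remaining part of the integration range stays continuous, so that the whole discontinuity is produced by this endpoint--pole collision.

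Finally I would assemble the matrix. With $\Delta\tau=0$ and a jump of $\pi$ in $\Theta$, the continuous extension $\tilde\Theta$ of Subsection~\ref{fhm} increases by $\pi$ around $\Gamma$, whence $v_2=(-\tilde\Theta,\tau)\mapsto(-\tilde\Theta-\pi,\tau)=v_2-\tfrac12 v_1$ while $v_1\mapsto v_1$. Equivalently the doubled cycle satisfies $2v_2\mapsto 2v_2-v_1$ with integer coefficients, so the transport is genuinely defined on the sublattice generated by $v_1$ and $2v_2$, with integer monodromy $\left(\begin{smallmatrix}1&0\\-1&1\end{smallmatrix}\right)$; its formal extension to the full lattice $\langle v_1,v_2\rangle$ gives the matrix $M$ of \eqref{fhm7a}, which is the assertion.
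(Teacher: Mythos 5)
Your proposal is correct in its overall architecture, and that architecture coincides with the paper's: reduce everything to the two analytic facts that $\tau$ from Eq.~(\ref{fhm5}) is continuous across $C$ while $\Theta$ from Eq.~(\ref{fhm4}) jumps by $\pi$, then feed these into the period-lattice/sublattice mechanism of Sec.~\ref{fhm} to get $v_1\mapsto v_1$, $2v_2\mapsto 2v_2-v_1$, and hence the formal matrix $M$. Where you genuinely diverge from the paper is in the asymptotic regime for the jump computation. The paper works in a \emph{double} limit $h,j\to 0$ with $h=o(j^{3/2})$, organized by the Newton polyhedron of $Q$ and the quasihomogeneous weights $(2,2,3)$ for $(x,j,h)$ of Eq.~(\ref{fhm7}), which collapses $\Theta$ to $\int_1^{\infty}\frac{du}{u\sqrt{u^2-1}}=\pi/2$; you instead fix $j<0$ and blow up at the endpoint--pole collision via $\pi_1+j=hs$. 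Your route is in effect a localized, elementary version of the fixed-$j$ elliptic-integral computation of Ref.~\cite{frac4}, and it buys something the paper's local computation explicitly forgoes (see the Remark following the proposition): the jump comes out independent of $j$, so the topological character of the fractional monodromy is visible directly, without appealing to \cite{frac4} or to the complex approach of Sec.~\ref{complex}.

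Two details in your scaling computation need repair, though neither damages the conclusion. First, your leading quadratic omits a term of the same order: squaring $\epsilon(\pi_1^2-j^2)$ in Eq.~(\ref{fhm6}) contributes $-4\epsilon^2j^2(\pi_1+j)^2$ at the scale $\pi_1+j=O(h)$, so the correct local model is
\begin{equation*}
Q\simeq(-2j)(1+2\epsilon^2 j)(\pi_1+j)^2-4j\epsilon\,h\,(\pi_1+j)-h^2\ ,
\end{equation*}
whose discriminant is exactly $-8j$ and whose roots reproduce $x_2,x_3$ of Eqs.~(\ref{fhm10a}). Second, your claim that the one-sided limits are exactly $\pm\pi/2$ by an $h\mapsto-h$ sign flip is inconsistent with the presence of the linear ($\epsilon$-dependent) term in your own quadratic: evaluating the arcsine primitive at both ends of the matched inner integral gives
\begin{equation*}
\lim_{h\to 0^{\pm}}\Theta(h,j)=\pm\frac{\pi}{2}+\arcsin\bigl(\epsilon\sqrt{-2j}\bigr)\ ,
\end{equation*}
i.e.\ the turning point contributes the half-period $\pm\pi/2$ but the $s\to\infty$ end contributes an offset that is \emph{even} in $h$, so the limits are not antisymmetric. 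Since the offset is identical on both sides of $C$, it cancels in the difference and the jump is exactly $\pi$ for every fixed $j<0$ (and reduces to the paper's $\pm\pi/2$ only as $j\to0$, matching Eqs.~(\ref{fhm16})--(\ref{fhm17})). With that correction, and a routine dominated-convergence argument for the continuity of $\tau$ (the paper instead computes its limit $-\pi/(2\varepsilon)$ explicitly), your proof goes through.
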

\begin{proof}
The monodromy matrix is given by the behavior of $\Theta$ and
$\tau$ in the neighborhood of the line $C$. Taking $j<0$ fixed and
finite, the limits $\lim_{h\to 0^\pm}\Theta(h,j)$ and $\lim_{h\to
0^\pm}\tau(h,j)$ have been calculated in \cite{frac4}. Elliptic
integrals and asymptotic expansions of these integrals were used.

We propose a simpler computation by considering the asymptotic
limit $j\to 0$. For that purpose, we analyze the roots of the
polynomial $Q$ as $h$ and $j$ go to zero. Since a qualitative
change of the functions $\Theta$ and $\tau$ is expected when the
polynomial $Q$ has a multiple complex root, we determine the
complex discriminant locus of $Q$ near the origin $h=j=0$. This
point will be made clearer with the introduction of Riemann
surfaces in Sec. \ref{complex} but here it gives the way in which
the two limits $h\to 0$ and $j\to 0$ should be taken. The
discriminant locus in the real approach has already been
calculated since it corresponds to the line of singularities of
the bifurcation diagram, i.e., to the points where the 1-forms
$dH$ and $dJ$ are linearly dependent (see Fig. \ref{fig1}). We
introduce the variable $x=j+\pi_1$ and $h$, $j$ and $x$ are taken
complex. Three roots of $Q$ vanish for $h=j=0$.

Let us assume that $x$, $j$ and $h$ go to zero. Constructing the
Newton polyhedron associated to $Q$ \cite{kirwan}, we obtain the
principal part $Q_N$ of $Q$ which can be written as
\begin{equation}\label{fhm7}
Q_N=x^3-2jx^2-h^2 \ .
\end{equation}
We notice that this principal part is symmetric with respect to
$h$ which is not the case for the polynomial $Q$. Simple algebra
leads to the following asymptotic discriminant locus
\begin{eqnarray}\label{fhm8}
\left\{ \begin{array}{ll}
h=\pm\sqrt{j^3(-\frac{32}{27})} \\
h=0
\end{array} \right. \ ,
\end{eqnarray}
denoted $\Delta$. This complex locus is displayed for $j\in\R$ in
Fig. \ref{fig6}. Equation (\ref{fhm7}) also shows that the weights
2, 2 and 3 can respectively be attributed to $x$, $j$ and $h$. In
other words, if we introduce the small parameter $r$, the
principal parts of $x$, $j$ and $h$ can be written
\begin{eqnarray}\label{fhm8a}
\left\{ \begin{array}{lll}
x_N=\bar{x}r^2 \\
j_N=\bar{j}r^2 \\
h_N=\bar{h}r^3
\end{array} \right. \ .
\end{eqnarray}
The monodromy can be computed along a small loop in $\mathcal{R}$
around the origin. This loop can be parameterized by
$(\bar{h},\bar{j})\in S^1$ which allows to derive a local version
of the computation of fractional monodromy. We thus let
$\bar{h}\to 0$ while keeping $\bar{j}$ fixed and finite. Note that
it is equivalent to consider the limits $h,j\to 0$ with the
condition $h=o(j^{3/2})$, since on $\Delta$ we have asymptotically
$h=O(j^{3/2})$. In order to determine the leading terms of the
roots of $Q$ in this case, we construct the Newton polygon of
$Q_N$. The approximate solutions fulfill
\begin{equation}\label{fhm9}
-2\bar{j}\bar{x}^2-\bar{h}^2=0 \ .
\end{equation}
Since the sum of the three roots of $Q$ that go to zero when
$h,j\to 0$ is $2j$, and the sum of the four roots is
$\frac{1}{\varepsilon^2}$, one deduces that the principal parts of
the roots are given in the original variables by
\begin{eqnarray}\label{fhm10}
\left\{ \begin{array}{llll}
x_1=2j \\
x_2=\frac{-h}{\sqrt{-2j}} \\
x_3=\frac{h}{\sqrt{-2j}} \\
x_4=\frac{1}{\varepsilon^2}
\end{array} \right. \ .
\end{eqnarray}
These expressions can be compared with the ones given in Ref.
\cite{frac4} where $j<0$ is fixed and $h\to 0$ :
\begin{eqnarray}\label{fhm10a}
\left\{ \begin{array}{llll}
x_1=2j \\
x_2=\frac{-h}{\sqrt{-2j}+2\varepsilon j} \\
x_3=\frac{h}{\sqrt{-2j}-2\varepsilon j} \\
x_4=2j+\frac{1}{\varepsilon^2}+\frac{2\varepsilon
h}{1+2\varepsilon^2j}
\end{array} \right. \ .
\end{eqnarray}
We now calculate $\lim_{\bar{h}\to 0^\pm,\bar{j}<0}\Theta(h,j)$
and $\lim_{\bar{h}\to 0,\bar{j}<0}\tau(h,j)$. We consider first
$h>0$ and $j<0$. $\Theta(h,j)$ can be written as
\begin{equation} \label{fhm11}
\Theta(h,j)=\frac{h}{i\varepsilon}\int_{x_3}^{x_4}
\frac{dx}{x\sqrt{(x-x_1)(x-x_2)(x-x_3)(x-x_4)}} \ .
\end{equation}
We determine only the principal term of the asymptotic expansion
of $\Theta$. The symbol $\sim$ represents the equivalence in the
limit $h\to 0$, $j\to 0$ and $h=o(j^{3/2})$. From Eqs.
(\ref{fhm10}), we obtain
\begin{equation} \label{fhm12}
\Theta(h,j)\sim\frac{h}{i\varepsilon}\int_{h/\sqrt{-2j}}^{1/\varepsilon^2}
\frac{dx}{x\sqrt{(x-2j)(x+h/\sqrt{-2j})(x-h/\sqrt{-2j})(x-1/\varepsilon^2)}}
\ .
\end{equation}
We decompose the preceding integral into three integrals by
introducing the terms $k$ and $k'$ which go to 0 such that
$\frac{|h|}{\sqrt{-2j}}\ll k\ll -2j \ll k'\ll 1$. $k$ and $k'$ are
chosen for instance as $j^\alpha$. The notation $a\ll b$ means
that the ratio $a/b\to 0$ as $a$ and $b$ go to 0. The three
integrals are taken over the intervals $[|h|/\sqrt{-2j},k]$,
$[k,k']$ and $[k',1/\varepsilon^2]$. It can be shown that the
limit of the last two integrals is zero. The first integral reads
\begin{equation} \label{fhm13}
\Theta(h,j)\sim\frac{h}{i\varepsilon}\int_{h/\sqrt{-2j}}^{k}
\frac{dx}{x\sqrt{(-2j)(x^2-h^2/(-2j))(-1/\varepsilon^2)}} \ ,
\end{equation}
which can be rewritten as
\begin{equation} \label{fhm14}
\Theta(h,j)\sim h\int_{h/\sqrt{-2j}}^{k}
\frac{dx}{x\sqrt{(-2j)(x^2-h^2/(-2j))}} \ .
\end{equation}
The change of variables $x=\frac{h}{\sqrt{-2j}}u$ leads to
\begin{equation} \label{fhm15}
\Theta(h,j)\sim \int_{1}^{+\infty} \frac{du}{u\sqrt{u^2-1}} \ .
\end{equation}
Using the fact that
\begin{equation} \label{fhm15}
\int\frac{du}{u\sqrt{u^2-1}}=\arctan[\sqrt{u^2-1}] \ ,
\end{equation}
one finally arrives to
\begin{equation} \label{fhm16}
\lim_{h\to 0^+, j\to 0^-, h=o(j^{3/2})}\Theta(h,j)=\frac{\pi}{2} \
.
\end{equation}
Similar calculations for $h<0$ give
\begin{equation} \label{fhm17}
\lim_{h\to 0^-, j\to 0^-, h=o(j^{3/2})}\Theta(h,j)=\frac{-\pi}{2}
\ .
\end{equation}
We then deduce that the discontinuity of $\Theta$ is equal to
$\pi$.

$\tau$ can be calculated along the same lines. This time the first
two terms go to zero and we only determine the last one
\begin{equation} \label{fhm18}
\tau(h,j)\sim
\frac{1}{2i\varepsilon}\int_{k'}^{1/\varepsilon^2}\frac{xdx}{\sqrt{
(x-x_1)(x-x_2)(x-x_3)(x-x_4)}} \ .
\end{equation}
Simple algebra leads to
\begin{equation} \label{fhm19}
\tau(h,j)\sim
\frac{1}{2i\varepsilon}\int_{k'}^{1/\varepsilon^2}\frac{dx}{\sqrt{
x(x-1/\varepsilon^2)}} \ ,
\end{equation}
and using the fact that
\begin{equation} \label{fhm20}
\int \frac{dx}{\sqrt{x(x-1/\varepsilon^2)}}=2\ln
[\sqrt{x}+\sqrt{x-1/\varepsilon^2}] \ ,
\end{equation}
one obtains that
\begin{equation} \label{fhm17}
\lim_{h\to 0^\pm, j\to 0^-,
h=o(j^{3/2})}\tau(h,j)=\frac{-\pi}{2\varepsilon} \ .
\end{equation}
$\tau$ is therefore continuous on the line $C$. The monodromy
matrix is finally deduced from the behavior of $\Theta$ and $\tau$
in the neighborhood of $C$. We follow for that purpose the construction
of Ref. \cite{frac4} which is briefly recalled in Sec. \ref{fhm}.\qed
\end{proof}
\begin{remark}
The preceding computation being local does not show the
topological character of fractional monodromy i.e. its
independence with respect to the homotopically equivalent
 loops considered or more simply with respect to $j$. This point has been proved
 in the real approach in Ref. \cite{frac4} and will be proved in the complex approach in Sec.
 \ref{complex}.
\end{remark}
\subsection{$1:-n$ resonant system}\label{res1n}
We consider the energy-momentum map $F=(H,J)$ where $J$ is given
by
\begin{equation} \label{res1n1}
J=\frac{1}{2}[(p_1^2+q_1^2)-n(q_2^2+p_2^2)] \ ,
\end{equation}
with $n\geq 2$. We assume that the bifurcation diagram of $F$ is
locally in the neighborhood of the origin given by Fig.
\ref{fig2}. The singular locus corresponds to ($h=0,j\leq 0$).
Each point of this locus lifts to a singular torus i.e. a
pinched-curled torus for the origin and an $n-$curled torus for
the other points. A $k$-curled torus is a singular torus for which
one cycle is covered k-times while the others only once.
\begin{figure}
\begin{center}
\includegraphics[scale=0.4]{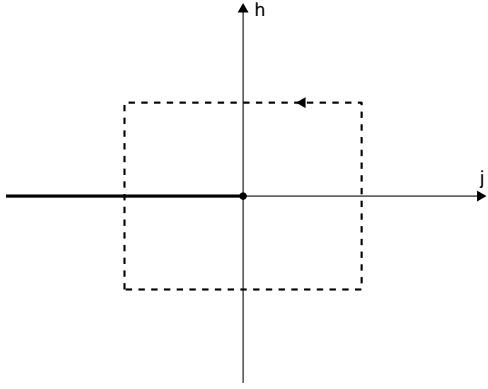}
\caption{\label{fig2} Local bifurcation diagram in the
neighborhood of the origin $(h=0,j=0)$ for the $1:-n$ resonant
system. The singular locus is represented by the large solid line.
The small full dot indicates the position of the origin. The
dashed line depicts a loop used to calculate the fractional
monodromy matrix.}
\end{center}
\end{figure}
We now restrict the discussion to a particular family of
energy-momentum maps having locally the bifurcation diagram of
Fig. \ref{fig2}. As in Sec. \ref{fhm}, the first step consists in
determining the invariant polynomials associated to the momentum
$J$. We have \cite{frac3}
\begin{eqnarray} \label{res1n3}
\left\{ \begin{array}{llll}
J({\bf p,q})=1/2[m\left(q_{1}^2+p_{1}^2\right)-n\left(q_2^2+p_2^2\right)]\\
\pi_1({\bf p,q})=1/2[m\left(q_{1}^2+p_{1}^2\right)+n\left(q_2^2+p_2^2\right)]\\
\pi_2({\bf p,q})=\sqrt{n^mm^n}\Re[(q_1+ip_1)^n(q_2+ip_2)^m]\\
\pi_3({\bf
p,q})=\sqrt{n^mm^n}\Im[(q_1+ip_1)^n(q_2+ip_2)^m]\end{array}\right.
 \ ,
\end{eqnarray}
for $m\geq 1$ and $n\geq 1$. The reduced phase space $P_j$ is
defined by
\begin{equation} \label{res1n3a}
\pi_2^2+\pi_3^2=(\pi_1+j)^n(\pi_1-j)^m \ ,
\end{equation}
with the condition $\pi_1\geq|j|$.
\begin{definition}\label{def1}
We consider the set $\mathcal{F}$ of energy-momentum maps
$F=(J,H)$ which can be written as
\begin{eqnarray} \label{res1n4}
F=\left\{ \begin{array}{ll}
J\\
H=\pi_3+R(\pi_1,J) \\
\end{array}\right.
\ ,
\end{eqnarray}
where $R$ is a polynomial so that $F$ is a proper map. $\pi_1$,
$\pi_3$ and $J$ are given by Eqs. (\ref{res1n3}) for $m=1$.
\end{definition}
Note that we do not search to determine or characterize the set
$\mathcal{F}$. Only some properties of the elements of
$\mathcal{F}$ will be sufficient to compute the monodromy matrix.
Simple examples can be exhibited to show that $\mathcal{F}$ is not
empty. For instance for the resonance 1:-3, we can choose
\begin{eqnarray} \label{res1n5}
F=\left\{ \begin{array}{ll}
J\\
H=\pi_3-(\pi_1-J)(\pi_1+J)^4 \\
\end{array}\right.
\ .
\end{eqnarray}
We are interested in the local behavior of $R$ near the origin or
in other words under which conditions on $R$, the image of the
corresponding energy-momentum map is given by Fig. \ref{fig2}.
\begin{lemma} \label{lemmar}
The energy-momentum map given by Eqs. (\ref{res1n4}) has locally
the bifurcation diagram of Fig. \ref{fig2} in a neighborhood of
the origin if the polynomial $R$ is of the form
\begin{equation} \label{res1n6}
R(\pi_1,J)=(\pi_1+J)^{n'}(\pi_1-J)^{m'}\tilde{R}(\pi_1,J) \ ,
\end{equation}
where $n'$ and $m'$ are positive integers such that
$n'>\frac{n}{2}$ and $n'+m'>\frac{n+1}{2}$. $\tilde{R}$ is a
polynomial in $\pi_1$ and $J$ such that the two smallest real
positive roots of the polynomial $Q$ which are larger than $j$ for
$(h,j)\in\mathcal{R}_{reg}$ are simple roots . $Q$ is the
polynomial defined by
\begin{equation} \label{res1n10}
Q=(\pi_1+j)^n(\pi_1-j)-[h-R(\pi_1,j)]^2 \ .
\end{equation}
The two roots are denoted $\pi_1^-$ and $\pi_1^+$ with
$\pi_1^-<\pi_1^+$.
\end{lemma}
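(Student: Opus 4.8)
The plan is to reduce the statement, which concerns the topology of the fibers of $F$, to an analysis of the roots of the single polynomial $Q$ of Eq.~(\ref{res1n10}). On the reduced phase space $P_j$ of Eq.~(\ref{res1n3a}) (with $m=1$), the equation $H_j=h$ amounts to $\pi_3=h-R(\pi_1,j)$, so the reduced fiber is the planar curve $\{\pi_2^2=Q(\pi_1),\ \pi_1\geq|j|\}$ and the full fiber $F^{-1}(h,j)$ is the $S^1$-bundle over it generated by the flow of $J$. First I would locate the critical values. These occur either when the curve reaches the vertex $\pi_1=|j|$ of $P_j$, or when two roots of $Q$ collide in the interior $\pi_1>|j|$, the latter corresponding to the boundary of $\mathcal{R}$. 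Because the factor $(\pi_1+J)^{n'}$ forces $R(-j,j)=0$, one gets $Q(|j|)=-h^2$; hence the curve reaches the vertex exactly on the candidate line $C=\{h=0,\ j\leq 0\}$, while for $h\neq 0$ one has $Q(|j|)=-h^2<0$ and the curve avoids it.

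The core of the proof is a weighted-homogeneity (Newton polyhedron) computation, in the spirit of the one carried out for the $1:-2$ case around Eq.~(\ref{fhm7}), showing that the two hypotheses on $(n',m')$ make $R$ a higher-order perturbation, so that $Q$ shares the principal part of its $R=0$ model $(\pi_1+j)^n(\pi_1-j)-h^2$. Writing $u=\pi_1+j$ and assigning weights $w(u)=w(j)=2$ and $w(h)=n+1$, the monomials $u^{n+1}$, $ju^n$ and $h^2$ are all balanced, whereas in $[h-R]^2$ the terms $hR$ and $R^2$ carry weights $(n+1)+2(n'+m')$ and $4(n'+m')$; both strictly exceed $2(n+1)$ precisely when $n'+m'>\frac{n+1}{2}$, which is why $R$ cannot perturb the principal part at the origin. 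Keeping instead $j<0$ fixed and letting $u\to 0$ near the vertex, $(\pi_1-j)\to -2j>0$ is a nonzero constant and $R\sim u^{n'}$, so $R^2\sim u^{2n'}$ is negligible against $(-2j)u^n$ exactly when $n'>\frac n2$; under this condition $Q$ has, at $h=0$, a root of multiplicity exactly $n$ at $\pi_1=-j$. Finally, the simple-roots hypothesis on $\tilde R$ guarantees that for $(h,j)\in\mathcal{R}_{reg}$ the turning points $\pi_1^-<\pi_1^+$ are simple, so the reduced curve is a smooth oval and the fiber a smooth $2$-torus.

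It then remains to convert these algebraic facts into the topological picture of Fig.~\ref{fig2}. Here I would use that the vertex $\pi_1=-j$ is exactly the locus $q_1=p_1=0$, on which the $S^1$-action of $J$ has isotropy $\Z/n\Z$: when the level curve runs into this vertex, i.e.\ on $C$, the orbits over the vertex are $n$ times shorter, and the order-$n$ contact of $\pi_2^2=Q(\pi_1)$ with $\pi_1=-j$ is what turns the fiber into an $n$-curled torus for $j<0$ and, at the most degenerate point $(0,0)$ where $u$ and $j$ vanish together, into a pinched-curled torus. For $h\neq 0$ the fiber is a smooth torus by the previous paragraph, so no singular values arise off $h=0$; and for $j>0$ the point $\pi_1=|j|=j$ is a smooth (paraboloidal) point of $P_j$ at which $dH_j\neq 0$, so $\{h=0,\ j>0\}$ is regular. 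This pins the singular locus to $C=\{h=0,\ j\leq 0\}$ and reproduces Fig.~\ref{fig2}.

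The hard part will be this last step: rigorously upgrading the root-multiplicity computation to the claimed \emph{topological} type of the singular fibers, which requires controlling how the $S^1$-bundle degenerates over the non-free orbits at the vertex rather than merely counting real roots. A secondary difficulty is to check that the weighted-homogeneous model dominates the \emph{entire} discriminant locus in a full neighborhood of the origin---so that the perturbation $R$ introduces no spurious branch of critical values entering the region---which is exactly the point at which both inequalities $n'>\frac n2$ and $n'+m'>\frac{n+1}{2}$, and not just one of them, are needed.
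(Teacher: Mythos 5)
Your proposal is correct and takes essentially the same route as the paper: both arguments reduce the lemma to the intersection of the reduced phase space $P_j$ with the level sets $\{H_j=h\}$, extracting $n'>\frac{n}{2}$ from the comparison at the singular vertex for fixed $j<0$ and $n'+m'>\frac{n+1}{2}$ from the degeneration as $j\to 0$, with your Newton-polyhedron weighting ($w(x)=w(j)=2$, $w(h)=n+1$, principal part $x^{n+1}-2jx^n-h^2$) being an equivalent repackaging of the paper's direct comparison of the cusp profile $y=\pm x^{n/2}x'^{1/2}$ of $P_j$ with the graph $y=-R(\pi_1,j)$ in the plane $\pi_2=0$. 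The step you flag as hard, upgrading root multiplicities and the $\Z/n\Z$ isotropy at the vertex to the stated topological type of the singular fibers, is precisely the point the paper also treats informally (via Fig.~\ref{fig3} and the fact that the vertex lifts to a circle covered $n$ times), so your write-up is at least as complete as the original proof.
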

\begin{proof}
The proof is based on the nature of the intersection of the
reduced phase space $P_j$ with the level sets $\{H_j=h\}$ of
equations $h=\pi_3+R(\pi_1,j)$ as $h$ and $j$ vary. Fig.
\ref{fig3} displays these intersections for three different values
of $h$, $j<0$ being fixed. The case considered in this figure is
the 1:-3 resonance and the energy-momentum map of Eqs.
(\ref{res1n5}). Fig. \ref{fig3} represents the generic topology of
the level sets $\{H_j=h\}$ which we are going to characterize. We
recall that each point of the reduced phase space lifts in the
original phase space to a circle except for the point of
coordinates $(\pi_1=-j,\pi_2=0,\pi_3=0)$ which lifts either to a
circle covered $n$ times, if $j<0$, or to a point, if $j=0$. One
deduces from the bifurcation diagram of $F$ that the intersection
of the level set $\{H_j=h\}$ with $P_j$ contains a circle passing
through the singular point $S=(\pi_1=-j,\pi_2=0,\pi_3=0)$, for
$j\leq0$ and $h=0$. We set $y=\pi_3$, $x=\pi_1+j$ and
$x'=\pi_1-j$. The local behavior near the point $S$ of the two
surfaces is given by
\begin{equation} \label{res1n7}
y=\pm x^{\frac{n}{2}}x'^{\frac{1}{2}} \ ,
\end{equation}
for $P_j$ and by
\begin{equation} \label{res1n8}
y=-R(\pi_1,j) \ ,
\end{equation}
for $\{H_j=0\}$. It is then straightforward to show that the local
behavior expected is obtained if
$R(\pi_1,j)=x^{n'}x'^{m'}\tilde{R}(\pi_1,j)$ with the conditions
$n'>\frac{n}{2}$ and $n'+m'>\frac{n+1}{2}$, $m'\geq 0$. The first
and second inequalities result respectively from the conditions
for $j<0$ and $j=0$.\qed
\end{proof}
\begin{remark}
The inequalities of lemma \ref{lemmar} are strict to ensure that a
multiplication by a constant factor of the term $R$ does not
modify the local behavior of the image of the energy-momentum map.
This point has not been assumed for the 1:-2 resonance, the
parameter $\varepsilon$ is thus chosen sufficiently small in this
case. Here, this hypothesis simplifies the computation of the
monodromy matrix as can be seen in the proofs of lemma \ref{root1}
and proposition \ref{prop1n}.
\end{remark}
\begin{figure}
\begin{center}
\includegraphics[scale=0.4]{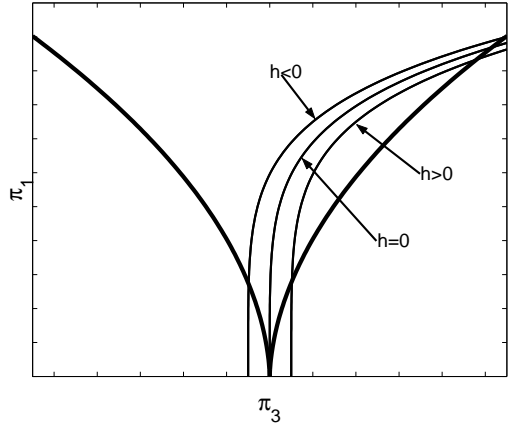}
\caption{\label{fig3} Intersections of the reduced phase space
$P_j$ (in large solid lines) with the level sets $H_j=h$ for
$h>0$, $h=0$ and $h<0$ in the plane $\pi_2=0$. The energy-momentum
map corresponding to this diagram is given by Eqs.
(\ref{res1n5}).}
\end{center}
\end{figure}
\begin{proposition} \label{prop1n} The monodromy matrix of a 1:-n resonant
system of the form (\ref{res1n4}) is given for a loop turning
counterclockwise around the origin (see Fig. \ref{fig2}) and
crossing once transversally the line $C$ by
\begin{eqnarray} \label{res1n2}
M= \left( \begin{array}{cc}
1 & 0 \\
-1/n & 1
\end{array} \right) \ .
\end{eqnarray}
\end{proposition}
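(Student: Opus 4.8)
The plan is to reproduce, for general $n$, the local computation performed for the $1:-2$ resonance, the only structural novelty being that the two branch points which collided there are now replaced by a regular $n$-gon of branch points. According to the construction recalled in Sec.~\ref{fhm}, the monodromy matrix is fixed by exactly two pieces of data: the size of the jump of the rotation angle $\Theta$ across $C$ and the continuity of the (suitably rescaled) first return time $\tau$. The whole proof therefore reduces to showing
\[
\lim_{h\to 0^+}\Theta=\frac{\pi}{n},\qquad \lim_{h\to 0^-}\Theta=-\frac{\pi}{n},\qquad \lim_{h\to 0^\pm}\tau \ \text{finite and equal},
\]
the limits being taken along $j\to 0^-$ with $h=o(|j|^{(n+1)/2})$. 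For $\Theta$ and $\tau$ I would use the integral representations provided by the general construction of Sec.~\ref{mnres} at $m=1$: integrals over the real cycle $[\pi_1^-,\pi_1^+]$ of $1$-forms built from $\sqrt{Q}$, with $Q$ as in Eq.~(\ref{res1n10}); in particular $\Theta$ carries the factor $1/(j+\pi_1)=1/(q_1^2+p_1^2)$ already present in the analogue Eq.~(\ref{fhm4}) of Lemma~\ref{lemma1}.

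First I would substitute $x=\pi_1+j$ and let $x,j,h\to 0$. As in Eq.~(\ref{fhm7}), the Newton polyhedron of $Q$ isolates the principal part
\[
Q_N=x^n(x-2j)-h^2=x^{n+1}-2jx^n-h^2 .
\]
Here the hypotheses $n'>n/2$ and $n'+m'>(n+1)/2$ of Lemma~\ref{lemmar}, made strict in the Remark, are precisely what forces every monomial of $R$ to enter $[h-R]^2$ at strictly higher weight than both $h^2$ and $x^n(x-2j)$, so that $R$ disappears from $Q_N$. Passing to the sublimit $h=o(|j|^{(n+1)/2})$ (the analogue of $h=o(j^{3/2})$), the dominant balance near the origin becomes $-2jx^n=h^2$, whose roots are the $n$ branch points
\[
x_k=\rho\,\omega^{k},\qquad \rho=\Big(\frac{h^2}{-2j}\Big)^{1/n},\quad \omega=e^{2i\pi/n},\quad k=0,\dots,n-1,
\]
sitting on a circle of radius $\rho$ at mutual angle $2\pi/n$, to which one adds one root near $2j$ and the far roots. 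This regular $n$-gon is the configuration announced in the Introduction and is the geometric core of the statement.

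Next I would evaluate the jump of $\Theta$. Mimicking the passage from Eq.~(\ref{fhm12}) to Eq.~(\ref{fhm16}), I would split the integral at two intermediate scales and show that only the innermost piece, near $x=0$, contributes; there $Q\sim(-2j)\prod_{k}(x-x_k)=(-2j)(x^n-\rho^n)$ while the integrand retains the simple pole $1/x$. The rescaling $x=\rho u$, together with $(-2j)\rho^n=h^2$, collapses the prefactor to $1$ (its sign being fixed by the orientation of the cycle) and yields
\[
\Theta\sim\int_{1}^{+\infty}\frac{du}{u\sqrt{u^n-1}}=\frac{1}{n}\int_{0}^{1}\frac{ds}{\sqrt{s(1-s)}}=\frac{\pi}{n},
\]
the middle step being the substitution $s=u^{-n}$ and the last the Beta integral. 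The reversal of sign for $h<0$ is obtained exactly as in the $1:-2$ case, so the jump of $\Theta$ is $2\pi/n$. Equivalently, in the spirit of Sec.~\ref{complex}, the jump is a residue: since $R$ vanishes identically on $\{\pi_1=-j\}$ one has $Q(0)=-h^2$, hence $\mathrm{Res}_{x=0}\frac{h\,dx}{x\sqrt{Q}}=h/\sqrt{Q(0)}=-i$, and the $2\pi/n$ rotation of the branch points sweeps the cycle through the fraction $1/n$ of the full residue contribution $2\pi i\cdot(-i)=2\pi$.

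Finally $\tau$ is handled as in Eqs.~(\ref{fhm18})--(\ref{fhm20}): its leading part comes from the outer region near the far root, which does not feel the sign of $h$, so $\tau$ is continuous across $C$. With a jump of $\Theta$ equal to $2\pi/n$ and $\tau$ continuous, the construction of Sec.~\ref{fhm} applies with $2$ replaced by $n$: only the sublattice generated by $v_1$ and $nv_2$ is transported across $C$ (geometrically, a cycle run $n$ times, in accordance with the $n$-curled torus), the continuous prolongation $\tilde{\Theta}$ gains $2\pi/n$ around the loop so that $nv_2\mapsto nv_2-v_1$, and the formal extension to $(v_1,v_2)$ gives $v_2\mapsto v_2-\frac1n v_1$, which is the announced matrix. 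The step I expect to be the genuine obstacle is the second paragraph combined with the vanishing of the outer integrals: unlike the explicit $1:-2$ example, here $R$ is only constrained qualitatively by Lemma~\ref{lemmar}, so one must control an arbitrary admissible $R$ uniformly and prove that the outer contributions to $\Theta$ (and the inner ones to $\tau$) genuinely vanish in the joint limit.
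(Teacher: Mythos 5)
Your proposal is correct and follows essentially the same route as the paper's proof: the same integral representations (with the $R$-dependent part contributing only a continuous $\Theta_0$, which is what Lemma \ref{lemmatheta} establishes and what resolves the ``genuine obstacle'' you flag at the end), the same Newton-polygon asymptotics placing $n$ branch points on a regular $n$-gon of radius $\bigl(h^2/(-2j)\bigr)^{1/n}$ as in Lemma \ref{root1}, the same inner/outer splitting and rescaling to $\int_1^{+\infty}\frac{du}{u\sqrt{u^n-1}}$, and the same sublattice argument converting the jump $2\pi/n$ of $\Theta$ and the continuity of $\tau$ into the matrix of Eq.\ (\ref{res1n2}). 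The only immaterial deviations are that the paper fixes $j<0$ and lets $h\to 0^{\pm}$ rather than taking your joint limit $h=o(|j|^{(n+1)/2})$, evaluates the final integral via the antiderivative $\frac{2}{n}\arctan\sqrt{u^n-1}$ instead of your Beta-integral substitution, and defers your residue interpretation to the complex approach of Sec.\ \ref{compl1n}.
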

\begin{remark}
Propositions \ref{prop1n} and \ref{propmn} (see Sec. \ref{mnres})
 were formulated as conjectures in Refs. \cite{frac2,frac3}. They are based on the analysis
of the quantum joint spectrum of the energy-momentum maps. They
have been recently proved in Ref. \cite{nekonew} from a
geometrical construction. Note also that our starting point here
and in Sec. \ref{mnres} is more general than in Sec. \ref{fhm} in
the sense that only a local structure of the bifurcation diagram
of the energy-momentum map $F=(H,J)$ is assumed.
\end{remark}
As was done for the 1:-2 resonance, we have to determine the
expressions of the functions $\Theta$ and $\tau$ in terms of the
invariant polynomials, the monodromy matrix being given by the
behavior of these two functions in the neighborhood of $C$.
\begin{lemma} \label{lemmatheta}
The functions $\Theta$ and $\tau$ are given for a point $(h,j)\in
\mathcal{R}_{reg}$ by the following integrals
\begin{eqnarray} \label{res1n9}
\left\{ \begin{array}{ll}
\Theta(h,j)=\int_{\pi_1^-}^{\pi_1^+}\frac{h}{j+\pi_1}\frac{d\pi_1}{\sqrt{Q}}+\Theta_0(h,j)\\
\tau(h,j)=\frac{1}{n}\int_{\pi_1^-}^{\pi_1^+}\frac{(j+\pi_1)^{n-1}}{\sqrt{Q}}d\pi_1 \\
\end{array}\right.
\ .
\end{eqnarray}
$\Theta_0$ is a function over $\mathcal{R}$ which is continuous on
the line of singularities $C$ and which therefore gives a trivial
contribution to the monodromy matrix. $\tau$ is the first return
time of the rescaled vector field
$X=\frac{1}{(p_1^2+q_1^2)^{n-1}}{X_H}$.
\end{lemma}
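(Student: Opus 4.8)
The plan is to reduce everything to the one–degree–of–freedom dynamics on the reduced phase space $P_j$ and to read off $\Theta$ and $\tau$ as integrals of the reduced motion over a single period. First I would pass to action–angle coordinates $(I_k,\alpha_k)$ on each plane, $q_k+ip_k=\sqrt{2I_k}\,e^{i\alpha_k}$, so that $\omega=dI_1\wedge d\alpha_1+dI_2\wedge d\alpha_2$ and, from Eqs. (\ref{res1n3}) with $m=1$, $J=I_1-nI_2$, $\pi_1=I_1+nI_2$ and $\pi_2+i\pi_3=\sqrt{(\pi_1+J)^n(\pi_1-J)}\,e^{i\psi}$ with $\psi=n\alpha_1+\alpha_2$. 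The linear change $(\alpha_1,\alpha_2)\mapsto(\theta,\psi)$ with $\theta=\alpha_1$ is symplectic, with conjugate momenta $J$ and $I_2$ respectively; in particular $\theta$ is precisely the angle conjugate to $J$ used to define the twist, while $\psi$ and $\pi_1=J+2nI_2$ coordinatize $P_j$.

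Next I would write the reduced Hamiltonian as $H=\rho\sin\psi+R(\pi_1,J)$ with $\rho=\sqrt{(\pi_1+J)^n(\pi_1-J)}$, so that on a regular level set $\{H=h,\,J=j\}$ one has $\sin\psi=(h-R)/\rho$ and hence $\rho\cos\psi=\pm\sqrt{Q}$, with $Q$ the polynomial of Eq. (\ref{res1n10}). Hamilton's equations give $\dot I_2=-\partial_\psi H=-\rho\cos\psi$, and since $\pi_1=J+2nI_2$ this yields the key relation
\[
dt=\frac{d\pi_1}{-2n\rho\cos\psi}=\frac{d\pi_1}{\mp 2n\sqrt{Q}}.
\]
For a regular value the first–return orbit is one full reduced period: $\pi_1$ runs from the turning point $\pi_1^-$ up to $\pi_1^+$ and back (the two simple roots of $Q$ between which $Q>0$, by Lemma \ref{lemmar}), the two legs carrying opposite signs of $\cos\psi$ and hence adding.

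With this, $\tau$ is immediate: the rescaling multiplies the time by $(p_1^2+q_1^2)^{n-1}=(2I_1)^{n-1}=(\pi_1+j)^{n-1}$, so $d\tau=(\pi_1+j)^{n-1}dt$, and summing the two legs gives $\tau=\frac1n\int_{\pi_1^-}^{\pi_1^+}(\pi_1+j)^{n-1}Q^{-1/2}d\pi_1$, as claimed. For the twist I would compute $\Theta=\oint\dot\theta\,dt$ with $\dot\theta=\partial_J H$. The only term of $\partial_J H$ that is singular at the singular point is $\partial_J\rho\cdot\sin\psi=\frac{n}{\pi_1+j}\rho\sin\psi=\frac{n(h-R)}{\pi_1+j}$, while differentiating $R(\pi_1(J,I_2),J)$ produces $\partial_{\pi_1}R+\partial_J R$. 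Combining the two legs and cancelling the factor $n$ against the $1/(2n)$ in $dt$ gives
\[
\Theta=\int_{\pi_1^-}^{\pi_1^+}\frac{h}{j+\pi_1}\frac{d\pi_1}{\sqrt Q}+\Theta_0,\qquad \Theta_0=\int_{\pi_1^-}^{\pi_1^+}\Big[\tfrac1n(\partial_{\pi_1}R+\partial_J R)-\frac{R}{j+\pi_1}\Big]\frac{d\pi_1}{\sqrt Q},
\]
which exhibits the stated main term and defines $\Theta_0$.

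The one genuinely delicate point — and the step I expect to be the main obstacle — is showing that $\Theta_0$ is continuous across $C$, i.e. that it contributes trivially to the monodromy. As $h\to0$ the lower turning point satisfies $\pi_1^-\to -j$, where $\rho$ vanishes, and this collision of roots is the source of the nontrivial jump; I would argue that this jump lives entirely in the main term, whose numerator $h$ multiplies the singular factor $1/(j+\pi_1)$. By Lemma \ref{lemmar}, $R=(\pi_1+J)^{n'}(\pi_1-J)^{m'}\tilde R$ with $n'>n/2$, hence $n'\ge 2$; consequently each numerator appearing in $\Theta_0$ — namely $R/(j+\pi_1)$, $\partial_{\pi_1}R$ and $\partial_J R$ — vanishes at $\pi_1=-j$ at least like $(\pi_1+j)^{n'-1}$ with $n'-1\ge 1$. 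This extra vanishing dominates the at most square–root blow-up of $Q^{-1/2}$ at the colliding endpoint, so the integrals defining $\Theta_0$ converge uniformly near $C$; and since the $h$–dependence of the integrand enters only through $(h-R)^2$ inside $Q$, the two one–sided limits $h\to0^{\pm}$ coincide. This yields continuity of $\Theta_0$ on $C$. The careful endpoint estimate (uniform integrability as $\pi_1^-\to-j$) is the part that must be handled with care; the remaining steps are the routine symplectic bookkeeping sketched above.
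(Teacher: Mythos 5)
Your derivation is essentially the paper's own: you reduce to the one--degree--of--freedom dynamics on $P_j$, obtain $dt=d\pi_1/(2n\sqrt{Q})$ on each leg of the real oval, and compute $\Theta=\oint\dot\theta\,dt$, $\tau=\oint(\pi_1+j)^{n-1}dt$; your $\dot\theta=\frac{n(h-R)}{j+\pi_1}+\partial_{\pi_1}R+\partial_J R$ is exactly Eq. (\ref{res1n14}) rewritten (the paper computes it in Cartesian coordinates with $\theta=\arg(p_1+iq_1)$, while your action--angle/generating-function setup is precisely the paper's own treatment of the $m:-n$ case in lemma \ref{thetamn}, specialized to $u=1$, $v=0$), and your explicit formula for $\Theta_0$ agrees with the paper's implicit one. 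A cosmetic remark: with $\omega=dI\wedge d\alpha$ and the convention $i_{X_H}\omega=dH$, the signs in your $\dot I_2=-\partial_\psi H$ and $\dot\theta=\partial_J H$ are both flipped, but the flip cancels in the ratio $\dot\theta\,d\pi_1/\dot\pi_1$, so nothing downstream is affected.

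The one step where your argument as written is wrong is the one you yourself flagged: the continuity of $\Theta_0$ across $C$. Setting $x=\pi_1+j$, on $C$ (i.e. $h=0$, $j<0$) one has $Q=x^n(x-2j)-R^2$ with $2n'>n$, so near the collision $Q\sim(-2j)\,x^{n}$ and $Q^{-1/2}$ blows up like $x^{-n/2}$ --- not like a square root. The square-root behavior occurs only at the simple root $\pi_1^-$ for each fixed $h\neq0$, and that is not the regime that controls the uniformity as $h\to0$. Consequently ``numerator vanishing of order $n'-1\ge1$'' is not enough: against an $x^{-n/2}$ blow-up it would give an integrand of size $x^{1-n/2}$, which already fails to be integrable for $n\ge4$. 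The correct count must use the full strength of lemma \ref{lemmar}: each numerator in $\Theta_0$ is $O(x^{n'-1})$, so the integrand is $O(x^{\,n'-1-n/2})$, and the hypothesis $n'>n/2$ gives exponent $>-1$; combined with a two-regime bound (for $\pi_1^-\le x\le 2x^-$, where $x^-=O(|h|^{2/n})$, the square-root endpoint factor contributes a vanishing amount, and for $x\ge 2x^-$ one has $Q\ge c\,x^n$ uniformly) this yields uniform integrability, and the one-sided limits coincide because $Q|_{h\to0^+}$ and $Q|_{h\to0^-}$ have the same pointwise limit. With this repair your proof is complete, and in fact supplies the detail that the paper omits, since the paper merely asserts that the $R$-dependent terms of Eq. (\ref{res1n14}) contribute continuously.
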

\begin{proof}
By definition, we have for a point $(h,j)\in\mathcal{R}_{reg}$
that \cite{frac4}
\begin{equation} \label{res1n11}
\Theta(h,j)=\int_0^T\dot{\theta}dt \ ,
\end{equation}
where $\theta$ is the angle variable conjugate to $J$. It can be
expressed in terms of the variables $(q_1,p_1)$ by
\begin{equation} \label{res1n12}
\theta=\arg(p_1+iq_1) \ .
\end{equation}
This corresponds to a particular choice of the angle $\theta$.
Other choices lead to the same monodromy matrix. This point will
be detailed in Sec. \ref{mnres} for the $m:-n$ resonance.
Differentiating Eq. (\ref{res1n12}) and using Hamilton's
equations, one arrives to
\begin{equation} \label{res1n13}
\dot{\theta}=\frac{n\pi_3+\frac{\partial R}{\partial
p_1}p_1+\frac{\partial R}{\partial q_1}q_1}{q_1^2+p_1^2} \ ,
\end{equation}
which simplifies into
\begin{equation} \label{res1n14}
\dot{\theta}=\frac{1}{j+\pi_1}(nh-nR+\frac{\partial R}{\partial
p_1}p_1+\frac{\partial R}{\partial q_1}q_1) \ .
\end{equation}
The integral of Eq. (\ref{res1n11}) can be rewritten as an
integral in the reduced phase space $P_j$ \cite{frac4} :
\begin{equation} \label{res1n15}
\Theta(h,j)=2\int_{\pi_1^-}^{\pi_1^+}\dot{\theta}
\frac{d\pi_1}{\dot{\pi_1}} \ .
\end{equation}
Using the particular form of the polynomial $R$ (see lemma
\ref{lemmar}), it can be shown that the last three terms of Eq.
(\ref{res1n14}) give a continuous contribution to the function
$\Theta$ denoted $\Theta_0$. Since $\dot{\pi}_1=2n\pi_2$
\cite{frac3}, we finally obtain that
\begin{equation} \label{res1n16}
\Theta(h,j)=\int_{\pi_1^-}^{\pi_1^+}\frac{h}{j+\pi_1}\frac{d\pi_1}{\sqrt{Q}}+\Theta_0(h,j)
\ .
\end{equation}
The term $\pi_2$ has been replaced in Eq. (\ref{res1n16}) by
combining Eqs. (\ref{res1n3a}) and (\ref{res1n5}).

The determination of $\tau$ is straightforward if we remark that
\begin{equation} \label{res1n17}
\tau(h,j)=\int_{0}^{\tau}ds=\int_0^{T}\frac{ds}{dt}dt=\int_0^T
(p_1^2+q_1^2)^{n-1}dt \ ,
\end{equation}
where $s$ and $t$ are respectively the rescaled and the original time.
The rest of the proof consists, as we did for $\Theta$, in
rewriting the integral of Eq. (\ref{res1n17}) in the reduced phase
space $P_j$ and leads to
\begin{equation} \label{res1n17a}
\tau(h,j)=\frac{1}{n}\int_{\pi_1^-}^{\pi_1^+}\frac{(j+\pi_1)^{n-1}}{\sqrt{Q}}d\pi_1
\ .
\end{equation}\qed
\end{proof}
The last technical point to be discussed is the behavior of the
roots of $Q$ as $h$ go to zero.
\begin{lemma} \label{root1}
The complex discriminant locus $\Delta$ of $Q$ near the origin
$(h=0,j=0)$ is given by
\begin{eqnarray} \label{res1n20}
\Delta=\left\{ \begin{array}{ll}
h=0\\
h=\pm \sqrt{-n^nj^{n+1}\frac{2^{n+1}}{(n+1)^{(n+1)}}} \\
\end{array}\right. \ .
\end{eqnarray}
The polynomial $Q$ as a function of $x$ and in the limit $h\to 0$,
$j<0$ fixed has $n$ roots $x_k$ whose
 leading term reads
\begin{equation} \label{res1n18}
x_k=\frac{h^{2/n}}{(-2j)^{1/n}}e^{2i\pi k/n} \ ,
\end{equation}
with $k\in\{0,1,2,...,n-1\}$. The other roots of $Q$ have a
non-zero finite limit.
\end{lemma}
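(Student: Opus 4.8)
The plan is to follow the same strategy that worked for the $1{:}{-}2$ resonance: pass to a shifted variable and isolate a quasi-homogeneous principal part of $Q$ whose roots and discriminant can be computed by hand. First I would set $x=\pi_1+j$, so that $\pi_1-j=x-2j$, and use Lemma~\ref{lemmar} to write $R=x^{n'}(x-2j)^{m'}\tilde R(\pi_1,j)$. Then
\begin{equation*}
Q=x^{n}(x-2j)-\bigl[h-x^{n'}(x-2j)^{m'}\tilde R(\pi_1,j)\bigr]^{2}.
\end{equation*}
Expanding the square produces the three ``bare'' monomials $x^{n+1}$, $-2jx^{n}$ and $-h^{2}$, together with a cross term $2h\,x^{n'}(x-2j)^{m'}\tilde R$ and the purely $R^{2}$ contributions.

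Next I would assign the quasi-homogeneous weights $2$, $2$ and $n+1$ to $x$, $j$ and $h$ respectively (for $n=2$ this reproduces the weights $2,2,3$ used above). Under this weighting each of $x^{n+1}$, $jx^{n}$ and $h^{2}$ has weight $2(n+1)$, so the Newton-polyhedron principal part is
\begin{equation*}
Q_N=x^{n+1}-2jx^{n}-h^{2}=x^{n}(x-2j)-h^{2}.
\end{equation*}
The role of the inequalities in Lemma~\ref{lemmar} is exactly to push the remaining terms above this weight: the cross term carries weight $(n+1)+2(n'+m')>2(n+1)$ precisely because $n'+m'>\tfrac{n+1}{2}$, and the $R^{2}$ term is of still higher weight. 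Thus $Q_N$ governs both the collision locus and the vanishing roots asymptotically.

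To obtain $\Delta$ I would require $Q_N$ to have a double root, i.e. solve $Q_N=\partial_x Q_N=0$ simultaneously. Since $\partial_x Q_N=x^{n-1}\bigl[(n+1)x-2nj\bigr]$, there are two cases. The root $x=0$ forces $h=0$, giving the first branch of $\Delta$. The root $x=\tfrac{2nj}{n+1}$, for which $x-2j=-\tfrac{2j}{n+1}$, substituted into $Q_N=0$ yields $h^{2}=-2^{n+1}n^{n}j^{n+1}/(n+1)^{n+1}$, which is the second branch quoted in the statement. For the vanishing roots I would fix $j<0$ and let $h\to0$: near $x=0$ one has $x-2j\sim-2j\neq0$, so $Q_N=0$ degenerates to $x^{n}(-2j)=h^{2}$, whose $n$ solutions are $x_k=h^{2/n}(-2j)^{-1/n}e^{2i\pi k/n}$ with $k=0,\dots,n-1$. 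The condition $n'>\tfrac n2$ enters here: it guarantees that at $h=0$ the factor $x^{2n'}$ from $R^{2}$ vanishes to strictly higher order than $x^{n}$, so $x=0$ is a zero of $Q|_{h=0}$ of multiplicity exactly $n$; hence precisely these $n$ roots collapse to the origin while the remaining roots of $Q$ stay bounded away from it and keep a nonzero finite limit.

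The main obstacle is not the algebra of $Q_N$, which is elementary, but the justification that passing to the principal part is legitimate — that the higher-weight terms perturb the collision locus and the small roots only at subleading order. I would make this rigorous by a dominant-balance argument on the Newton polygon of $Q$ in the variables $(x,j,h)$, verifying that along each relevant edge the neglected monomials carry strictly larger weight, and by tracking the branches of the fractional powers $h^{2/n}$ and $(-2j)^{-1/n}$ so that the $n$ roots $x_k$ are labelled consistently. This bookkeeping of branch choices is what will demand the most care, since it is exactly the cyclic labelling of the $x_k$ that later encodes the Gauss--Manin monodromy.
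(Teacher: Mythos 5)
Your proposal is correct and takes essentially the same route as the paper's proof: both extract the quasi-homogeneous principal part $Q_N=x^{n+1}-2jx^n-h^2$ from the Newton polyhedron, obtain $\Delta$ from the double-root condition on $Q_N$, and read off the $n$ vanishing roots from the Newton-polygon balance $-2jx^n=h^2$ as $h\to 0$ with $j<0$ fixed. Your explicit weight bookkeeping ($2,2,n+1$ for $x,j,h$), the check that the inequalities of Lemma~\ref{lemmar} push the cross term and the $R^2$ term strictly above weight $2(n+1)$, and the multiplicity-$n$ argument at $h=0$ merely spell out what the paper compresses into ``a straightforward calculation.''
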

\begin{proof}
We first determine the complex discriminant locus $\Delta$ of $Q$
near the point $(h=0,j=0)$. Constructing the Newton polyhedron of
$Q$ and taking into account only the terms of lower degrees, the
principal part $Q_N$ of $Q$ can be written
\begin{equation} \label{res1n19}
Q_N(x)=x^{n+1}-2jx^n-h^2 \ .
\end{equation}
A straightforward calculation then leads to $\Delta$.

In the limit $h\to 0$, $j<0$ fixed, we construct the Newton
polygon associated to $Q$. The roots of the principal part of
$Q_N$ satisfy $-2jx^n=h^2$ which allows to deduce the $n$ roots
$x_k$ ($k\in\{0,1,\cdots,n-1\}$).\qed
\end{proof}

We have now all the tools ready to prove proposition \ref{prop1n}.

\begin{proof}
We first consider the case $h>0$ and $j<0$. We recall some of the
properties of the roots of the polynomial $Q$ viewed as a function
of $x$ which will be used in the calculation. We assume that $Q$
has $N$ roots with $N>n$, denoted $x_i$ ($i\in\{0,1,\cdots
N-1\}$). $x_{n},x_{n+1},\cdots,x_{N-1}$ are the roots of $Q$ of
 order 1 in the limit $h\to 0$, $j<0$ fixed. $x_{n}$ is the smallest real
 positive root of $Q$ with a non-zero limit. Since the polynomial $R$ is
defined up to a multiplicative constant, we can write $Q$ without
loss of generality as follows $Q(x)=\prod_{i=0}^{i=N-1}(x-x_i)$.
Examination of the coefficients of $Q$ leads to the following
relations
\begin{eqnarray} \label{res1n20a}
\left\{ \begin{array}{lll}
\prod_{i=0}^{i=N-1}x_i=(-1)^{N+1}h^2\\
\prod_{i=0}^{n-1}x_i=(-1)^{n-1}\frac{h^2}{(-2j)} \\
\prod_{i=n}^{N-1}x_i=2j(-1)^{N-n+1} \\
\end{array}\right. \ .
\end{eqnarray}
We determine only an equivalent of the function $\Theta-\Theta_0$.
We obtain
\begin{equation} \label{res1n21}
\Theta(h,j)-\Theta_0(h,j) \sim
h\int_{\frac{|h|^{2/n}}{(-2j)^{1/n}}}^{x_{n}}\frac{dx}{x\sqrt{\prod_{i=0}^{i=N-1}(x-x_i)}}
\ ,
\end{equation}
where $\frac{|h|^{2/n}}{(-2j)^{1/n}}$ is the smallest positive
real root of $Q$.
 We introduce the function $k$ such that $\frac{|h|^{2/n}}{(-2j)^{1/n}}\ll
k\ll 1$ when $h\to 0$. We then proceed as in the proof for the
resonance 1:-2 by decomposing the integral of Eq. (\ref{res1n21})
into two integrals. Only the first integral from
$\frac{|h|^{2/n}}{(-2j)^{1/n}}$ to $k$ has a limit different from
zero. We then have
\begin{equation} \label{res1n22}
\Theta(h,j)-\Theta_0(h,j) \sim h
\int_{\frac{|h|^{2/n}}{(-2j)^{1/n}}}^k\frac{dx}{x\sqrt{(-2j)(x^n-\frac{h^2}{-2j})\times
1}} \ .
\end{equation}
The change of variables $x=\frac{|h|^{2/n}}{(-2j)^{1/n}}u$ leads to
the following expression for $\Theta-\Theta_0$
\begin{equation} \label{res1n23}
\Theta(h,j)-\Theta_0(h,j) \sim
\int_1^{+\infty}\frac{du}{u\sqrt{u^n-1}} \ .
\end{equation}
Using the fact that
\begin{equation} \label{res1n24}
\int\frac{du}{u\sqrt{u^n-1}}=\frac{2}{n}\arctan [\sqrt{u^n-1}] \ ,
\end{equation}
one finally arrives to
\begin{equation} \label{res1n25}
\lim_{h\to 0^+,j<0} \Theta(h,j)-\Theta_0(h,j)=\frac{\pi}{n} \ .
\end{equation}
It can also be shown that
\begin{equation} \label{res1n26}
\lim_{h\to 0^-,j<0} \Theta(h,j)-\Theta_0(h,j)=\frac{-\pi}{n} \ .
\end{equation}
Following the same arguments, we can calculate the limit of
$\tau$. We decompose $\tau$ into two integrals but we only examine
the last one denoted $\tau_2$. The calculation of the other
integral can be done along the same lines. An equivalent of
$\tau_2$ is given by
\begin{equation} \label{res1n27}
\tau_2(h,j)\sim \frac{1}{n}
\int_{k}^{x_{n}}\frac{x^{n-1}dx}{\sqrt{\prod_{i=0}^{N-1}(x-x_i)}}
\ ,
\end{equation}
which can be rewritten as
\begin{equation} \label{res1n28}
\tau_2(h,j)\sim \frac{1}{n}
\int_{k}^{x_{n}}\frac{dx}{\sqrt{x^{-n+2}(x^{N-n}-2j)}} \ .
\end{equation}
which has the same finite and non-zero limit as $h\to 0^\pm$ with
$j<0$ fixed. \qed
\end{proof}
\subsection{Generalization to $m:-n$ resonances}\label{mnres}
For the $m:-n$ resonant case, the momentum $J$ reads
\begin{equation} \label{resmn1}
J=\frac{1}{2}[m(p_1^2+q_1^2)-n(q_2^2+p_2^2)] \ ,
\end{equation}
where $m\geq 2$ and $n\geq 2$ are relatively prime integers. The
bifurcation diagram corresponding locally near the origin to an
$m:-n$ resonant system is displayed in Fig \ref{fig4}. The
equation of the singular locus is $h=0$. Each point of this line
lifts for $j<0$ to an $n$-curled torus whereas the points of $C$
with $j>0$ lift to an $m$-curled torus.  The origin corresponds to
a pinched-curled torus.
\begin{figure}
\begin{center}
\includegraphics[scale=0.4]{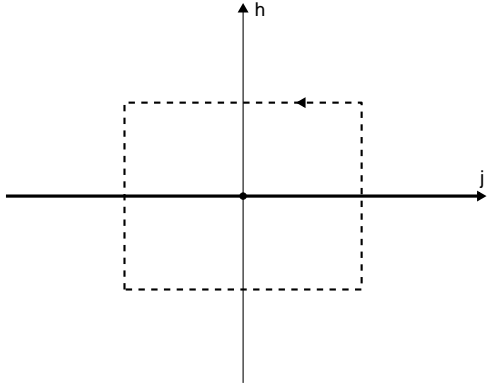}
\caption{\label{fig4} Same as Fig. \ref{fig2} but for the m:-n
resonant system.}
\end{center}
\end{figure}
\begin{proposition} \label{propmn}
The monodromy matrix of a $m:-n$ resonant system of the form
(\ref{res1n4}) is given for $m$ and $n$ relatively prime and for a
counterclockwise loop around the origin (see Fig. \ref{fig3}) by
\begin{eqnarray} \label{mnres2}
M= \left( \begin{array}{cc}
1 & 0 \\
-\frac{1}{mn} & 1
\end{array} \right) \ .
\end{eqnarray}
\end{proposition}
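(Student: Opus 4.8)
The plan is to follow the template of the $1:-n$ computation, the one genuinely new ingredient being the construction of a globally defined angle conjugate to $J$ and the combination of \emph{two} distinct crossings of $C$ through B\'ezout's identity. Writing $\theta_1=\arg(p_1+iq_1)$ and $\theta_2=\arg(p_2+iq_2)$ for the angles in the two planes, the Hamiltonian flow of $J$ in (\ref{resmn1}) satisfies $\dot\theta_1=m$ and $\dot\theta_2=-n$; hence $\theta=a\theta_1+b\theta_2$ is single valued modulo $2\pi$ for any integers $a,b$, and it is conjugate to $J$ (i.e. $\dot\theta=1$ along $\phi_J$) precisely when $am-bn=1$. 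Such integers exist because $m$ and $n$ are relatively prime, which is exactly where the coprimality hypothesis enters. First I would fix such a pair $(a,b)$ and prove, as promised in Sec. \ref{res1n}, that $M$ does not depend on it: two admissible pairs differ by $(a,b)\mapsto(a+kn,b+km)$, shifting $\theta$ by the $J$-invariant combination $k(n\theta_1+m\theta_2)$, whose derivative along $\phi_J$ is $k(nm-mn)=0$; since $n\theta_1+m\theta_2$ is single valued modulo $2\pi$, its variation along the second period is a multiple of $2\pi$, so $v_2$ changes only by an integer multiple of $v_1$ and the unipotent matrix $M$ is unchanged.

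The second step is to generalize Lemma \ref{lemmatheta}. Setting $\Theta_i=\int_0^T\dot\theta_i\,dt$, one has $\Theta=a\Theta_1+b\Theta_2+\Theta_0$, and a computation identical in spirit to (\ref{res1n13})--(\ref{res1n16}), using $p_1^2+q_1^2=\tfrac1m(\pi_1+j)$ and $p_2^2+q_2^2=\tfrac1n(\pi_1-j)$ together with $H=\pi_3+R$, rewrites each $\Theta_i$ as an integral over $P_j$ whose integrand has a simple pole respectively at $\pi_1+j=0$ and at $\pi_1-j=0$ and denominator $\sqrt{Q}$, with
\begin{equation}
Q=(\pi_1+j)^n(\pi_1-j)^m-[h-R(\pi_1,j)]^2 .
\end{equation}
The same manipulation yields $\tau$ as the integral of a rational form whose numerator vanishes to high enough order at both singular points to keep it finite, while the non-singular remainders are collected into a term $\Theta_0$ that is continuous across $C$ and therefore contributes nothing to $M$.

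The heart of the argument is the root analysis, the analogue of Lemma \ref{root1}, now carried out at two places. Building the Newton polygon of $Q$ about the singular point with $j<0$ (where $x=\pi_1+j\to0$ and $(\pi_1-j)^m\to(-2j)^m\neq0$) produces a cluster of $n$ roots $x_k\sim \frac{h^{2/n}}{(-2j)^{1/n}}e^{2i\pi k/n}$, exactly as before; about the singular point with $j>0$ (where $x'=\pi_1-j\to0$ and $(\pi_1+j)^n\to(2j)^n\neq0$) one finds symmetrically a cluster of $m$ roots. Feeding these into the $\Theta_i$ integrals and using the change of variables and the elementary primitive $\int du/(u\sqrt{u^k-1})=\tfrac2k\arctan\sqrt{u^k-1}$ as in (\ref{res1n21})--(\ref{res1n25}) gives a jump of $\Theta_1$ equal to $2\pi/n$ across the $j<0$ crossing (where $\Theta_2$ is continuous) and a jump of $\Theta_2$ equal to $2\pi/m$ across the $j>0$ crossing (where $\Theta_1$ is continuous), with $\tau$ staying continuous at both. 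The hard part will be the sign bookkeeping: the loop traverses the two crossings in opposite directions in $h$, and together with $\dot\theta_2=-n$ this must produce a relative minus sign between the two jumps, which is precisely what makes the final combination collapse to $1/(mn)$.

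Granting the correct signs, the total variation of $\Theta=a\Theta_1+b\Theta_2$ around the loop is
\begin{equation}
a\,\frac{2\pi}{n}-b\,\frac{2\pi}{m}=\frac{2\pi}{mn}\,(am-bn)=\frac{2\pi}{mn},
\end{equation}
so that the second period vector $v_2=(-\Theta,\tau)$ is sent to $v_2-\tfrac{1}{mn}v_1$ while $v_1$ is preserved, which is (\ref{mnres2}). Equivalently, crossing the $n$-curled line forces $\beta_2$ to be run a multiple of $n$ times and crossing the $m$-curled line a multiple of $m$ times, so only the sublattice generated by $v_1$ and $mn\,v_2$ is transported continuously; on it the monodromy is the integral matrix (\ref{fhm2}), whose formal extension to the full lattice gives the fractional entry $-1/(mn)$. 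As in the $1:-2$ case, this local computation fixes the matrix but not its independence of the homotopy class of the loop, a point left to the complex approach of Sec. \ref{complex}.
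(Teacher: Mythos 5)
Your proposal follows essentially the same route as the paper's own proof: the B\'ezout angle $\theta=u\phi_1+v\phi_2$ with $mu-nv=1$ (your $(a,b)$), the rewriting of $\Theta$ and $\tau$ as reduced-phase-space integrals with simple poles at $\pi_1=\mp j$ (lemma \ref{thetamn}), the Newton-polygon root clusters at the two crossings (lemma \ref{lemmarootmn}), and the combination of the jumps $\pm u\pi/n$ and $\pm v\pi/m$ into $-\frac{1}{mn}$ exactly as in the paper's product $M=M_-M_+$, where the relative sign you ``grant'' is supplied, as in the paper, by the loop crossing the $j<0$ and $j>0$ branches of $C$ in opposite $h$-directions (not by $\dot\theta_2=-n$). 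The only deviations are cosmetic: the cluster radius for $j<0$ should be $h^{2/n}/(-2j)^{m/n}$ rather than $h^{2/n}/(-2j)^{1/n}$ (harmless, since after rescaling the limiting integral $\int_1^{+\infty}du/(u\sqrt{u^n-1})$ is radius-independent), and your explicit verification that $M$ does not depend on the B\'ezout pair is a small addition that the paper only asserts through its remark identifying the choice of $(u,v)$ with a choice of homology basis.
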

The proof of proposition \ref{propmn} follows the same lines as
for the resonance $1:-n$. As in Sec. \ref{res1n}, we consider the
family of energy-momentum maps given by Eqs. (\ref{res1n4}) where
$\pi_1$, $\pi_3$ and $J$ are given by Eqs. (\ref{res1n3}) with
$m>1$. Some lemmas are required before the final proof.
\begin{lemma} \label{fmn}
The energy-momentum map $F$ of Eq. (\ref{res1n4}) has locally the
bifurcation diagram of Fig. \ref{fig4} in a neighborhood of the
origin if $R$ is of the form
\begin{equation} \label{mnres3}
R(\pi_1,J)=(\pi_1+J)^{n'}(\pi_1-J)^{m'}\tilde{R}(\pi_1,J) \ ,
\end{equation}
where $n'>\frac{n}{2}$, $m'>\frac{m}{2}$. $\tilde{R}$ is a
polynomial such that the two smallest real positive roots of $Q$
which are larger than $j$ are simple roots for
$(h,j)\in\mathcal{R}_{reg}$. $Q$ is the following polynomial
\begin{equation} \label{mnres5}
Q=(\pi_1+j)^n(\pi_1-j)^m-\big(h-R(\pi_1,j)\big)^2 \ .
\end{equation}
The two roots are denoted $\pi_1^-$ and $\pi_1^+$.
\end{lemma}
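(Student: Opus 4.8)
The plan is to follow closely the proof of Lemma \ref{lemmar}, the $1:-n$ analogue, replacing the single relevant singular point by the two singular points that now occur on the line $C$. As there, I would study how the level set $\{H_j=h\}$, of equation $\pi_3=h-R(\pi_1,j)$, meets the reduced phase space $P_j$ defined by $\pi_2^2+\pi_3^2=(\pi_1+j)^n(\pi_1-j)^m$ as $(h,j)$ varies near the origin, and I would read off the topology of the singular fibers from the intersection in the plane $\pi_2=0$. The fact to reuse is that every point of $P_j$ lifts to a circle in the original phase space, except the points with $\pi_2=\pi_3=0$ lying over $\pi_1=|j|$, which lift to a circle run $n$ times when $j<0$, to a circle run $m$ times when $j>0$, and to a point when $j=0$.

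Next I would localize at the two singular points. Writing $x=\pi_1+j$, $x'=\pi_1-j$, $y=\pi_3$, for $j<0$ the singular point is $S_-$, where $x=0$ while $x'=-2j$ stays bounded away from zero; there $P_j$ is the pair of branches $y=\pm x^{n/2}x'^{m/2}\sim\pm(-2j)^{m/2}x^{n/2}$ and the level set $\{H_j=0\}$ is $y=-R=-x^{n'}x'^{m'}\tilde R\sim -(-2j)^{m'}\tilde R\,x^{n'}$. Comparing orders of contact as $x\to 0^+$, the level-set curve threads strictly between the two branches of $P_j$, producing the $n$-curled torus, exactly when $x^{n'}=o(x^{n/2})$, i.e.\ when $n'>\frac{n}{2}$. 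By the symmetry $(x,n,n')\leftrightarrow(x',m,m')$, the same argument at the singular point $S_+$, where $x'=0$ and $x=2j$ for $j>0$, yields the $m$-curled torus precisely when $m'>\frac{m}{2}$. These two local requirements are the source of the two inequalities in the statement.

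It then remains to treat the origin and the transversality hypothesis on $\tilde R$. At $j=0$ both singular points merge, $x=x'=\pi_1$, and $P_j$ becomes $y=\pm\pi_1^{(n+m)/2}$ while the level set is $y\sim -\pi_1^{n'+m'}$; the pinched-curled torus is recovered provided $n'+m'>\frac{n+m}{2}$, which I would note is automatic here, since adding the two strict inequalities $n'>\frac{n}{2}$ and $m'>\frac{m}{2}$ already gives it (this is why, unlike in Lemma \ref{lemmar} where $m'\geq 0$ was allowed and the origin condition had to be imposed separately, no extra condition is needed). Finally, the hypothesis that $\tilde R$ make the two smallest positive roots of $Q$ larger than $j$ simple guarantees that, away from the singular points, the level set meets $P_j$ transversally, so that $\pi_1^\pm$ are genuine simple branch points and every $(h,j)\in\mathcal{R}_{reg}$ lifts to a smooth torus; this is what legitimizes the square-root expressions for $\Theta$ and $\tau$ that Lemma \ref{lemmatheta} will reuse.

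I would expect the main obstacle to be making rigorous the passage from the exponent comparison to the homeomorphism type of the singular fiber: one must verify that $n'>\frac{n}{2}$ is not merely necessary but exactly characterizes when the intersection curve, lifted back through the circle bundle over $P_j$, closes up into an $n$-curled torus rather than some other gluing, and that the two local pictures at $S_-$ and $S_+$ assemble consistently with the global structure of Fig. \ref{fig4}. Tracking how the level-set curve winds around each singular point, and confirming that the borderline strict inequalities suffice, is the delicate part; the remainder is the bookkeeping already carried out for the $1:-n$ case.
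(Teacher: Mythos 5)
Your proposal is correct and follows essentially the same route as the paper, whose proof consists of the single remark that it is ``similar to the proof of lemma \ref{lemmar}'': you carry out exactly that adaptation, reading off the topology from the intersection of $\{H_j=h\}$ with $P_j$ in the plane $\pi_2=0$ near the two conical points $S_\pm$, with $n'>\frac{n}{2}$ coming from $j<0$ and $m'>\frac{m}{2}$ from $j>0$. Your observation that the condition at the origin, $n'+m'>\frac{n+m}{2}$, is automatically implied by the two strict inequalities --- unlike in Lemma \ref{lemmar}, where $m'\geq 0$ forced it to be imposed separately --- correctly accounts for why the $m:-n$ statement lists only two inequalities.
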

\begin{proof}
The proof is similar to the proof of lemma \ref{lemmar}.\qed
\end{proof}
\begin{lemma} \label{thetamn}
The functions $\Theta$ and $\tau$ are defined for a point
$(h,j)\in \mathcal{R}_{reg}$ by the following integrals
\begin{eqnarray} \label{mnres4}
\left\{ \begin{array}{ll}
\Theta(h,j)=\int_{\pi_1^-}^{\pi_1^+}[\frac{hu}{j+\pi_1}+\frac{hv}{\pi_1-j}]\frac{d\pi_1}{\sqrt{Q}}+\Theta_0(h,j)\\
\tau(h,j)=\frac{1}{mn}\int_{\pi_1^-}^{\pi_1^+}\frac{(j+\pi_1)^{n-1}(\pi_1-j)^{m-1}}{\sqrt{Q}}d\pi_1 \\
\end{array}\right.
\ .
\end{eqnarray}
$\Theta_0$ is a function over $\mathcal{R}$ which is continuous on
the line of singularities $C$. $u$ and $v$ are two integers such
that $mu-nv=1$. $\tau$ is the first return time of the rescaled
vector field
$X=\frac{1}{(p_1^2+q_1^2)^{n-1}(p_2^2+q_2^2)^{m-1}}{X_H}$ defined
such that $\tau$ has a non-zero finite limit on $C$.
\end{lemma}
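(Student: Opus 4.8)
The plan is to follow the proof of Lemma \ref{lemmatheta} for the $1:-n$ case, the only genuinely new ingredient being the construction of the angle conjugate to $J$. The flow of $J$ advances the two planar angles $\theta_1=\arg(p_1+iq_1)$ and $\theta_2=\arg(p_2+iq_2)$ at the constant rates $\dot\theta_1=m$ and $\dot\theta_2=-n$. Hence I would set $\theta=u\theta_1+v\theta_2$, so that along the $J$-flow $\dot\theta=mu-nv=1$; since $m$ and $n$ are relatively prime such integers $u,v$ exist, and $\theta$ is then an angle conjugate to $J$ (it advances by $2\pi$ over one period of the $S^1$-action). Any other Bézout solution changes $(u,v)$ by an integer multiple of $(n,m)$, i.e. it changes $\theta$ by a multiple of the combination $n\theta_1+m\theta_2$, which is, up to a constant, $-\arg(\pi_2+i\pi_3)$ and hence single-valued along the fibres. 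This is why the resulting monodromy does not depend on the choice, as announced in the text.

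Next I would compute $\dot\theta$ along the flow of $H=\pi_3+R(\pi_1,J)$. From Hamilton's equations one has $\dot\theta_i=(p_i\partial_{p_i}H+q_i\partial_{q_i}H)/(p_i^2+q_i^2)$. I would evaluate the Euler operators $E_i=p_i\partial_{p_i}+q_i\partial_{q_i}$ using the homogeneity of the invariants: since $\pi_2+i\pi_3\propto(q_1+ip_1)^n(q_2+ip_2)^m$ one has $E_1\pi_3=n\pi_3$ and $E_2\pi_3=m\pi_3$, while $E_1,E_2$ act on $R(\pi_1,J)$ through the identities $p_1^2+q_1^2=(\pi_1+J)/m$ and $p_2^2+q_2^2=(\pi_1-J)/n$. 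Collecting terms gives
\[
\dot\theta_1=mn\frac{\pi_3}{\pi_1+J}+m(\partial_{\pi_1}R+\partial_J R),\qquad \dot\theta_2=mn\frac{\pi_3}{\pi_1-J}+n(\partial_{\pi_1}R-\partial_J R).
\]
On the level set $\{H=h\}$ I substitute $\pi_3=h-R$ and reduce $\Theta=\int_0^T\dot\theta\,dt$ to the reduced space $P_j$ via $dt=d\pi_1/\dot\pi_1$, using the Poisson relation $\dot\pi_1=2mn\pi_2=2mn\sqrt{Q}$ (the $m:-n$ analogue of $\dot\pi_1=2n\pi_2$, proved by the same rotation-rate computation). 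The leading part, coming from the $h$ term of $\pi_3=h-R$, reproduces exactly the integrand $h\,[u/(j+\pi_1)+v/(\pi_1-j)]/\sqrt{Q}$; all remaining terms, namely those proportional to $R$ and to its partial derivatives, are gathered into $\Theta_0$.

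For $\tau$ I would mimic the $1:-n$ computation: rescaling by $(p_1^2+q_1^2)^{n-1}(p_2^2+q_2^2)^{m-1}$ turns the divergent return time into $\tau=\int_0^T(p_1^2+q_1^2)^{n-1}(p_2^2+q_2^2)^{m-1}\,dt$, and reducing to $P_j$ with the same substitutions $p_1^2+q_1^2=(\pi_1+J)/m$, $p_2^2+q_2^2=(\pi_1-J)/n$ together with $\dot\pi_1=2mn\sqrt{Q}$ yields the stated integrand, the overall constant being fixed by these identities. The power used in the rescaling is precisely the one that makes the resulting integral converge to a non-zero finite limit on $C$.

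The main obstacle is the last assertion, that $\Theta_0$ is continuous across $C$. Its integrand carries no overall factor $h$ and is built from $R/(\pi_1\pm J)$ and the derivatives $\partial_{\pi_1}R,\partial_J R$, divided by $\sqrt{Q}$. As $h\to 0$ two roots of $Q$ collide near $\pi_1=-j$ when $j<0$ (and near $\pi_1=j$ when $j>0$), which is where $\sqrt{Q}$ could generate a jump. Here I would invoke the explicit form $R=(\pi_1+J)^{n'}(\pi_1-J)^{m'}\tilde R$ with $n'>n/2$ and $m'>m/2$ from Lemma \ref{fmn}: these strict inequalities guarantee that $R/(\pi_1+j)$, $R/(\pi_1-j)$ and the derivatives of $R$ still vanish to high enough order at the coalescing root that the $\Theta_0$-integrand remains integrable and its limit agrees from both sides of $C$. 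Making these uniform estimates near the merging roots precise — exactly the step abbreviated as ``it can be shown that $\dots$ give a continuous contribution'' in the $1:-n$ proof — is where the real work lies; everything else is the bookkeeping of the reduction to $P_j$.
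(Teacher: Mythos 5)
Your proposal is correct and follows essentially the same route as the paper's proof: the same B\'ezout choice $\theta=u\phi_1+v\phi_2$ with $mu-nv=1$, the same computation of $\dot\theta$ from Hamilton's equations (your Euler-operator identities reproduce Eq.~(\ref{mnres11})), the same reduction to $P_j$ via $\dot\pi_1=2mn\pi_2$, and the same relegation of all $R$-dependent terms to the continuous remainder $\Theta_0$. The only cosmetic differences are that you verify $\{\theta,J\}=1$ directly where the paper uses a type-2 generating function, and that you are somewhat more explicit than the paper about the point where continuity of $\Theta_0$ across $C$ rests on the vanishing orders $n'>n/2$, $m'>m/2$ --- a step the paper likewise asserts without detailed estimates.
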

\begin{remark}
One can obtain the expression given by Eq. (\ref{res1n9})
corresponding to the case $1:-n$ as a special case of Eq.
(\ref{mnres4}) with $u=1$ and $v=0$, which is a particular
solution of $mu-nv=1$ for $m=1$.
\end{remark}
\begin{proof}
The determination of $\Theta$ is based on the dependence of the
angle $\theta$ as a function of the coordinates
$(p_1,q_1,p_2,q_2)$. To clarify this question, we introduce the
canonical conjugate coordinates $(I_1,\phi_1)$ and $(I_2,\phi_2)$
which are defined as follows
\begin{eqnarray} \label{mnres6}
\left\{ \begin{array}{ll}
q_k=\sqrt{2I_k}\sin\phi_k \\
p_k=\sqrt{2I_k}\cos\phi_k
\end{array}\right.
\ .
\end{eqnarray}
Note that the polar coordinates $(I_k,\phi_k)$ are only defined if
$p_k^2+q_k^2>0$ \cite{arnold} which is not the case on the line
$C$. By definition, the angles $\phi_1$ and $\phi_2$ vary in an
interval of length $2\pi$. We look for a linear canonical
transformation which transforms the two angles $\phi_1$ and
$\phi_2$ into $\theta$ and $\psi$ where the angle $\theta$ is
canonically conjugate to $J$. The angular dependence of $H$ in the
variables $(I_1,\phi_1,I_2,\phi_2)$ is given by the term $\pi_3$
and is equal to $n\phi_1+m\phi_2$. We then set
\begin{eqnarray} \label{mnres7}
\left\{ \begin{array}{ll}
\theta=u\phi_1+v\phi_2 \\
\psi=n\phi_1+m\phi_2
\end{array}\right. \ ,
\end{eqnarray}
where $u$ and $v$ are integers such that $mu-nv=1$ which ensures
that the determinant of the linear canonical transformation is 1
and that $\theta$ and $\psi$ are two angles varying in an interval
of length $2\pi$. Since $m$ and $n$ are relatively prime, the
Bezout theorem states that this equation has a solution
($u_0$,$v_0$). There are an infinite number of solutions which can
be written ($u_0+kn$,$v_0+km$) with $k\in\Z$. We denote by $(u,v)$
one of these solutions. A choice of a couple $(u,v)$ is associated
to a choice of a particular basis of the homology group.

The generating function $F_2$ of type 2 \cite{arnold} associated
to the canonical transformation is given by
\begin{equation} \label{mnres8}
F_2=(u\phi_1+v\phi_2)\tilde{J}+(n\phi_1+m\phi_2)\mathcal{I} \ ,
\end{equation}
where $\tilde{J}$ and $\mathcal{I}$ are the momenta conjugated
respectively to $\theta$ and $\psi$. From the definition of $F_2$,
one deduces that
\begin{eqnarray} \label{mnres9}
\left\{ \begin{array}{ll}
\tilde{J}=mI_1-nI_2 \\
\mathcal{I}=uI_2-vI_1
\end{array}\right. \ ,
\end{eqnarray}
and that as expected $\tilde{J}=J$. We will drop the tilde in the
rest of the proof.

The angle $\theta$ can therefore be written as
\begin{equation} \label{mnres10}
\theta=u\arg(p_1+iq_1)+v\arg(p_2+iq_2) \ .
\end{equation}
Differentiating Eq. (\ref{mnres9}) with respect to time and using
the Hamilton equations, one obtains that
\begin{equation} \label{mnres11}
\dot{\theta}=\frac{u\big(n\pi_3+\frac{\partial R}{\partial
p_1}p_1+\frac{\partial R}{\partial
q_1}q_1\big)}{q_1^2+p_1^2}+\frac{v\big(m\pi_3+\frac{\partial
R}{\partial p_2}p_2+\frac{\partial R}{\partial
q_2}q_2\big)}{q_2^2+p_2^2} \ ,
\end{equation}
which leads to
\begin{equation} \label{mnres11a}
\dot{\theta}=\frac{u\big(nmh-nmR+m\frac{\partial R}{\partial
p_1}p_1+m\frac{\partial R}{\partial
q_1}q_1\big)}{j+\pi_1}+\frac{v\big(nmh-nmR+n\frac{\partial
R}{\partial p_2}p_2+n\frac{\partial R}{\partial
q_2}q_2\big)}{\pi_1-j} \ .
\end{equation}
The last step consists in rewriting this integral as an integral
in the reduced phase space $P_j$. The $R$-dependent part of
$\Theta$ gives a continuous contribution on the line $C$ denoted
$\Theta_0$. Since $\dot{\pi}_1=2mn\pi_2$, one finally obtains
\begin{equation} \label{mnres12}
\Theta(h,j)=\int_{\pi_1^-}^{\pi_1^+}[\frac{hu}{j+\pi_1}+\frac{hv}{\pi_1-j}]\frac{d\pi_1}{\pi_2}+\Theta_0(h,j)
\ .
\end{equation}
For $\tau$ the proof is straightforward and similar to the one of
lemma \ref{lemmatheta}.\qed
\end{proof}
\begin{lemma} \label{lemmarootmn}
The complex discriminant locus $\Delta$ near the origin is given
by
\begin{eqnarray} \label{mnres15}
\Delta=\left\{ \begin{array}{ll}
h=0\\
h=\pm \sqrt{\frac{(-1)^m2^{m+n}m^mn^nj^{m+n}}{(m+n)^{m+n}}} \\
\end{array}\right. \ .
\end{eqnarray}
In the limit $h\to 0$, $j<0$ fixed, the polynomial $Q$ as a
function of $x$ has n roots $x_k$ whose
 leading term is
\begin{equation} \label{mnres13}
x_k=\frac{h^{2/n}}{(-2j)^{m/n}}e^{\frac{2i\pi k}{n}} \ .
\end{equation}
with $k\in\{0,1,\cdots ,n-1\}$, the other roots having a finite
limit different from zero.

In the limit $h\to 0$, $j>0$ fixed, the polynomial $Q$ as a
function of $x'$ has m roots $x'_k$ whose
 leading term is
\begin{equation} \label{mnres13a}
x'_k=\frac{h^{2/m}}{(2j)^{n/m}}e^{\frac{2i\pi k}{m}} \ .
\end{equation}
with $k\in\{0,1,\cdots ,m-1\}$, the other roots having a finite
limit different from zero.
\end{lemma}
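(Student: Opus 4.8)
The plan is to follow the strategy of Lemma \ref{root1} verbatim, now keeping both factors $(\pi_1+j)^n$ and $(\pi_1-j)^m$. I introduce the two local variables $x=\pi_1+j$ and $x'=\pi_1-j$, so that the polynomial $Q$ of Eq. (\ref{mnres5}) reads $Q=x^n(x-2j)^m-(h-R)^2$ in the variable $x$, with $R=x^{n'}(x-2j)^{m'}\tilde R$ by Lemma \ref{fmn}. First I would fix the quasi-homogeneous weights $w(x)=w(j)=2$ and $w(h)=m+n$, so chosen that the three monomials of $x^n(x-2j)^m$ together with $h^2$ all carry the common weight $2(m+n)$. Then $x^n(x-2j)^m$ is quasi-homogeneous, while the conditions $n'>n/2$, $m'>m/2$ give $w(R)\ge 2(n'+m')>m+n=w(h)$, so that $hR$ and $R^2$ carry strictly larger weight than $h^2$. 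Hence the principal part is $Q_N=x^n(x-2j)^m-h^2$, the exact $m:-n$ analogue of Eq. (\ref{res1n19}).

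To obtain $\Delta$ I impose $Q_N=\partial_x Q_N=0$. Factoring $\partial_x Q_N=x^{n-1}(x-2j)^{m-1}\big[(m+n)x-2nj\big]$ isolates three families of critical points. The roots $x=0$ and $x=2j$ both force $Q_N=-h^2=0$, i.e. the branch $h=0$; the remaining factor gives $x=2nj/(m+n)$, and substituting this value into $Q_N=0$ yields $h^2=x^n(x-2j)^m$ evaluated there, namely $h^2=(-1)^m2^{m+n}m^mn^nj^{m+n}/(m+n)^{m+n}$, which is the second branch of $\Delta$. Setting $m=1$ reproduces Eq. (\ref{res1n20}), a useful consistency check.

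For the roots I would analyse the two regimes separately with the Newton polygon of $Q$ regarded as a polynomial in $x$ (resp. $x'$) at fixed $j$. For $j<0$ fixed and $h\to 0$, near $x=0$ one has $(x-2j)^m\to(-2j)^m\neq0$ while $R=O(x^{n'})$ with $n'>n/2$; the leading balance is therefore $(-2j)^mx^n=h^2$, giving the $n$ roots $x_k=\frac{h^{2/n}}{(-2j)^{m/n}}e^{2i\pi k/n}$. The case $j>0$ is symmetric under $x\leftrightarrow x'$, $n\leftrightarrow m$, $j\to-j$: near $x'=0$ one has $(x'+2j)^n\to(2j)^n$, the balance becomes $(2j)^nx'^m=h^2$, and one recovers $x'_k=\frac{h^{2/m}}{(2j)^{n/m}}e^{2i\pi k/m}$. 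In both regimes the root of multiplicity $n$ (resp. $m$) sitting at the origin of the $h=0$ limit splits into the listed roots, while the other roots of $Q$ remain at the finite nonzero values they take at $h=0$.

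I expect the main obstacle to be the rigorous justification that the $R$-dependent terms are negligible at every stage --- both in singling out $Q_N$ and in the two root expansions --- since $R$ carries the otherwise unspecified factor $\tilde R$ and enters quadratically. The decisive point is that the inequalities $n'>n/2$, $m'>m/2$ are exactly what push $w(hR)$ and $w(R^2)$ past $w(h^2)$ in the discriminant computation, and the same inequalities control $R$ near $x=0$ (resp. $x'=0$) against the scale $x\sim h^{2/n}$ (resp. $x'\sim h^{2/m}$) of the vanishing roots; once this bookkeeping is in place, everything else reduces to the elementary algebra above.
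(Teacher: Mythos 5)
Your proposal is correct and follows essentially the same route as the paper: the Newton polyhedron (which you phrase equivalently via the quasi-homogeneous weights $w(x)=w(j)=2$, $w(h)=m+n$) yields the principal part $Q_N=x^n(x-2j)^m-h^2$ of Eq.~(\ref{mnres14}), the discriminant of $Q_N$ gives $\Delta$, and the Newton-polygon balance $(-2j)^m x^n=h^2$ (resp. $(2j)^n x'^m=h^2$ after exchanging $m$ and $n$) gives the vanishing roots. Your only addition is to spell out what the paper compresses into ``simple algebra'' --- the factorization $\partial_x Q_N=x^{n-1}(x-2j)^{m-1}\bigl[(m+n)x-2nj\bigr]$, the evaluation at $x=2nj/(m+n)$, and the explicit check that $n'>n/2$, $m'>m/2$ force $hR$ and $R^2$ to be negligible both for the principal part and at the scale $x\sim h^{2/n}$ --- all of which is accurate.
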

\begin{remark}
We notice that the case $j>0$ and $j<0$ give the same expressions
but with $m$ and $n$ interchanged.
\end{remark}
\begin{proof}
Let us assume that $j<0$. As for the $1:-n$ resonance, we
calculate the complex discriminant locus of $Q$ near the origin as
a function of $x$. The construction of the Newton polyhedron gives
the principal part $Q_N$ of $Q$
\begin{equation} \label{mnres14}
Q_N(x)=(x-2j)^mx^n-h^2 \ .
\end{equation}
Simple algebra leads to the discriminant locus $\Delta$.

In the limit $h\to 0$, $j<0$ fixed, the construction of the Newton
polygon of $Q$ leads to the following equation for the roots of
the principal part of $Q$
\begin{equation} \label{mnres16}
(-2j)^mx^n=h^2 \ ,
\end{equation}
and to the $n$ roots $x_k$ of Eq. (\ref{mnres13}). Exchanging the
role of $m$ and $n$ and taking $j>0$, a similar proof gives the
roots of Eq. (\ref{mnres13a}).\qed
\end{proof}

Having established lemmas \ref{fmn}, \ref{thetamn} and
\ref{lemmarootmn} required, we can pass to the proof of
proposition \ref{propmn}.

\begin{proof}
Since the line $C$ is crossed two times by the loop $\Gamma$, the
monodromy matrix $M$ has two contributions denoted $M_-$ for $j<0$
and $M_+$ for $j>0$. We first consider the case $j<0$. The case
$j>0$ will be deduced from the calculation for $j<0$ by exchanging
the role of $m$ and $n$. The calculation is based on the analysis
of the roots of the polynomial $Q$. $Q$ has $N$ roots denoted
$x_i$ with $i\in\{0,1,\cdots N-1\}$. The roots $x_n,\cdots
x_{N-1}$ are of order 1. A simple analysis of the polynomial $Q$
leads to the following relations
\begin{eqnarray} \label{mnres17}
\left\{ \begin{array}{lll}
\prod_{i=0}^{N-1}x_i=h^2(-1)^{N+1}\\
\prod_{i=0}^{n-1}x_i=\frac{h^2}{(-2j)^m}(-1)^{n-1} \\
\prod_{i=n}^{N-1}x_i=(-1)^{N-n+m}(-2j)^m \\
\end{array} \right. \ .
\end{eqnarray}
Following the same steps as in the proof for the resonance $1:-n$,
one arrives to
\begin{equation} \label{mnres18}
\Theta(h,j)-\Theta_0(h,j) \sim uh
\int_{\frac{|h|^{2/n}}{(-2j)^{m/n}}}^k\frac{dx}{x\sqrt{(-2j)^m(x^n-\frac{h^2}{(-2j)^m})\times
1}} \ ,
\end{equation}
where $\frac{|h|^{2/n}}{(-2j)^{m/n}}\ll k \ll 1$. For $h>0$, Eq.
(\ref{mnres18}) simplifies into
\begin{equation} \label{mnres19}
\Theta(h,j)-\Theta_0(h,j) \sim u
\int_{1}^{+\infty}\frac{dx}{x\sqrt{x^n-1}} \ .
\end{equation}
We finally obtain that
\begin{equation} \label{mnres19a}
\lim_{h\to 0^{\pm},j<0}=\Theta(h,j)-\Theta_0(h,j)=\pm
u\frac{\pi}{n} \ .
\end{equation}
A similar proof leads to
\begin{equation} \label{mnres19b}
\lim_{h\to 0^{\pm},j>0}=\Theta(h,j)-\Theta_0(h,j)=\pm
v\frac{\pi}{m} \ .
\end{equation}
The calculation of $\tau$ uses the same arguments and shows that
$\tau$ is continuous on $C$.

One then deduces that the matrices $M_-$ and $M_+$ are
respectively given by
\begin{eqnarray} \label{mnres20}
M_-= \left( \begin{array}{cc}
1 & 0 \\
-\frac{u}{n} & 1
\end{array} \right) \ ,
\end{eqnarray}
and
\begin{eqnarray} \label{mnres21}
M_+= \left( \begin{array}{cc}
1 & 0 \\
\frac{v}{m} & 1
\end{array} \right) \ ,
\end{eqnarray}
where we have taken into account for $M_+$ the fact that the line
$C$ is crossed from $h<0$ to $h>0$. The total monodromy matrix is
given by the product of the matrices $M_-$ and $M_+$
\begin{eqnarray}
M= \left( \begin{array}{cc}
1 & 0 \\
-\frac{u}{n} & 1
\end{array} \right)
\left( \begin{array}{cc}
1 & 0 \\
\frac{v}{m} & 1
\end{array} \right)
=\left( \begin{array}{cc}
1 & 0 \\
-\frac{1}{mn} & 1
\end{array} \right) \ ,
\end{eqnarray}
where the relation $mu-nv=1$ has been used.\qed
\end{proof}
\section{Extension to the complex domain} \label{complex}
\subsection{Idea of the method}
In this section, we reformulate the notion of fractional
hamiltonian monodromy by deforming the loop $\Gamma$ close to the
line $C$, such that it bypasses the line $C$ through the complex
domain. The starting point of the complex approach is given by the
expressions of the functions $\Theta$ and $\tau$ as real
one-dimensional integrals (see for instance lemma \ref{lemma1},
Eqs. (\ref{fhm4}) and (\ref{fhm5}) for the 1:-2 resonance case).
If we consider the complexified variables $h$, $j$ and $\pi_1$
then $\Theta(h,j)$ and $\tau(h,j)$ can be interpreted as complex
 integrals on a line of the complex plane $\pi_1$. Furthermore,
 these two functions can be rewritten as integrals of rational 1-forms
 over a cycle $\delta$  on the Riemann surface defined by
 $\pi_2^2=Q(\pi_1)$. This allows the use of topological
 properties of the Riemann surfaces.

 More precisely, we first introduce a
Riemann surface constructed from the energy-momentum map $F$ and
we determine the Gauss-Manin connection of a complex semi-circle
$\Gamma_C$ bypassing the line $C$ of singularities. As explained
below, we can then construct the extension to the complex domain
of the functions $\Theta$ and $\tau$ along $\Gamma_C$. The
monodromy matrix is determined as in the real approach by the
variation of the functions $\Theta$ and $\tau$ along a loop
$\Gamma$ around the origin. The regularizations of the functions
$\Theta$ and $\tau$ to cross the line $C$ are replaced by their
 continuations along $\Gamma_C$. We obtain the fractional monodromy matrix
 by letting the radius of the complex semi-circle $\Gamma_C$ tend to 0.
\subsection{Extension to the complex domain of the 1:-2
resonant system}\label{complexe12}
\begin{remark}
All that has been established in Sec. \ref{real1m2} for the real
approach can be done exactly in the same way for the complexified
phase space $T^*\C^2$ where $(J,\pi_1,\pi_2,\pi_3)\in \C^4$,
except for the fact that there is no restriction on the values of
$J$ and $\pi_1$. The quotient is taken to be $\C^*\sim S^1\times
\R^*$. The reduction in the complex approach is described in
appendix \ref{appred}. The manifold $\C^4/S^1\times \R^*$ has real
dimension 6.
\end{remark}

We begin by recalling some of the basic elements of the theory of
complex algebraic curves which will be used throughout this
section (see \cite{kirwan} for a comprehensive introduction).

Let $P:\C^2\to \C$ be a polynomial function. The fibers
$P^{-1}(z)$ are complex algebraic curves (hence two-dimensional
real surfaces). There exists a finite set $\Sigma\subset\C$
(essentially critical values of $P$) such that all fibers
$P^{-1}(z)$, $z\in\C\setminus\Sigma$, look alike. Moreover, the
mapping $P:\C^2\setminus(P^{-1}(\Sigma))\to\C\setminus\Sigma$ is a
locally trivial fibration. Hence, given any path
$\gamma:[z_0,z_1]\to\C\setminus \Sigma$, one can identify the
fibers $P^{-1}(\gamma(z_0))$ and $P^{-1}(\gamma(z_1))$. This
identification is not unique but induces a unique identification
between the homology groups of the two fibers. For
$z_0\in\C\setminus\Sigma$, a cycle $\delta(z_0)\in
H_1(P^{-1}(z_0))$ can be transported along any path $\gamma$ in
$\C\setminus\Sigma$ starting at $z_0$ giving thus a family of
cycles $\delta(z)$.  The transport depends only on the homotopy
class of the path $\gamma$ in $\C\setminus\Sigma$. This is the
classical Gauss-Manin connection \cite{AGZV,zoladek}. The
Gauss-Manin monodromy is defined from the Gauss-Manin connection
as an automorphism of the homology group $H_1$ for a loop in
$\C\setminus\Sigma$.

For the 1:-2 resonance (see Sec. \ref{fhm}), the energy-momentum
map of Eqs. (\ref{eq1}) can be treated in the formalism of complex
algebraic curves even if the situation is more complicated than
described above. The difficulty here lies in the fact that the
{\it fibration} is defined only implicitly by
\begin{equation}\label{complex1}
F^\C
:=\pi_2^2-[(\pi_1-j)(\pi_1+j)^2-\big(h-\epsilon(\pi_1^2-j^2)\big)^2]=0
\ ,
\end{equation}
i.e. $\pi_2^2=Q(\pi_1,j,h)$.

For fixed generic values of $(h,j)$, Eq. (\ref{complex1}) defines
a torus from which two points at infinity have been deleted. The
corresponding complex algebraic curve is schematically represented
in Fig. \ref{fig5}. This representation can be understood by
solving Eq. (\ref{complex1}) with respect to
 $\pi_2$. Generically, there are four values of $\pi_1$ for which
$Q(\pi_1,j,h)=0$, giving each a single solution for $\pi_2=0$.
These points are the ramification points of the complex algebraic
curve. For all other points $\pi_1$, there are two solutions
$\pi_2$ of Eq. (\ref{complex1}) represented by the two leaves in
 Fig. \ref{fig5}.
\begin{figure}
\begin{center}
\includegraphics[scale=0.8]{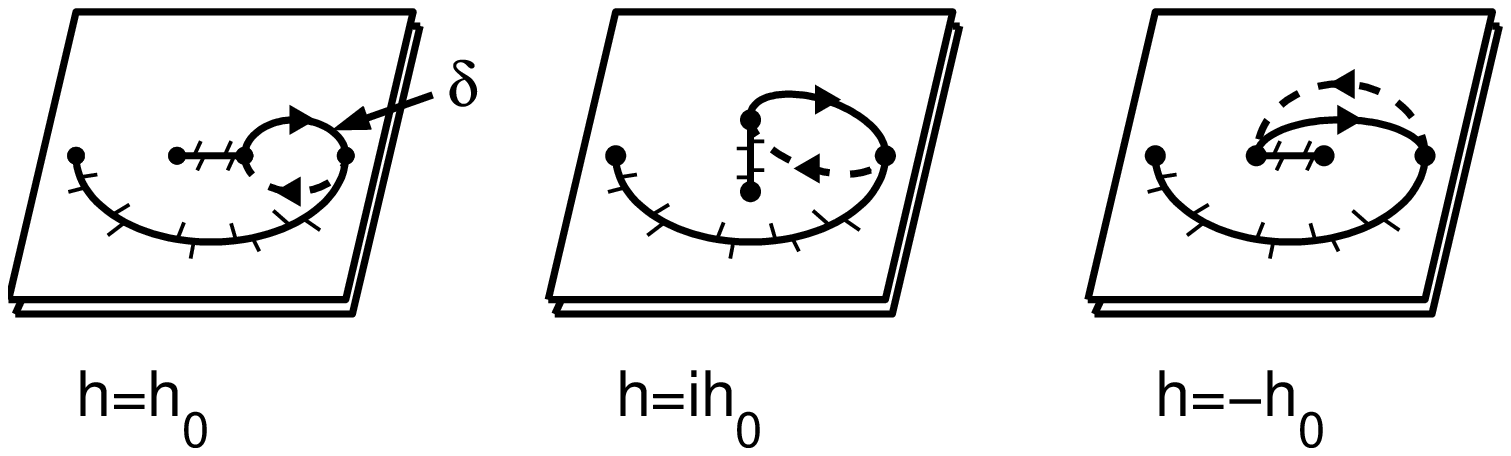}
\includegraphics[scale=0.8]{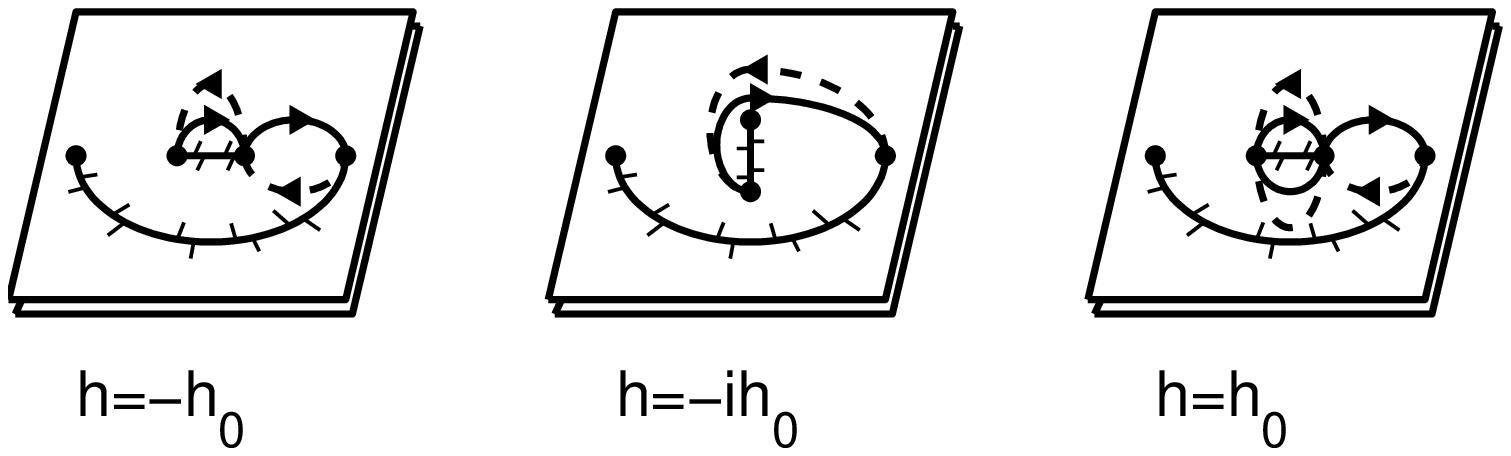}
\caption{\label{fig5} Transport of the cycle $\delta$ along a loop
around the line $C$. The loop is an arc of a circle of radius
$h_0>0$. The large solid lines represent arbitrary branch cuts of
the Riemann surfaces and the full dots ramification points (see
text). The parts in solid and dashed lines of the loop
respectively lie in the upper and lower leaves of the Riemann
surface.}
\end{center}
\end{figure}
We denote by $F^\C _{(j,h)}$ the set of points
$(\pi_1,\pi_2)\in\C^2$ such that $(\pi_1,\pi_2,j,h)$ verify Eq.
(\ref{complex1}). Note that for two different values
$(h,j)\not=(h',j')$ the fibers $F^\C _{(j,h)}$ and $F^\C
_{(j',h')}$ intersect. Nevertheless, it is possible to generalize
the Gauss-Manin connection to this case. Consider for that the
mapping $G:\C^4\to\C^3$, given by
\begin{equation} \label{complex2}
G(\pi_1,\pi_2,h,j)=(\pi_2^2-Q(\pi_1,h,j),h,j) \ .
\end{equation}
The mapping $G$ defines a fibration on the complement
$\C^3\setminus\Sigma$ of the set $\Sigma\in\C ^3$ where its rank
is not maximal. We use here the Ehresmann fibration theorem
\cite{wolf}. We take as basis of our fibration denoted
$\mathcal{B}\in \C^2$ the set $(\{0\}\times
\C^2)\cap(\C^3\setminus\Sigma)$ which is viewed as a set in the
$(h,j)$-space $\C^2$. As in the real approach, we introduce the
variable $x=j+\pi_1$ and we set $y=\pi_2$. The singular locus of
the fibration is the set of points $(h,j)\in \C^2$ where the
polynomial $Q$ has multiple complex roots. This set is given by
Eq. (\ref{fhm8}) and is displayed for $j\in\R$ in Fig. \ref{fig6}.
We notice that for $(h,j)=(0,0)$ the fibration is singular and
that the singularity $(x=0,y=0)$ is not of Morse type.

In the reduced phase space $P_j$, the original real torus projects
to a cycle $\delta(h,j)$ delimited by $\pi_1^-$ and $\pi_1^+$ (see
for instance Fig. \ref{fig3}). $\pi_1^-$ and $\pi_1^+$ are the two
largest real roots of the polynomial $Q$ as a function of $\pi_1$.
Returning back to the Riemann surface and following notations of
Eqs. (\ref{fhm10}), the roots of $Q$, which are simple
ramification points of the Riemann surface, are denoted $x_k$
($k=1,\cdots 4$). One can introduce cuts along the segment
$x_2x_3$ and along a simple curve joining $x_1$ and $x_4$ and
avoiding the segment $x_2x_3$. For $(h,j)\in\mathcal{R}_{reg}$,
the cycle $\delta(h,j)$ is represented by the real oval between
the two largest real ramification points which correspond
respectively to $j+\pi_1^-$ and $j+\pi_1^+$. To be coherent with
the real approach, this cycle is oriented from $x=j+\pi_1^-$ to
$x=j+\pi_1^+$ in the upper leaf and from $x=j+\pi_1^+$ to
$x=j+\pi_1^-$ in the lower one. All these notations are displayed
in Fig. \ref{fig5}.
\subsection{Computation of fractional monodromy from the Gauss-Manin monodromy} \label{complex12def}
We pursue in this section the construction for the 1:-2 resonant
system to arrive to the computation of fractional monodromy at the
end of the section. Let $\Gamma$ be a loop around the origin. We
recall that the computation of the monodromy matrix associated to
$\Gamma$ is based on the difference of the values of the functions
$\Theta$ and $\tau$ at each side of $C$ as $h$ goes to 0. The goal
here is to compute the variations of these functions using their
 extensions to the complex domain near $C$.
\begin{definition}
Starting with the result of lemma \ref{lemma1}, we introduce the
complex continuation of the functions $\Theta$ and $\tau$ defined
by
\begin{eqnarray}
\left\{ \begin{array}{ll}
\Theta(h,j)=\frac{h}{2i\varepsilon}\int_{\delta(h,j)}\frac{dx}{x y} \\
\tau(h,j)=\frac{1}{4i\varepsilon}\int_{\delta(h,j)}\frac{x dx}{y}
\end{array} \right. \ ,
\end{eqnarray}
where $(h,j)\in\mathcal{B}$ and
$y^2=(x-x_1)(x-x_2)(x-x_3)(x-x_4)$. The positive and negative
determinations of the square root $y$ have been chosen
respectively for the upper and the lower leaves of the Riemann
surface. We have added a factor $\frac{1}{2}$ in the definition of
$\Theta$ and $\tau$ to coincide with the real case.
\end{definition}
We locally deform in a neighborhood of the line $C$ the loop
$\Gamma$. We denote by $\Gamma_C$ this complex deformation and by
$\Gamma_R$ the rest of the loop. $\Gamma$, $\Gamma_C$ and
$\Gamma_R$ are represented in Figs. \ref{fig6} and \ref{fig61}.
\begin{remark}
In this work, we have considered deformations of the real loop in
the half-plane $\Im[h]>0$ but they could be equivalently done in
the half-plane $\Im[h]<0$.
\end{remark}
The bypass $\Gamma_C(h_0)$ is a semi-circle of radius $h_0$ in a
plane with $j_0<0$ fixed around the line $C$. The corresponding
real path completing $\Gamma_C(h_0)$ is denoted $\Gamma_R(h_0)$.
From Eq. (\ref{fhm8}) and for a small $j_0$, one deduces that if
$h_0<\sqrt{\frac{-32}{27}j_0^3}$ then $x_2$ and $x_3$ exchange
their positions along $\Gamma_C(h_0)$ whereas if
$h_0>\sqrt{\frac{-32}{27}j_0^3}$ then 3 ramification points of the
Riemann surface ($x_1$, $x_2$ and $x_3$) move and exchange their
positions. The change of the ramification points is displayed in
Fig. \ref{fig7}. It can also be deduced from the asymptotic
expansions of the roots of $Q$ [see Eqs. (\ref{fhm10})]. This can
be seen by parameterizing the loop $\Gamma_C(h_0)$ as
\begin{eqnarray} \label{complex3}
\left\{ \begin{array}{ll}
h=h_0e^{it} \\
j=j_0
\end{array} \right. \ .
\end{eqnarray}
where $t\in [0,\pi]$. There is thus a qualitative difference of
the result depending on the value of $h_0$. $h_0$ must be chosen
sufficiently small to be in the first case since for each fixed
$j_0$ we are interested in the limit $h_0\to 0$.

The family of cycles $\delta(h,j)$ can be obtained by transport of
the cycle $\delta(h_0,j_0)$ along $\Gamma_R$. Examining Fig.
\ref{fig7}, one sees that the lines of singularities crossed by
$\Gamma_R$ have no incidence on the ramification points defining
the cycle $\delta$. The parallel transport of $\delta$ along
$\Gamma_C(h_0)$ is given by the change of the ramification points
along $\Gamma_C(h_0)$. From the Picard-Lefshetz theory
\cite{zoladek,AGZV} (see Figs. \ref{fig5}), one can show that the
cycle $\delta(h_{0},j_{0})$ when transported along the bypass
$\Gamma_C(h_0)$ is transformed into
$\delta(-h_{0},j_{0})+\delta_0(-h_0,j_0)$ where $\delta_0$ is a
vanishing cycle around the ramification points $x_{2}$ and $x_{3}$
of the fiber $F^\C _{(-h_0,j_0)}$. $x_2$ and $x_3$ are defined by
Eqs. (\ref{fhm10a}). For the position of the cuts of Figs.
\ref{fig5}, the cycle $\delta_0(-h_0,j_0)$ is composed of a path
from $x_3$ to $x_2$ on the upper leaf followed by the lift of the
same path to the lower leaf run in the opposite direction. The
vanishing cycles $\delta_0$ and $\delta_1$ are represented in Fig.
\ref{fig62}. The cycle $\delta_1$ will be used in lemma
\ref{lemmares}. Note the different choices of cuts for the Riemann
surface between Figs. \ref{fig5} and Fig. \ref{fig62}.
\begin{figure}
\begin{center}
\includegraphics[scale=0.8]{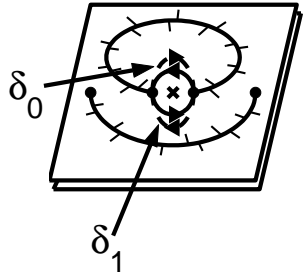}
\caption{\label{fig62} Definition of the cycle $\delta_0$ and
$\delta_1$ for the 1:-2 resonance. The position of the pole of
$\Theta$ is represented by a cross.}
\end{center}
\end{figure}
Hence, an abelian integral
$I(h_0,j_0)=\int_{\delta(h_0,j_0)}\omega$ becomes after going once
around the semi-circle $\Gamma_C(h_0)$ the sum
$I(-h_0,j_0)+\int_{\delta_0(-h_0,j_0)}\omega$ which gives the
variation of the function $I$ over $\Gamma_C(h_0)$. This latter
remark can be applied to the functions $\Theta$ and $\tau$.
However, due to the presence of a pole at $x=0$ for the function
$\Theta$, the cycle $\delta_0$ has to be positioned with respect
to $x=0$. The counterclockwise turning of the ramification points
$x_2$ and $x_3$ implies that the cycle $\delta_0$ avoids the
singularity $x=0$ from above.
\begin{figure}
\begin{center}
\includegraphics[scale=0.4]{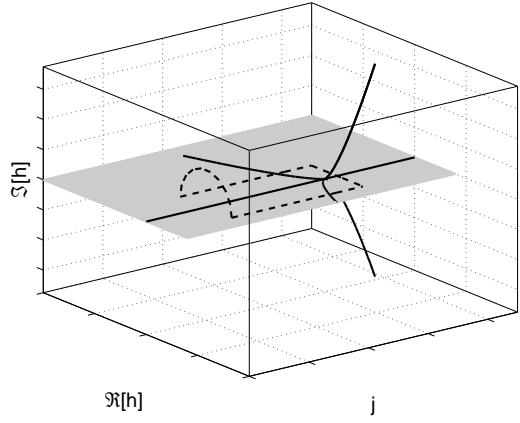}
\caption{\label{fig6} Complex discriminant locus $\Delta$ (solid
lines) of the energy-momentum map of Eq. (\ref{eq1}) for $j\in\R$
and $h\in\C$. The grey plane corresponds to the real bifurcation
diagram. The dashed lines represent the loop $\Gamma$ locally
deformed near $C$ to the complex domain. The arc of circle is in a
complex $h-plane$ with $j$ fixed.}
\end{center}
\end{figure}
\begin{figure}
\begin{center}
\includegraphics[scale=0.4]{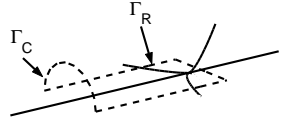}
\caption{\label{fig61} Decomposition of the loop $\Gamma$ into the
paths $\Gamma_C$ and $\Gamma_R$.}
\end{center}
\end{figure}
\begin{figure}
\begin{center}
\includegraphics[scale=0.4]{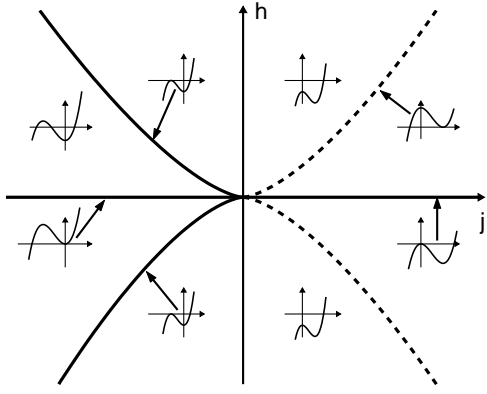}
\caption{\label{fig7} Schematic representation of three of the
roots of the polynomial $Q$ as a function of $h$ and $j$. These
roots are the roots of the principal part $Q_N$ of $Q$ defined by
Eq. (\ref{fhm7}). The polynomial $Q$ has another root larger in
module which is not represented here as it undergoes no
bifurcation. The small inserts depict the graph of $Q_N$ as a
function of $x$ for different values of $h$ and $j$. The position
of the inserts gives the corresponding values of $h$ and $j$. The
solid and dashed lines are lines of singularities of $\Delta$. The
dashed lines do not belong to the real bifurcation diagram (see
Fig. \ref{fig6}).}
\end{center}
\end{figure}

From the analysis of the behavior of the cycle $\delta$ along
$\Gamma$, we can determine the variations of the functions
$\Theta$ and $\tau$ along this loop. These variations are denoted
$\Delta\Theta_{\Gamma}(h_0)$ and $\Delta\tau_{\Gamma}(h_0)$ and
defined as follows
\begin{definition}\label{defvar}
For the energy-momentum map of Eq. (\ref{eq1}),
$\Delta\Theta_{\Gamma}(h_0)$ and $\Delta\tau_{\Gamma}(h_0)$ are
given in the complex approach by
\begin{eqnarray}
\left\{ \begin{array}{ll}
\Delta\Theta_\Gamma(h_0)=\Delta\Theta_{\Gamma_C}(h_0)+\Delta\Theta_{\Gamma_R}(h_0) \\
\Delta\tau_\Gamma(h_0)=\Delta\tau_{\Gamma_C}(h_0)+\Delta\tau_{\Gamma_R}(h_0)
\end{array} \right. \ .
\end{eqnarray}
\end{definition}
A simple calculation allows to simplify the expressions of
$\Delta\Theta_\Gamma(h_0)$ and $\Delta\tau_\Gamma(h_0)$. Since
\begin{equation}
\Delta\Theta_R(h_0)=\frac{h_0}{2i\varepsilon}\int_{\delta(h_0,j_0)}\frac{dx}{x
y}- \frac{-h_0}{2i\varepsilon}\int_{\delta(-h_0,j_0)}\frac{dx}{x
y} \ ,
\end{equation}
and
\begin{equation}
\Delta\Theta_C(h_0)=\frac{-h_0}{2i\varepsilon}\int_{\delta(-h_0,j_0)}\frac{dx}{x
y}+ \frac{-h_0}{2i\varepsilon}\int_{\delta_0(-h_0,j_0)}\frac{dx}{x
y} -\frac{h_0}{2i\varepsilon}\int_{\delta(h_0,j_0)}\frac{dx}{x y}
\ ,
\end{equation}
one deduces for $j_0<0$ fixed that
\begin{eqnarray}\label{res100}
\left\{ \begin{array}{ll}
\Delta\Theta_\Gamma(h_0)=-\frac{h_0}{2i\varepsilon}\int_{\delta_0(-h_0,j_0)}\frac{dx}{x y} \\
\Delta\tau_\Gamma(h_0)=\frac{1}{4i\varepsilon}\int_{\delta_0(-h_0,j_0)}\frac{x
dx}{y}
\end{array} \right. \ .
\end{eqnarray}
We introduce for the function $\Theta$ the following quantities
\begin{eqnarray}\label{resnew1}
\left\{ \begin{array}{lll}
\Delta\Theta_\Gamma=\lim_{h_0\to 0}\Delta\Theta_\Gamma(h_0) \\
\Delta\Theta_{\Gamma_C}=\lim_{h_0\to
0}\Delta\Theta_{\Gamma_C}(h_0) \\
\Delta\Theta_{\Gamma_R}=\lim_{h_0\to
0}\Delta\Theta_{\Gamma_R}(h_0)
\end{array} \right. \ ,
\end{eqnarray}
and the same for $\tau$.

Different results have to be established before computing the
monodromy matrix. The variation of $\Theta$ around $C$ can be
viewed as a residue.
\begin{lemma}\label{lemmares}
The sum of the integrals of the 1-form
$\frac{h_0}{2i\varepsilon}\frac{dx}{xy}$ over $\delta_0$ and
$\delta_1$ is independent of $h_0$ and $j_0$ and equal to
\begin{equation} \label{eqnew1}
\frac{h_0}{2i\varepsilon}\int_{\delta_0(h_0,j_0)}\frac{dx}{x
y}+\frac{h_0}{2i\varepsilon}\int_{\delta_1(h_0,j_0)}\frac{dx}{x
y}=2\pi \ .
\end{equation}
In this equation, $h_0$ is taken sufficiently small and $j_0<0$.
\end{lemma}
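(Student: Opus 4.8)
The plan is to evaluate the sum $\int_{\delta_0}\frac{dx}{xy}+\int_{\delta_1}\frac{dx}{xy}$ by the residue theorem on the Riemann surface $y^2=(x-x_1)(x-x_2)(x-x_3)(x-x_4)$, exploiting that the $1$-form $\omega=\frac{dx}{xy}$ has singularities only at the two points lying over $x=0$. First I would pin down these points and their residues. Substituting $x=0$ (i.e. $\pi_1=-j$) into $Q$ gives $Q(0)=-h_0^2$ exactly, while the leading coefficient of $Q$ in the variable $x$ is $-\varepsilon^2$; hence $\prod_{k=1}^4 x_k=Q(0)/(-\varepsilon^2)=h_0^2/\varepsilon^2$, so the two points over $x=0$ are $P_\pm=(0,\pm y_0)$ with $y_0^2=\prod_k x_k=h_0^2/\varepsilon^2$, that is $y_0=h_0/\varepsilon$. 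I would stress that this value is \emph{exactly} $h_0/\varepsilon$, with no dependence on $j_0$. Since $x=0$ is not a ramification point, $x$ is a local coordinate at $P_\pm$ and $\mathrm{Res}_{P_\pm}\omega=\pm 1/y_0=\pm\varepsilon/h_0$; moreover $y\sim x^2$ at infinity forces $\omega$ to be holomorphic at the two points at infinity, so $P_+,P_-$ are the only poles (consistent with the residues summing to zero on the compact surface).

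The geometric heart of the argument is a homological identity to be read off from Fig. \ref{fig62}: with the cuts chosen there, the chain $\delta_0+\delta_1$ is homologous, in the complement of $\{P_+,P_-\}$, to a contour encircling $P_+$ positively and $P_-$ negatively, i.e. $[\delta_0+\delta_1]=[\gamma_+]-[\gamma_-]$ with $\gamma_\pm$ small positively oriented loops about $P_\pm$. The mechanism is that the pole $x=0$ sits exactly between the ramification points $x_2=-h_0/\sqrt{-2j_0}$ and $x_3=h_0/\sqrt{-2j_0}$ of Eqs. (\ref{fhm10}); as the vanishing cycle $\delta_0$ is pinched toward $x=0$ it is trapped on the upper leaf around $P_+$ (recall the prescription that $\delta_0$ avoids $x=0$ ``from above''), while $\delta_1$ plays the same role around $P_-$ on the lower leaf, the opposite orientation being forced by the opposite determination of $y$ on the two leaves. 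I would verify this by tracking, arc by arc, on which leaf each piece runs and on which side of $x=0$ it passes.

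Granting the homology relation, the residue theorem yields
\[
\int_{\delta_0(h_0,j_0)}\frac{dx}{xy}+\int_{\delta_1(h_0,j_0)}\frac{dx}{xy}
=2\pi i\big(\mathrm{Res}_{P_+}\omega-\mathrm{Res}_{P_-}\omega\big)
=2\pi i\,\frac{2}{y_0}=\frac{4\pi i\,\varepsilon}{h_0}.
\]
Multiplying by the prefactor $\frac{h_0}{2i\varepsilon}$ collapses all dependence on $h_0$, and there was never any dependence on $j_0$ since $y_0=h_0/\varepsilon$ exactly, giving the asserted constant $2\pi$. Independence of $h_0$ and $j_0$ is then automatic: the class $[\gamma_+]-[\gamma_-]$ is unchanged as long as $(h_0,j_0)$ remains in the regime where the branch-point configuration of Fig. \ref{fig7} holds, namely $j_0<0$ and $h_0<\sqrt{-\tfrac{32}{27}j_0^3}$, while the residues are fixed once and for all by $Q(0)=-h_0^2$.

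I expect the main obstacle to be precisely the homological identity $[\delta_0+\delta_1]=[\gamma_+]-[\gamma_-]$. Everything else is an exact residue computation, but getting the \emph{relative sign} right is delicate: if the two vanishing cycles wound around $P_+$ and $P_-$ with the \emph{same} orientation one would obtain the cancelling sum $\frac{1}{y_0}+\big(-\frac{1}{y_0}\big)=0$ rather than $\frac{2}{y_0}$. Establishing that the determinations of $y$ on the upper and lower leaves, together with the cut placement of Fig. \ref{fig62} and the counterclockwise turning of $x_2,x_3$ along $\Gamma_C(h_0)$, conspire to produce \emph{opposite} winding signs about $P_+$ and $P_-$ is the step that demands careful bookkeeping.
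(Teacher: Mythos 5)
Your proof is correct and takes essentially the same route as the paper's: both arguments reduce the sum to a residue computation at the two points of the Riemann surface lying over $x=0$, relying on the geometric fact (read off from Fig.~\ref{fig62}) that $\delta_0\cup\delta_1$ encircles these two points with opposite orientations matching the opposite determinations of $y$ on the two leaves, and on the exact evaluation $\mathrm{Res}\left(\frac{1}{xy},x=0\right)=\frac{\varepsilon}{h_0}$ coming from $Q(0)=-h_0^2$, which is precisely what makes the result independent of $h_0$ and $j_0$. Your extra verifications --- that $y_0=h_0/\varepsilon$ exactly, that $\frac{dx}{xy}$ is holomorphic at the two points at infinity so $P_\pm$ are the only poles, and the sign bookkeeping in $[\delta_0+\delta_1]=[\gamma_+]-[\gamma_-]$ --- make explicit what the paper compresses into ``two times the residue,'' but do not change the argument.
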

\begin{proof}
We use the notations of Fig. \ref{fig62}. The union of $\delta_0$
and $\delta_1$ corresponds to two loops around the pole $x=0$
lying respectively in the upper and the lower leaves of the
Riemann surface. The orientation of these two loops is on the
lower leaf the opposite to the one on the upper leaf. The same
applies to the determination of the square root $y$ for the two
leaves of the Riemann surface. Hence, the sum of the left
hand-side of Eq. (\ref{eqnew1}) is given by two times the residue
of the 1-form $\frac{dx}{xy}$ at $x=0$. One deduces that
\begin{equation} \label{eqnew2}
\frac{h_0}{2i\varepsilon}\int_{\delta_0(h_0,j_0)}\frac{dx}{x
y}+\frac{h_0}{2i\varepsilon}\int_{\delta_1(h_0,j_0)}\frac{dx}{x
y}=\frac{h_0}{2i\varepsilon}4\pi i\textrm{Res}(\frac{1}{xy},x=0) \
.
\end{equation}
Simple algebra leads to
\begin{equation}
\textrm{Res}(\frac{1}{xy},x=0)=\frac{\varepsilon}{h_0} \ ,
\end{equation}
which completes the proof.\qed
\end{proof}
\begin{lemma} \label{lemmacomp12}
The variations of the functions $\Theta$ and $\tau$ along $\Gamma$
are
\begin{eqnarray}
\left\{ \begin{array}{ll}
\Delta\Theta_\Gamma=\pi \\
\Delta \tau_\Gamma=0
\end{array} \right. \ .
\end{eqnarray}
\end{lemma}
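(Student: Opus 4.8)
The plan is to evaluate the two limits of Eq.~(\ref{resnew1}) by reducing each to a computation on the single vanishing cycle $\delta_0$, exactly as the variations were already simplified in Eq.~(\ref{res100}). First I would dispose of $\tau$, which is the easy half. The $1$-form $\frac{x\,dx}{y}$ occurring in $\Delta\tau_\Gamma(h_0)$ has no pole at $x=0$, since the factor $x$ in the numerator kills the would-be singularity and $x=0$ is not a ramification point for $h_0\neq 0$. As $h_0\to 0$ the two ramification points $x_2,x_3$ enclosed by $\delta_0(-h_0,j_0)$ both tend to $0$ (Eqs.~(\ref{fhm10})), so the cycle shrinks to a point inside a region where the integrand is holomorphic. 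Performing the rescaling $x=\frac{h_0}{\sqrt{-2j_0}}u$, under which the cycle becomes fixed and one checks $\frac{x\,dx}{y}=O(h_0)$, shows $\int_{\delta_0}\frac{x\,dx}{y}\to 0$, whence $\Delta\tau_\Gamma=0$.

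For $\Theta$ the same collision occurs, but now $\frac{dx}{xy}$ carries a genuine pole at $x=0$, which sits exactly between the colliding ramification points $x_2<0<x_3$; this is precisely why the limit is nonzero. I would combine two ingredients. The residue identity of Lemma~\ref{lemmares} pins down the total $\frac{h_0}{2i\varepsilon}\bigl(\int_{\delta_0}+\int_{\delta_1}\bigr)\frac{dx}{xy}=2\pi$ for every small $h_0$ and every $j_0<0$. I would then argue that the two summands are equal, so each equals $\pi$. Equality follows from a symmetry of the limiting configuration: after the rescaling, the branch points lie at $u=\pm 1$ and the pole at $u=0$, a real and hence conjugation-invariant configuration, and the cycles $\delta_0$ (avoiding $x=0$ from above) and $\delta_1$ are interchanged by $u\mapsto\bar u$. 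Tracking the effect of this involution on $\frac{du}{u\sqrt{u^2-1}}$ together with the Picard--Lefschetz orientation shows the two integrals coincide. Identifying $\Delta\Theta_\Gamma$ with the $\delta_0$-contribution, while accounting for the sign of the prefactor $-h_0/2i\varepsilon$ and the relabeling $x_2\leftrightarrow x_3$ forced by $h_0\mapsto -h_0$, then gives $\Delta\Theta_\Gamma=\pi$.

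As a transparent cross-check, I would also compute $\int_{\delta_0}\frac{dx}{xy}$ directly in the rescaled variable: it converges to the universal cycle integral $\int_{\delta_0}\frac{du}{u\sqrt{u^2-1}}$ on the curve $w^2=u^2-1$, evaluated by indenting the contour around the simple pole at $u=0$ on each of the two leaves. Each leaf contributes a half-residue of value $\pi$, the two add to $2\pi$, and the prefactor converts this into $\Delta\Theta_\Gamma=\pi$. This is consistent with the one-sided values $\pm\pi/2$ and the discontinuity $\pi$ found in the real approach in Eqs.~(\ref{fhm16})--(\ref{fhm17}).

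The main obstacle is bookkeeping of orientations and signs rather than any analytic difficulty: one must fix the Picard--Lefschetz orientation of $\delta_0$, the side on which it passes the pole, and the relabeling of ramification points induced by $h_0\mapsto -h_0$, and verify that these conspire to yield $+\pi$ rather than $-\pi$ or $0$. Establishing cleanly that $\int_{\delta_0}=\int_{\delta_1}$ --- equivalently, that the half-residues on the two leaves add instead of cancelling --- is the delicate step; once it is settled, the vanishing of the remaining pieces is routine.
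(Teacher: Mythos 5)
Your treatment of $\tau$ is correct and essentially the paper's: the paper notes that $\frac{x\,dx}{y}$ has no singularity along the segment $x_2x_3$, and your rescaling $x=\frac{h_0}{\sqrt{-2j_0}}u$, giving $\frac{x\,dx}{y}=O(h_0)$ over a fixed cycle, is just a quantitative version of that remark. For $\Theta$, however, you take a genuinely different route. The paper computes $\Delta\Theta_\Gamma$ directly from Eq.~(\ref{res100}): the two leaves contribute equally, the integral over $\delta_0$ splits into a principal value along the real segment plus a half-residue at $x=0$; the residue term gives $\pi$ (using $\textrm{Res}(\frac{1}{xy},x=0)=\frac{\varepsilon}{h_0}$ as in Lemma~\ref{lemmares}), and the principal value is shown to vanish via the elliptic-integral computation of \cite{frac4}. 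Your primary argument instead deduces the value from Lemma~\ref{lemmares} ($I_0+I_1=2\pi$, with $I_k$ the normalized integral over $\delta_k$) together with the claim $I_0=I_1$; note that the paper proves this equality as a \emph{corollary} of the present lemma, so your inversion of the logic is legitimate only if the equality is established independently. Your third-paragraph cross-check is, in rescaled coordinates, exactly the paper's direct computation, except that you omit the principal-value term; in the limit configuration it vanishes by the oddness of $\frac{du}{u\sqrt{u^2-1}}$ under $u\mapsto -u$, a point worth stating since the paper spends its elliptic-integral step on precisely this.

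The one genuine soft spot is the symmetry step itself. Complex conjugation is antiholomorphic: since the branch points $\pm 1$, the pole $u=0$, and the coefficients are real, $u\mapsto\bar u$ sends $\delta_0$ (indented above the pole) to $\delta_1$ (indented below) and yields $I_1=\overline{I_0}$, \emph{not} $I_1=I_0$. Combined with Lemma~\ref{lemmares} this gives only $\mathrm{Re}\,I_0=\pi$; writing $I_0=\pi+ic$ with $c$ real (the principal-value contribution, which is real because $y$ is real on the segment between the middle roots), you still need $c\to 0$ as $h_0\to 0$. That is exactly the content of the paper's principal-value computation, and it is supplied in your setting by the parity $u\mapsto -u$ of the limiting integrand, not by conjugation. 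You correctly flag ``add instead of cancel'' as the delicate step, but conjugation alone cannot settle it; add the parity argument explicitly and your chain (Lemma~\ref{lemmares}, conjugation, parity) closes, reproducing $\Delta\Theta_\Gamma=\pi$ in agreement with the one-sided limits $\pm\frac{\pi}{2}$ of Eqs.~(\ref{fhm16})--(\ref{fhm17}).
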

\begin{proof}
We use Eqs. (\ref{res100}) and we calculate these two quantities
in the limit $h_0\to 0$ and $j_0<0$ fixed. The asymptotic
expansion of the roots $x_i$ for $h\to 0$ and $j<0$ fixed is given
by Eqs. (\ref{fhm10a}) [see Ref. \cite{frac4} for the explicit
computation].

We begin by the computation of $\Delta\Theta_\Gamma$. We first
notice that $x_2$ and $x_3$ are real. $\delta_0$ can thus be
viewed as a real loop which is locally deformed in a neighborhood
of $x=0$ to avoid the pole in $x=0$. Since the path $\delta_0$ is
oriented in the opposite direction and the sign of $y$ is the
opposite in the lower leaf with respect to the upper leaf, it is
straightforward to see that the contributions of the upper and
lower leaves of the Riemann surface coincide.
$\Delta\Theta_\Gamma$ can be written as follows
\begin{equation}
\Delta\Theta_\Gamma=\lim_{h_0\to 0}\big[ PV
\frac{-h_0}{i\varepsilon}\int_{x_3}^{x_2}\frac{dx}{xy}
+\frac{h_0}{i\varepsilon}\frac{1}{2}2\pi
i\textrm{Res}(\frac{1}{xy},x=0)\big] \ ,
\end{equation}
where $PV$ denotes the principal value of the integral. The
introduction of the principal value is due to the presence of the
pole at $x=0$. The residue is calculated with the positive
determination of the square root $y$. The factor $\frac{-1}{2}$ in
front of the residue corresponds to the fact that the integral is
taken on a semi-circle which is oriented in a clockwise manner.
The contribution of the residue term to $\Delta\Theta_\Gamma$ is
equal to $\pi$. Following Ref. \cite{frac4}, the computation of
the principal value term can be done by using elliptic integrals.
It can be shown that this term is zero.

Using the same arguments, we can deduce that $\Delta\tau_\Gamma=0$
since $\tau$ has no singularity along the
 real segment $x_2x_3$.\qed
\end{proof}
In the limit $h_0\to 0$, $\delta_0$ and $\delta_1$ play a
symmetrical role for $\Theta$. More precisely, we have
\begin{corollary}
The integrals of the 1-form
$\frac{h_0}{2i\varepsilon}\frac{dx}{xy}$ over $\delta_0$ and
$\delta_1$ are given by
\begin{equation}
\lim_{h_0\to
0}\frac{h_0}{2i\varepsilon}\int_{\delta_0(h_0,j_0)}\frac{dx}{x
y}=\lim_{h_0\to
0}\frac{h_0}{2i\varepsilon}\int_{\delta_1(h_0,j_0)}\frac{dx}{x
y}=\pi \ .
\end{equation}
\end{corollary}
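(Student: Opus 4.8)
The plan is to obtain the Corollary by pairing the exact identity of Lemma \ref{lemmares} with an evaluation of one of the two integrals in the limit. Lemma \ref{lemmares} gives, for every sufficiently small $h_0$ and every fixed $j_0<0$, the exact relation
\[
\frac{h_0}{2i\varepsilon}\int_{\delta_0(h_0,j_0)}\frac{dx}{x y}+\frac{h_0}{2i\varepsilon}\int_{\delta_1(h_0,j_0)}\frac{dx}{x y}=2\pi .
\]
Since this holds identically in $h_0$, it passes to the limit $h_0\to 0$. Hence it suffices to evaluate a single one of the two summands in that limit; the other is then determined by subtraction, and to prove the Corollary as stated I only need that the two limits coincide.

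First I would compute $\lim_{h_0\to 0}\frac{h_0}{2i\varepsilon}\int_{\delta_0(h_0,j_0)}\frac{dx}{xy}$ by the decomposition already used in the proof of Lemma \ref{lemmacomp12}. As $h_0\to 0$ the two ramification points bounding $\delta_0$ and $\delta_1$ are $x_2=-h_0/\sqrt{-2j_0}$ and $x_3=h_0/\sqrt{-2j_0}$ (Eqs. (\ref{fhm10a})), which collapse symmetrically onto the pole $x=0$. Writing $\delta_0$ as a real path from $x_3$ to $x_2$ on the upper leaf and back on the lower one, and deforming it to avoid $x=0$ by a small semicircle, the integral splits into a principal value along the segment $x_2x_3$, which tends to $0$, and a half-residue at $x=0$; since the two leaves contribute equally and $\mathrm{Res}(1/(xy),x=0)=\varepsilon/h_0$, the half-residue contributes exactly $\pi$. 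Thus $\lim_{h_0\to 0}\frac{h_0}{2i\varepsilon}\int_{\delta_0}\frac{dx}{xy}=\pi$. Substituting into the identity above forces $\lim_{h_0\to 0}\frac{h_0}{2i\varepsilon}\int_{\delta_1}\frac{dx}{xy}=2\pi-\pi=\pi$, which is the assertion. The equality of the two limits is the precise meaning of the statement that $\delta_0$ and $\delta_1$ play a symmetric role: in the limit the local configuration — the pole at the origin together with the two ramification points at $\pm h_0/\sqrt{-2j_0}$ — is invariant under the reflection $x\mapsto -x$, under which $y$ is even to leading order and the form $\frac{dx}{xy}$ is preserved while $\delta_0$ is carried to $\delta_1$.

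The main obstacle is the bookkeeping of orientations and of the determination of $y$. One must check that, with the orientations fixed by Fig. \ref{fig62}, the semicircular bypass of the pole is traversed so that the half-residue enters with a $+$ sign (yielding $+\pi$ rather than $-\pi$), and that this is consistent with the sum in Lemma \ref{lemmares} being $+2\pi$ rather than $0$; equivalently, one must verify that the reflection symmetry sends $\delta_0$ to $\delta_1$ with matching orientation, so that the two limits add rather than cancel. Controlling the asymmetric contributions of the two distant ramification points $x_1$ and $x_4$ — which are not fixed by $x\mapsto -x$ — is what confines the symmetry to the limit $h_0\to 0$, where near the pole $y\sim\sqrt{x_1x_4}\,\sqrt{x^2-x_3^2}$ is even in $x$; once this is established the conclusion is immediate.
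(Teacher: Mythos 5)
Your proposal is correct and follows essentially the same route as the paper: the paper's own proof likewise takes the limit $\lim_{h_0\to 0}\frac{h_0}{2i\varepsilon}\int_{\delta_0}\frac{dx}{xy}=\pi$ from the principal-value-plus-half-residue computation in the proof of Lemma \ref{lemmacomp12} and then deduces the $\delta_1$ integral by subtraction from the exact identity of Lemma \ref{lemmares}. Your added remarks on the reflection symmetry $x\mapsto -x$ and the orientation bookkeeping are a sensible elaboration but do not change the argument.
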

\begin{proof}
The proof of lemma \ref{lemmacomp12} has already shown that
\begin{equation}
\lim_{h_0\to
0}\frac{h_0}{2i\varepsilon}\int_{\delta_0(h_0,j_0)}\frac{dx}{x
y}=\pi \ .
\end{equation}
We conclude for $\delta_1$ by using lemma \ref{lemmares}.\qed
\end{proof}
\begin{remark}
We remark that the complex continuations of the functions $\Theta$
and $\tau$ along $\Gamma_C$ have replaced the regularizations of
these functions in the real approach. The semi-circle $\Gamma_C$
is taken to be asymptotic in order for $\Delta\Theta_\Gamma$ and
$\Delta\tau_\Gamma$ to be independent of $h$ and $j$ and to
recover the topological character of fractional monodromy. In
contrast, if we consider a loop around the line $C$ then the
variation of $\Theta$ along this loop is topological as it is
calculated from a residue (see lemma \ref{lemmares}).
\end{remark}
From lemma \ref{lemmacomp12}, we can finally conclude by the
following proposition.
\begin{proposition}
The monodromy matrix $M$ for the loop $\Gamma$ is given by
\begin{eqnarray}
M= \left( \begin{array}{cc}
1 & 0 \\
-\frac{1}{2} & 1
\end{array} \right) \ ,
\end{eqnarray}
\end{proposition}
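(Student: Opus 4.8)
The plan is to obtain the monodromy matrix directly from the two numbers furnished by Lemma~\ref{lemmacomp12}, namely $\Delta\Theta_\Gamma=\pi$ and $\Delta\tau_\Gamma=0$, by reading off how the transport along $\Gamma$ acts on the period lattice. Recall from Sec.~\ref{fhm} that at a regular value $(h,j)$ the period lattice of $F$ carries the basis $v_1=(2\pi,0)$ and $v_2=(-\Theta(h,j),\tau(h,j))$, and that the monodromy attached to $\Gamma$ is by definition the holonomy of the natural connection, i.e.\ the automorphism of this lattice obtained by comparing the basis with its continuation after one circuit of $\Gamma$. The complex construction of Sec.~\ref{complex12def} provides exactly this continuation, with the regularizations of the real approach replaced by the continuation of $\Theta$ and $\tau$ along $\Gamma_C$.

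First I would transport $v_1$. Since $v_1=(2\pi,0)$ encodes only the $S^1$-action generated by $J$, which is untouched by the bypass $\Gamma_C$, the vector $v_1$ returns to itself. Next I would transport $v_2$: by Lemma~\ref{lemmacomp12}, after one circuit $\Theta$ increases by $\pi$ while $\tau$ is unchanged, so $v_2=(-\Theta,\tau)$ becomes $(-\Theta-\pi,\tau)$. Writing $(-\Theta-\pi,\tau)=(-\Theta,\tau)-\frac{1}{2}(2\pi,0)$ identifies this as $v_2-\frac{1}{2}v_1$. Thus $v_1\mapsto v_1$ and $v_2\mapsto-\frac{1}{2}v_1+v_2$.

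I would then assemble these transport laws into the matrix using the same convention as in Eqs.~(\ref{fhm2}) and (\ref{fhm3}), where the $i$-th row records the image of $v_i$ expressed in the basis $(v_1,v_2)$. The rows are therefore $(1,0)$ and $(-\frac{1}{2},1)$, which is precisely the claimed matrix $M$. The identical automorphism holds in the geometric basis $([\beta_1],[\beta_2])$ of $H_1(F^{-1}(h,j),\Z)$ attached to the vector fields of Eq.~(\ref{fhm2b}), since $\beta_1$ and $\beta_2$ are built from $v_1$ and $v_2$; this recovers the fractional monodromy of Sec.~\ref{fhm}, the fractional coefficient $\frac{1}{2}=\pi/2\pi$ arising from the half-turn discontinuity of $\Theta$.

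The point requiring care, rather than a genuine obstacle, is the limit $h_0\to0$: one must check that the variation along the asymptotic semi-circle $\Gamma_C(h_0)$ together with the real arc $\Gamma_R(h_0)$ sums to the single discontinuity $\pi$ of $\Theta$, independently of the fixed $j_0<0$ and of the homotopy class of $\Gamma$. This is exactly what Lemma~\ref{lemmares} secures, since it expresses $\Delta\Theta_\Gamma$ through a residue at $x=0$ and hence as a topological invariant; this forces the factor $\frac{1}{2}$ and guarantees that $M$ is well defined. With $\Delta\Theta_\Gamma$ and $\Delta\tau_\Gamma$ already in hand from Lemma~\ref{lemmacomp12}, the remaining argument is the short linear-algebra translation described above.
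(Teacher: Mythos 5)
Your proof is correct and follows the same route as the paper: the paper's own proof is simply to invoke Lemma~\ref{lemmacomp12} together with the fact that the monodromy matrix is determined by $\Delta\Theta_\Gamma$ and $\Delta\tau_\Gamma$. You merely make explicit the linear-algebra translation ($v_1\mapsto v_1$, $v_2\mapsto v_2-\tfrac{1}{2}v_1$ from $\Theta\mapsto\Theta+\pi$) and the role of Lemma~\ref{lemmares} in securing independence of $j_0$ and of the homotopy class, both of which the paper leaves implicit or relegates to its surrounding remarks.
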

\begin{proof}
We use lemma \ref{lemmacomp12} and the fact that the monodromy
matrix is determined by the variations $\Delta\Theta_\Gamma$ and
$\Delta\tau_\Gamma$.\qed
\end{proof}
\begin{remark}
In the computation of lemma \ref{lemmacomp12}, we have taken
arbitrary $j$, which shows the topological character of the
definition of fractional monodromy.
\end{remark}
We note that the real and the complex approach can be related by
the following corollary.
\begin{corollary} \label{coro}
$\Delta\Theta_C=0$ and hence $\Delta\Theta_R=\Delta\Theta_\Gamma$
\end{corollary}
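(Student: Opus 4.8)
The only substantive claim is $\Delta\Theta_C=0$; once this is in hand, the relation $\Delta\Theta_R=\Delta\Theta_\Gamma$ follows immediately from the additive decomposition $\Delta\Theta_\Gamma=\Delta\Theta_{\Gamma_C}+\Delta\Theta_{\Gamma_R}$ of Definition \ref{defvar}, passed to the limit $h_0\to 0$ as in Eqs. (\ref{resnew1}). So the plan is to compute $\Delta\Theta_\Gamma$ and $\Delta\Theta_R$ independently and subtract. The first piece is already available: Lemma \ref{lemmacomp12} gives $\Delta\Theta_\Gamma=\pi$, a purely complex output coming from the residue of $\frac{1}{xy}$ at $x=0$ that the vanishing cycle $\delta_0$ picks up along the bypass [compare Eq. (\ref{res100})].

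Next I would evaluate $\Delta\Theta_R$ directly from the expression recorded above, namely $\Delta\Theta_R(h_0)=\Theta(h_0,j_0)-\Theta(-h_0,j_0)$. The point to stress is that along $\Gamma_R$ the two ramification points $j+\pi_1^\pm$ bounding the real oval $\delta$ do not move when the real singular lines are met (as already noted in the discussion of Fig. \ref{fig7}); hence the cycle is transported unchanged and $\Delta\Theta_R$ is a plain difference of values of the complex-continued $\Theta$, with no extra cycle-jump term. On the real oval this continuation coincides with the real integral of Lemma \ref{lemma1}, so $\lim_{h_0\to 0}\Delta\Theta_R(h_0)=\Theta(0^+,j_0)-\Theta(0^-,j_0)$ is exactly the discontinuity of the real function $\Theta$ across $C$, which the real approach shows to equal $\pi$ (see Sec. \ref{real1m2} and Ref. \cite{frac4}). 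Thus $\Delta\Theta_R=\pi$, and subtracting gives $\Delta\Theta_C=\Delta\Theta_\Gamma-\Delta\Theta_R=\pi-\pi=0$, from which $\Delta\Theta_R=\Delta\Theta_\Gamma$ follows at once.

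I expect the main difficulty to be bookkeeping rather than analysis: one must keep the orientation of $\Gamma_R$, the clockwise sense of the semicircle $\Gamma_C$, and the two determinations of $y$ on the upper and lower leaves mutually consistent, so that the real jump enters as $+\pi$ and cancels the $+\pi$ of the residue instead of adding to it. Conceptually, this corollary is precisely the statement that the complex bypass carries no net variation of $\Theta$ in the limit $h_0\to 0$: the whole of the $\pi$ responsible for the $-\frac{1}{2}$ entry of the monodromy matrix is carried by the real segment $\Gamma_R$, i.e. it is exactly the jump that the regularization $\tilde{\Theta}$ was introduced to absorb in the real approach. The residue computation of $\Delta\Theta_\Gamma$ thus reproduces, and explains, that \emph{ad hoc} $+\pi$.
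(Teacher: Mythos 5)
Your proof is correct, but it runs in the opposite direction from the paper's. The paper proves $\Delta\Theta_C=0$ \emph{directly}: its entire proof is an application of Jordan's lemma to the arc $\Gamma_C$ itself, showing that the variation of the continued $\Theta$ along the semi-circle of radius $h_0$ vanishes as $h_0\to 0$; the identity $\Delta\Theta_R=\Delta\Theta_\Gamma$ then falls out of the decomposition of definition \ref{defvar}. You instead evaluate the other two terms of that decomposition --- $\Delta\Theta_\Gamma=\pi$ from lemma \ref{lemmacomp12}, and $\Delta\Theta_R=\lim_{h_0\to 0}\left[\Theta(h_0,j_0)-\Theta(-h_0,j_0)\right]=\frac{\pi}{2}-\left(-\frac{\pi}{2}\right)=\pi$, the real jump across $C$ at fixed $j_0<0$ computed in \cite{frac4} --- and subtract. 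This is logically sound: the transport of $\delta$ along $\Gamma_R$ is indeed trivial (as the paper notes in discussing Fig. \ref{fig7}), your orientation bookkeeping is right, and there is no circularity, since the vanishing of the principal-value term inside lemma \ref{lemmacomp12} and the one-sided limits $\pm\pi/2$ are distinct elliptic-integral facts. What each approach buys: yours avoids any estimate along $\Gamma_C$ and uses only quantities the paper has already computed; the paper's direct arc estimate keeps the complex approach self-contained, so that the corollary can be used to \emph{derive} the real jump $\Delta\Theta_R=\pi$ from the complex residue, which is precisely how the paper argues that the complex continuation replaces the real regularization and that the two approaches are equivalent. Proved your way, the corollary becomes a consistency check between two independently known numbers, so your closing claim that the residue computation ``reproduces and explains'' the ad hoc $+\pi$ slightly overstates what your argument establishes --- in your proof that $+\pi$ is an input, not an output.
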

\begin{proof}
We apply the Jordan's lemma to show that $\Delta\Theta_C=0$.\qed
\end{proof}
The monodromy matrix in the real approach (resp. complex approach)
is given by $\Delta\Theta_R$ and $\Delta\tau_R$ (resp.
$\Delta\Theta_\Gamma$ and $\Delta\tau_\Gamma$). The corollary
\ref{coro} shows that these two approaches are equivalent.
\subsection{Generalization to $1:-n$ resonance}\label{compl1n}
All the arguments used for constructing the extension to the
complex domain of 1:-2 resonant systems can be generalized to
$1:-n$ and $m:-n$ resonant systems. As for the real approach, we
consider the family $\mathcal{F}$ of energy-momentum maps
introduced in Eqs. (\ref{res1n4}).

The Riemann surface is defined from the relation deduced from Eqs.
(\ref{res1n3a}) and (\ref{res1n4})
\begin{equation}
y^2=x^n(x-2j)-[h-R(x-j,j)]^2 \ .
\end{equation}
The discriminant locus $\Delta$ of these systems is given locally
by Eq. (\ref{res1n20}). Note that this locus is qualitatively
different according to the parity of $n$. More precisely, the real
lines of singularities for $n$ odd become purely imaginary for $n$
even. This does not change the discussion of this section.
Following the preceding case, we consider a loop $\Gamma$ around
the origin which decomposes into a complex semi-circle $\Gamma_C$
around the line $C$ and a real path $\Gamma_R$. Along $\Gamma_C$,
one sees by using the expansion of the roots of $Q$ (lemma
\ref{root1}) that for $j<0$, $n$ roots in the variable $x$ turn
asymptotically around the origin by an angle $\frac{2\pi}{n}$. The
other roots stay fixed to first order in $h$. Fig. \ref{fig88}
illustrates this point for the resonance 1:-3. Note that only the
principal part of the roots are exactly exchanged among each
other.
\begin{figure}
\begin{center}
\includegraphics[scale=0.4]{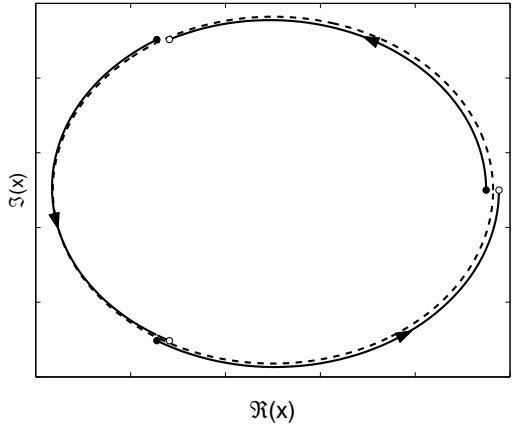}
\caption{\label{fig88} Evolution of three of the roots (see text)
of the polynomial $Q$ from the resonance 1:-3 and for a complex
semi-circle going for $h_0>0$ to $-h_0<0$ ($j_0<0$ fixed).
Numerical values are taken to be $h_0=0.005$ and $j_0=-1$. The
radius of the semi-circle is $h_0$. The energy-momentum map is
given by Eqs. (\ref{res1n5}). The full and open dots represent
respectively the roots for the starting and the ending points of
the path. The dashed line is a circle of radius
$\frac{h_0^{2/3}}{(-2j)^{1/3}}$ which corresponds to the leading
term of the expansion of the roots as $h_0\to 0$.}
\end{center}
\end{figure}
\begin{figure}
\begin{center}
\includegraphics[scale=0.8]{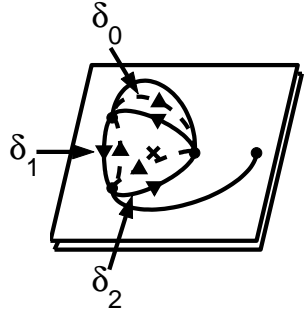}
\caption{\label{fig77} Definition of the cycles $\delta_0$,
$\delta_1$ and $\delta_2$ for the 1:-3 resonance. The cross
indicates the position of the pole of $\Theta$.}
\end{center}
\end{figure}
\begin{figure}
\begin{center}
\includegraphics[scale=0.6]{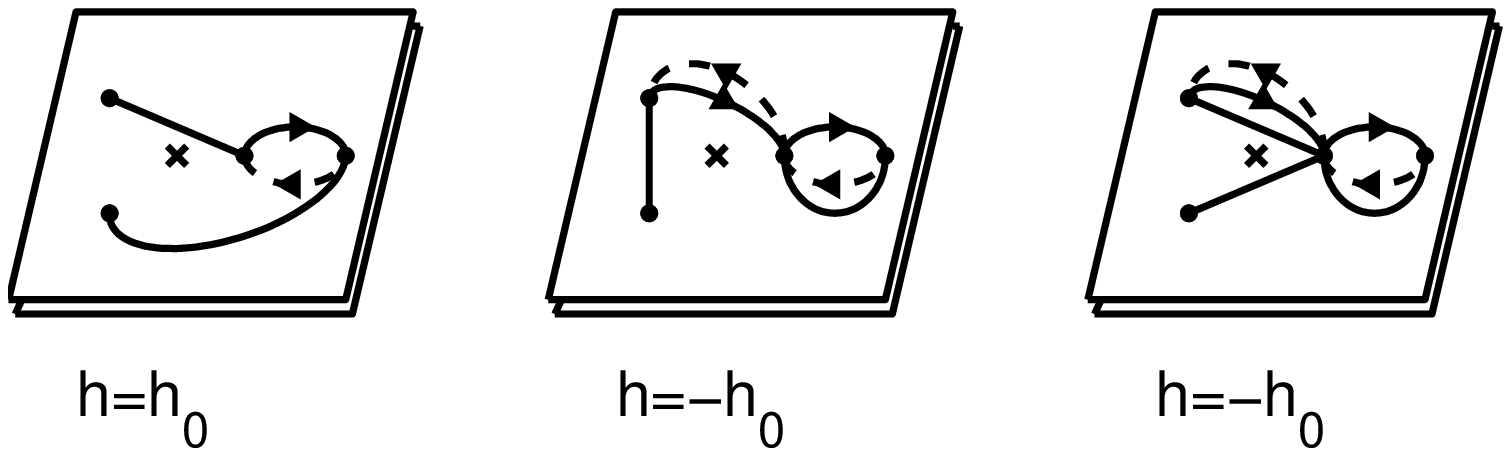}
\includegraphics[scale=0.6]{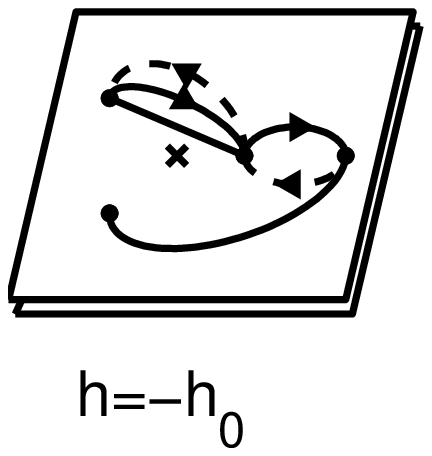}
\caption{\label{fig78} Transport of the cycle $\delta$ along a
semi-circle around the line $C$. The radius of the semi-circle is
$h_0$. The last three figures are equivalent but with different
cuts. The cross indicates the position of the pole of $\Theta$. }
\end{center}
\end{figure}
\begin{figure}
\begin{center}
\includegraphics[scale=0.6]{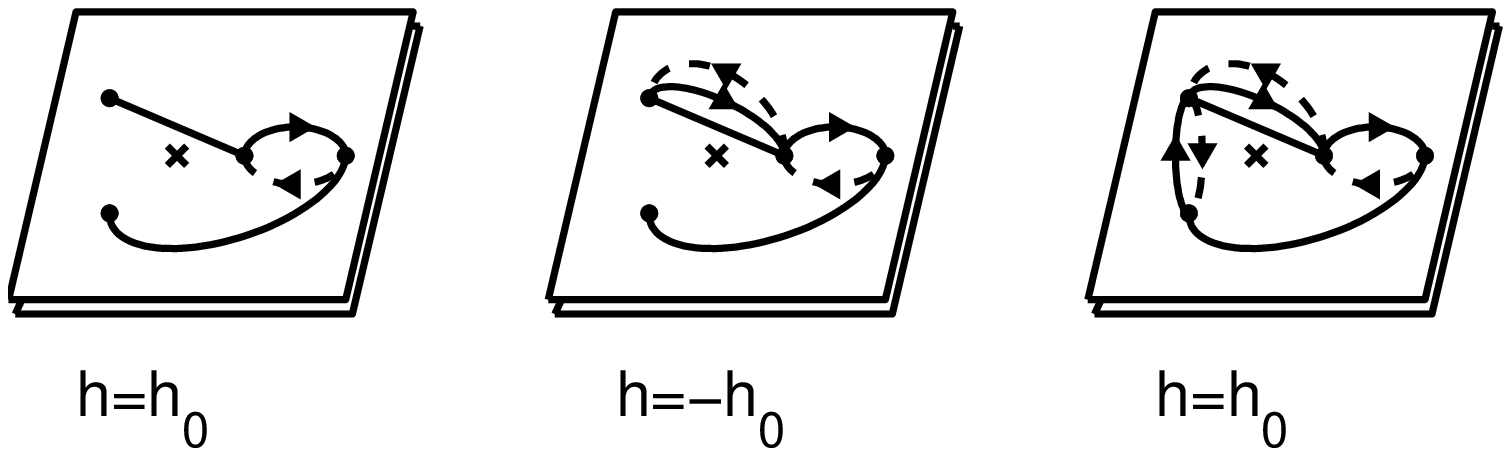}
\includegraphics[scale=0.6]{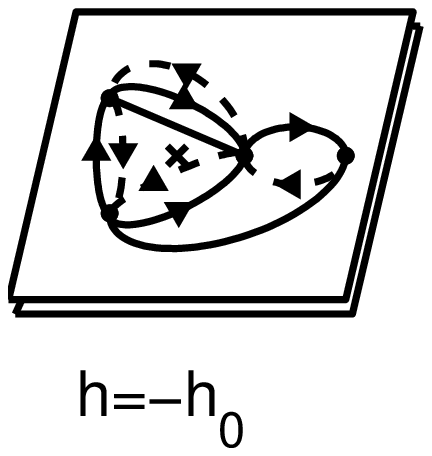}
\caption{\label{fig80} Same as Fig. \ref{fig78} but for 3
semi-circles around the line $C$.}
\end{center}
\end{figure}
Following the change of the ramification points, we can transport
the cycle $\delta$ along $\Gamma_C$. $\delta(h,j)$ is a real oval
between $x_0$ and the ramification point of the Riemann surface
associated to $\pi_1^+$ denoted $x_n$. This cycle is oriented from
$x_0$ to $x_n$ in the upper leaf and from $x_n$ to $x_0$ in the
lower one. Figures \ref{fig78} and \ref{fig80} display the
transport of $\delta$. Not all the ramification points are
represented in Figs. \ref{fig78} and \ref{fig80}. The position of
the cuts is arbitrary but indicates on which leaf of the Riemann
surface a path lies. We can compare two Riemann surfaces if the
cuts of the two surfaces are the same. Since for the 1:-n
resonance the cuts move after a semi-circle, one has to modify the
cuts to recover the initial choice of cuts. An example of this
deformation is given in Figs. \ref{fig78}.

 We see that after the semi-circle $\Gamma_C(h_0)$ around $C$, $\delta(h_0,j_0)$ is
transformed into $\delta(-h_0,j_0)+\delta_0(-h_0,j_0)$. $\delta_0$
is composed of a path from $x_0$ to $x_{n-1}$ in the upper leaf
and of the lift of the same path in the lower leaf but run in the
opposite direction.
\begin{definition}
We define the complex extension of the functions $\Theta$ and
$\tau$ as
\begin{equation} \label{eqcomplex1n00}
\Theta(h,j)=\frac{h}{2}\int_{\delta(h,j)}\frac{dx}{xy}+\Theta_0(h,j) \ ,
\end{equation}
where the positive determination of the square root $y$ is
associated to the upper leaf and
\begin{equation} \label{eqcomplex1n5}
\tau(h,j)=\frac{1}{2n}\int_{\delta(h,j)}\frac{x^{n-1}dx}{y} \ .
\end{equation}
$\Theta_0$ is the complex extension of the real function
$\Theta_0$ introduced in Eq. (\ref{res1n16}). $\Theta_0$ has a
trivial contribution to the monodromy matrix.
\end{definition}
We define the variations $\Delta\Theta_\Gamma(h_0)$ and
$\Delta\tau_\Gamma(h_0)$ of the functions $\Theta$ and $\tau$
along $\Gamma$ as in the definition \ref{defvar} for the 1:-2
resonance. These variations are given by integrals over the cycle
$\delta_0$. $\Delta\Theta_\Gamma$ and $\Delta\tau_\Gamma$ denote
the limits of these variations as $h_0\to 0$. They are given by
\begin{eqnarray}
\left\{ \begin{array}{ll}
\Delta\Theta_\Gamma=\lim_{h_0\to 0} \Delta\Theta_\Gamma(h_0) \\
\Delta\tau_\Gamma=\lim_{h_0\to 0} \Delta\tau_\Gamma(h_0)
\end{array} \right. \ .
\end{eqnarray}
Equivalent lemmas to lemmas \ref{lemmares} and \ref{lemmacomp12}
for the $1:-2$ resonance can be established for the $1:-n$
resonance. We only state the final result.
\begin{proposition} \label{complln}
The monodromy matrix associated to the loop
$\Gamma$ is equal to
\begin{eqnarray}
M= \left( \begin{array}{cc}
1 & 0 \\
-\frac{1}{n} & 1
\end{array} \right) \ .
\end{eqnarray}
\end{proposition}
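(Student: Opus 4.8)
The plan is to run, step for step, the same argument used for the $1:-2$ resonance in Sec.~\ref{complex12def}, the only new feature being that the single pair of vanishing cycles is replaced by a family of $n$ vanishing cycles generated when the $n$ coalescing roots $x_0,\dots,x_{n-1}$ of $Q$ rotate by $2\pi/n$ along $\Gamma_C$ (lemma~\ref{root1}, Figs.~\ref{fig78} and \ref{fig80}). Using the analogue of definition~\ref{defvar}, the variations $\Delta\Theta_\Gamma(h_0)$ and $\Delta\tau_\Gamma(h_0)$ reduce to integrals of the $1$-forms $\frac{h_0}{2}\frac{dx}{xy}$ and $\frac{1}{2n}\frac{x^{n-1}dx}{y}$ over the single vanishing cycle $\delta_0$ produced by the transport $\delta(h_0,j_0)\mapsto\delta(-h_0,j_0)+\delta_0(-h_0,j_0)$ established above; the real part $\Gamma_R$ of the loop contributes nothing in the limit, exactly as in corollary~\ref{coro}, and the continuous term $\Theta_0$ of Eq.~(\ref{eqcomplex1n00}) drops out. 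Everything then reduces to evaluating the two $\delta_0$-integrals as $h_0\to 0$ with $j_0<0$ fixed.

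First I would dispose of $\tau$. Its integrand $\frac{x^{n-1}dx}{y}$ is holomorphic near $x=0$ (the pole at $x=0$ present in $\Theta$ is absent here), so its integral over the cycle $\delta_0$, which collapses onto the origin as the $n$ inner roots coalesce, tends to $0$; hence $\Delta\tau_\Gamma=0$ and $\tau$ is continuous across $C$, just as in lemma~\ref{lemmacomp12}.

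Next, for $\Theta$ I would prove the $1:-n$ analogue of lemma~\ref{lemmares}. Because $R(\pi_1,J)$ carries the factor $(\pi_1+J)^{n'}$, one has $R=0$ at $x=0$, so $y(0)^2=-h_0^2$ and $\mathrm{Res}(\tfrac{1}{xy},x=0)=1/y(0)$. The union of the $n$ vanishing cycles $\delta_0,\dots,\delta_{n-1}$ (Fig.~\ref{fig77}) encircles the pole $x=0$ on both leaves with opposite orientations and opposite determinations of $y$, so their total integral equals $\frac{h_0}{2}\cdot 2\cdot 2\pi i\,\mathrm{Res}(\tfrac{1}{xy},x=0)=2\pi$, independent of $h_0$ and $j_0$. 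The decisive extra input is a \emph{symmetry argument}: to leading order the roots $x_k=\frac{h_0^{2/n}}{(-2j_0)^{1/n}}e^{2i\pi k/n}$ are permuted cyclically and, since $y^2\sim -2j_0\,x^n-h_0^2$ is invariant under $x\mapsto e^{2i\pi/n}x$, the $1$-form $\frac{dx}{xy}$ is invariant as well; hence each $\delta_k$ carries the same limit and $\lim_{h_0\to 0}\frac{h_0}{2}\int_{\delta_0}\frac{dx}{xy}=\frac{2\pi}{n}$, giving $\Delta\Theta_\Gamma=\frac{2\pi}{n}$.

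Feeding $\Delta\Theta_\Gamma=2\pi/n$ and $\Delta\tau_\Gamma=0$ into the period-lattice construction of Sec.~\ref{fhm}, where a twist variation $\Delta\Theta_\Gamma$ sends $v_2\mapsto v_2-\frac{\Delta\Theta_\Gamma}{2\pi}v_1$ while fixing $v_1=(2\pi,0)$, produces the lower-triangular matrix with off-diagonal entry $-1/n$, which is the assertion. I expect the main obstacle to be the rigorous justification of the symmetry step: lemma~\ref{root1} fixes only the leading terms of the roots, so the form is merely asymptotically rotation-invariant, and one must control the subleading corrections, show that the non-residue (principal-value) part of each $\int_{\delta_k}$ vanishes as $h_0\to 0$, and keep careful track of the orientations and of the two determinations of $y$ across the cut so that the correct (half-)residue and sign are assigned.
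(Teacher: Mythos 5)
Your proposal is correct and takes essentially the same route as the paper: the paper likewise reduces $\Delta\Theta_\Gamma(h_0)$ to $-\frac{h_0}{2}\int_{\delta_0}\frac{dx}{xy}$ via the Picard-Lefschetz transport [Eq.~(\ref{eqcomplex1n0})], evaluates the limit on the arc of radius $h_0^{2/n}/(-2j)^{1/n}$ using invariance of the leading-order integrand under $\chi\mapsto\chi+2\pi/n$ and the residue of $\frac{dx}{xy}$ at $x=0$ [Eqs.~(\ref{eqcomplex1n2})--(\ref{eqcomplex1n3})] --- precisely your ``sum of the $n$ cycles equals $2\pi$, split equally by symmetry'' step --- and disposes of $\tau$ analogously. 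One harmless slip: you invert corollary~\ref{coro}, which asserts $\Delta\Theta_{\Gamma_C}=0$ (the semicircle, not $\Gamma_R$, contributes nothing), but this does not matter since your reduction to the $\delta_0$ integral is the exact telescoping identity of definition~\ref{defvar}, not a limit statement about $\Gamma_R$.
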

\begin{proof}
We compute the different integrals in the limit $h_0\to 0$ with
$j_0<0$ fixed. $\Delta\Theta_{\Gamma}(h_0)$ can be written as
\begin{equation} \label{eqcomplex1n0}
\Delta\Theta_{\Gamma}(h_0)=\frac{-h_0}{2}\int_{\delta_0(-h_0,j_0)}\frac{dx}{xy}
\ ,
\end{equation}
where $\delta_0$ is the cycle between $x_0$ and $x_{n-1}$ which is
oriented from $x_0$ to $x_{n-1}$ in the upper leaf and inversely
in the lower leaf. In the limit $h_0\to 0$, one deduces that
\begin{equation} \label{eqcomplex1n1}
\Delta\Theta_{\Gamma}=\lim_{h_0\to 0}
h_0\int^{x_0}_{x_{n-1}}\frac{dx}{x\sqrt{\prod_{k=0}^{N-1}(x-x_k)}}
\ ,
\end{equation}
where the integral is taken along an arc of a circle from
$x_{n-1}$ to $x_0$ of radius $\frac{h_0^{2/n}}{(-2j)^{1/n}}$.
Using the change of variables
$x=\frac{h_0^{2/n}}{(-2j)^{1/n}}e^{i\chi}$, we obtain
\begin{equation} \label{eqcomplex1n2}
\Delta\Theta_{\Gamma}=
\int_{0}^{2\pi/n}\frac{d\chi}{\sqrt{\prod_{k=0}^{n-1}(e^{i\chi}-e^{2i\pi
k/n})}} \ .
\end{equation}
It is straightforward to check that the integrand is not modified
by the translation $\chi'=\chi+\frac{2\pi}{n}$. One then deduces
that
\begin{equation} \label{eqcomplex1n3}
\Delta\Theta_{\Gamma}=\lim_{h_0\to 0} -\frac{1}{n}\textrm{Res}[
\frac{-h_0}{x\sqrt{\prod_{k=0}^{N-1}(x-x_k)}},x=0] \ .
\end{equation}
Simple algebra finally gives
\begin{equation} \label{eqcomplex1n4}
\Delta\Theta_{\Gamma}=\frac{2\pi}{n} \ .
\end{equation}
Similar arguments show that $\Delta\tau_{\Gamma}=0$. We finally
construct the monodromy matrix from the variations of $\Theta$ and
$\tau$ along $\Gamma$. \qed
\end{proof}

We finish this section by presenting a complementary computation
of fractional hamiltonian monodromy in the complex approach. The
idea is here to use direct asymptotic computations to determine
$\Delta\Theta_\Gamma$.
\begin{lemma}
For $j<0$ fixed and $|h|\to 0$, we have the following asymptotic
behavior
\begin{equation}
\Theta(h,j)-\Theta_0(h,j)\sim \frac{2}{n}\arctan{[\frac{k_0}{h}]}
\ ,
\end{equation}
where $k_0$ is a function of $h$ such that $\lim_{|h|\to
0}\frac{|k_0|}{|h|}=0$.
\end{lemma}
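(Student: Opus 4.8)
The plan is to collapse the abelian integral that defines $\Theta-\Theta_0$ onto an elementary arctangent by one substitution, and then to read off $k_0$ from the outer ramification data. Fix $j<0$ and work in a neighborhood of $C$. Starting from the complex extension (\ref{eqcomplex1n00}) and using that the real oval $\delta$ is run on both leaves of $y^{2}=Q$ with opposite determinations of $y$, the two contributions add and give
\begin{equation}
\Theta(h,j)-\Theta_0(h,j)=h\int_{x_0}^{x_+}\frac{dx}{x\sqrt{Q(x)}} \ ,
\end{equation}
where $x_0$ is the smallest positive ramification point and $x_+$ the outer one bounding $\delta$. By lemma \ref{root1} the principal part of $Q$ near the origin is $Q\sim cx^{n}-h^{2}$ with $c=-2j>0$, the $n$ inner ramification points being the roots (\ref{res1n18}); as in the proof of proposition \ref{prop1n}, the integral is dominated by the inner region $[x_0,k]$ with $|h|^{2/n}/c^{1/n}\ll k\ll 1$, the remaining part and the $x^{n+1}$ and $R$ corrections being subdominant.

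First I would substitute $w=\sqrt{cx^{n}-h^{2}}$. A direct computation gives $\frac{h\,dx}{x\sqrt{cx^{n}-h^{2}}}=\frac{2h}{n}\,\frac{dw}{w^{2}+h^{2}}$, so the primitive is exactly $\frac{2}{n}\arctan(w/h)$. The inner endpoint $x_0$ corresponds to $w=0$ and contributes nothing, whereas the cutoff $k$ corresponds to a value $w_+=\sqrt{ck^{n}-h^{2}}$ with $|h|\ll w_+\ll 1$, so that $|w_+/h|\to\infty$. I would then pass to the complementary angle, $\arctan(w_+/h)=\frac{\pi}{2}-\arctan(h/w_+)$, and set
\begin{equation}
k_0:=\frac{h^{2}}{w_+} \ , \qquad \frac{k_0}{h}=\frac{h}{w_+} \ ,
\end{equation}
which is a genuine function of $h$ obeying $|k_0|/|h|=|h|/w_+\to 0$ as $|h|\to 0$, the asserted property. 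This yields $\Theta-\Theta_0\sim\frac{\pi}{n}-\frac{2}{n}\arctan(k_0/h)$, i.e. the stated arctangent in the small argument $k_0/h$ up to the additive constant $\frac{\pi}{n}$ and the oddness of $\arctan$.

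The main obstacle, and the step I would treat with most care, is the bookkeeping of this additive constant together with the determination of $\sqrt{Q}$ and of $\arctan$ carried along the bypass $\Gamma_C$. Unlike the principal value computation behind proposition \ref{prop1n}, the monodromy carrying value $\pm\pi/n$ is not produced by the size of the small argument $k_0/h$ but by the branch selected through analytic continuation as $h$ runs over $\Gamma_C$ and the inner roots (\ref{res1n18}) rotate by $2\pi/n$; matching this against the residue computation in the proof of proposition \ref{complln}, which fixes $\Delta\Theta_\Gamma=2\pi/n$, is what pins down the branch so that the right hand side is cast in the form $\frac{2}{n}\arctan(k_0/h)$ and tends to $\pm\pi/n$. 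I would also need to verify cutoff independence of $k_0$ to leading order, that the simple pole at $x=0$ is handled by the same principal value prescription as in lemma \ref{lemmatheta}, and that the discarded tail and the higher order terms of $Q$ do not perturb the leading arctangent, so that the equivalence $\sim$ holds with $k_0$ as defined.\qed
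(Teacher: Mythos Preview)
Your substitution $w=\sqrt{cx^{n}-h^{2}}$ and the resulting primitive $\frac{2}{n}\arctan(w/h)$ are exactly the paper's approach in slightly different coordinates: the paper sets $x=\frac{h^{2/n}}{(-2j)^{1/n}}u$, obtains $\int\frac{du}{u\sqrt{u^{n}-1}}$, and uses the antiderivative $\frac{2}{n}\arctan\sqrt{u^{n}-1}$; since $w=h\sqrt{u^{n}-1}$, this is the same primitive. Up to the end of your second paragraph you have recovered precisely the paper's computation, arriving at $\Theta-\Theta_0\sim\frac{2}{n}\arctan(w_+/h)$ with $|w_+/h|\to\infty$.

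The problem is everything after that. The stated condition $\lim_{|h|\to 0}|k_0|/|h|=0$ is a typo in the lemma: from the paper's own proof one reads $k_0=k^{n/2}\sqrt{-2j}$, and the cutoff condition $|h|^{2/n}/(-2j)^{1/n}\ll k$ forces $|h|/|k_0|\to 0$, i.e.\ $|k_0|/|h|\to\infty$. The subsequent discussion of the Riemann surface of $\arctan$ confirms this reading, since the endpoints $R_\pm$ are the points at real infinity where $\arctan$ takes the values $\pm\pi/2$. Your $w_+$ \emph{is} the paper's $k_0$, and the proof is finished at that point. Your passage to the complementary angle, the redefinition $k_0=h^{2}/w_+$, and the entire third paragraph about tracking an additive $\pi/n$ through branches are all attempts to force the result into the erroneous literal statement; they introduce a spurious constant that is not present and that the paper never has to account for. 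Drop that maneuver and simply identify $k_0$ with $w_+$ (equivalently $k^{n/2}\sqrt{-2j}$), noting that the correct asymptotic condition is $|h|/|k_0|\to 0$.
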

\begin{proof}
We proceed as in the real approach by using the asymptotic
expansions of the roots $x_k$ of the polynomial $Q$ (see lemma
\ref{root1}). Following computations of the proof of proposition
\ref{prop1n}, we obtain
\begin{equation}
\Theta(h,j)-\Theta_0(h,j) \sim
h\int_{\frac{h^{2/n}}{(-2j)^{1/n}}}^{x_{n}}\frac{dx}{x\sqrt{\prod_{i=0}^{i=N-1}(x-x_i)}}
\ ,
\end{equation}
which transforms into
\begin{equation}
\Theta(h,j)-\Theta_0(h,j) \sim
h\int_{\frac{h^{2/n}}{(-2j)^{1/n}}}^k\frac{dx}{x\sqrt{(-2j)(x^n-\frac{h^2}{-2j})}}
\ .
\end{equation}
The real function $k$ fulfills $\lim_{|h|\to
0}\frac{k(-2j)^{1/n}}{|h|^{2/n}}=0$. Introducing the variable $u$
such that $x=\frac{h^{2/n}}{(-2j)^{1/n}}u$, one arrives to
\begin{equation}
\Theta(h,j)-\Theta_0(h,j) \sim
\int_1^{\frac{k(-2j)^{1/n}}{h^{2/n}}}\frac{du}{u\sqrt{u^n-1}} \ ,
\end{equation}
where we have used the fact that $\sqrt{h^2}=h$. One finally
obtains that
\begin{equation}
\Theta(h,j)-\Theta_0(h,j) \sim
\big[\frac{2}{n}\arctan{[\sqrt{u^n-1}]}\big]_1^{\frac{k(-2j)^{1/n}}{h^{2/n}}}
\ ,
\end{equation}
which leads to
\begin{equation}
\Theta(h,j)-\Theta_0(h,j) \sim \frac{2}{n}\arctan{[\frac{k_0}{h}]}
\ ,
\end{equation}
where $k_0=k^{n/2}\sqrt{-2j}$.\qed
\end{proof}
The behavior of the function $\Theta$ near the line $C$ is thus
related to the complex function $\arctan$. Using the fact that
\begin{equation}
\arctan z=\frac{1}{2i}\ln[\frac{1+iz}{1-iz}] \ ,
\end{equation}
for $z\in \C$, we can construct the Riemann surface of this
function. This surface has two leaves and a cut between the points
$z=i$ and $z=-i$. Figure \ref{fig89} displays this surface in the
variable $z$. We recall that this function has a jump of $\pi$
along a loop crossing once the cut.
\begin{figure}
\begin{center}
\includegraphics[scale=0.6]{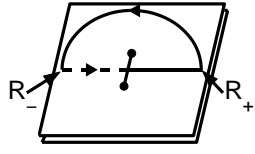}
\caption{\label{fig89} Riemann surface of the function $\arctan$.
The solid and dashed lines respectively lie in the upper and the
lower leaves. The ramification points correspond to the complex
numbers i and -i.}
\end{center}
\end{figure}
Using Eqs. (\ref{res1n23}) and (\ref{res1n24}), one sees that the
limits of $\Theta$ when $h\to 0^\pm$ correspond respectively to
the points $R_+$ and $R_-$ of the Riemann surface (see Fig.
\ref{fig89}). Since these two limit points are the initial and
final points of $\Gamma_R$, we can associate to $\Gamma_R$ the
real path from $R_-$ to $R_+$ of the Riemann surface of the
function $\arctan$. The complex continuation of the function
$\Theta$ along the small semi-circle $\Gamma_C$ of the bifurcation
diagram is associated to the big circle at infinity of Fig.
\ref{fig89}. From Fig. \ref{fig89}, we thus recover that after a
real loop $\Gamma$ locally deformed to the complex domain to
bypass the line $C$ the function $\Theta$ has a jump of
$\frac{2\pi}{n}$. This result is also coherent with the analysis
performed with the variation of the cycles $\delta$. We denote by
$\delta_k$ the cycle between the ramification points $x_k$ and
$x_{k+1}$. The cycle is oriented from $x_k$ to $x_{k+1}$ in the
upper leaf and from $x_{k+1}$ to $x_k$ in the lower one. Following
the proof of proposition \ref{complln}, it can be shown that if we
compute $\Theta$ along one of these cycles then we obtain
asymptotically the same result. More precisely, we have
\begin{equation}
\frac{h}{2}\int_{\delta_k(h,j)}\frac{dx}{xy}=\frac{2\pi}{n} \ ,
\end{equation}
for $j<0$ fixed and $h\to 0^-$. We consider now $r$ semi-circles
around $C$. The cycle $\delta$ is transformed into
$\delta+\delta_0-\delta_1+\delta_2-\cdots+(-1)^{r-1}\delta_{r-1}$.
This is displayed for $r=3$ in Figs. \ref{fig80}. As expected, we
thus see that after an even (resp. odd) number of semi-circles,
the function $\Theta$ has no jump (resp. a jump of
$\frac{2\pi}{n}$) which corresponds to the behavior of the complex
$\arctan$.
\subsection{Generalization to $m:-n$ resonance}\label{complmn}
We now study the $m:-n$ resonant system with $m>1$ and $m$ and $n$
relatively prime. We associate to such a system a Riemann surface
which can be constructed in the variables $x=\pi_1+j$ or
$x'=\pi_1-j$ from
\begin{equation}\label{compmn1}
y^2=x^n(x-2j)^m-[h-R(x-j,j)]^2 \ ,
\end{equation}
or
\begin{equation}\label{compmn2}
y^2=(x'+2j)^nx'^m-[h-R(x'+j,j)]^2 \ .
\end{equation}
One passes from one representation to the other by the relation
$x=x'-2j$. In particular, the two surfaces have the same
ramification points translated by $2j$. Depending on the line of
singularities considered, one or the other surface will be used,
i.e., the surface in $x$ for $j<0$ and the surface in $x'$ for
$j>0$. We next recall that the discriminant locus $\Delta$ of such
a system is given by Eqs. (\ref{mnres15}). From the expansion of
the roots of the polynomial $Q$ in $h=0$ (lemma
\ref{lemmarootmn}), we deduce that for $j<0$ (resp. $j>0$), $n$
(resp. $m$) roots in $x$ (resp. in $x'$) exchange their positions
along a loop around the line $C$. We locally deform $\Gamma$ in a
neighborhood of the line $C$ and we decompose this loop into four
loops $\Gamma_{R_1}$, $\Gamma_{R_2}$, $\Gamma_{C_1}$ and
$\Gamma_{C_2}$ where $\Gamma_{C_1}$ and $\Gamma_{C_2}$ are
respectively two semi-circles around the lines of singularities
$(h=0,j<0)$ and $(h=0,j>0)$. $\Gamma_{R_1}$ and $\Gamma_{R_2}$
complete the loop $\Gamma$ respectively for $h<0$ and $h>0$. We
define the variations of the functions $\Theta$ and $\tau$ along
$\Gamma$ as follows
\begin{eqnarray}
\left\{ \begin{array}{ll}
\Delta\Theta_\Gamma=\Delta\Theta_{\Gamma_{C_1}}+\Delta\Theta_{\Gamma_{C_2}}+\Delta\Theta_{\Gamma_{R_1}}+\Delta\Theta_{\Gamma_{R_2}} \\
\Delta\tau_\Gamma=\Delta\tau_{\Gamma_{C_1}}+\Delta\tau_{\Gamma_{C_2}}+\Delta\tau_{\Gamma_{R_1}}+\Delta\tau_{\Gamma_{R_2}} \\
\end{array} \right. \ ,
\end{eqnarray}
where the different variations are determined asymptotically,
i.e., when the radii of the semi-circles $C_1$ and $C_2$ go to 0.
A simple calculation then show that the study of the $m:-n$
resonance can
 be reduced to the study of two cases for $j>0$ and $j<0$
similar to the $1:-n$ resonance. This allows to compute the jump
of the function $\Theta$ along $\Gamma$ and to deduce the
monodromy matrix for the real loop $\Gamma$. We recover the result
of proposition \ref{propmn}.
\section{Conclusion} \label{conc}
In this paper, we have investigated the notion of Fractional
Hamiltonian Monodromy in the 1:-2, $1:-n$ and $m:-n$ resonant
systems. We have discussed our asymptotic method to calculate the
monodromy matrix in the real approach and we have proposed a
definition of fractional monodromy using a complex extension of
the bifurcation diagram. From this definition, we have recovered
the results of the real approach. At this point, the question
which naturally arises is the generalization of this concept to
other types of singularities. A part of the answer could be given
by applying the complex approach to a new generalization of
standard monodromy, the bidromy which has been recently introduced
in Ref. \cite{bidromy}.
\appendix \label{app}
\section{Geometric construction}\label{appgeom}
\subsection{1:-2 resonance} \label{app12}
As stated in the introduction, the geometric construction of
fractional monodromy generalizes the construction of standard
monodromy. We propose a schematic representation that is slightly
different from the original one proposed in Ref. \cite{frac2}.
This representation has the advantage to be easily generalizable
to 1:-n resonant systems. We use a standard representation of a
torus i.e. a rectangle whose edges are identified according to the
arrows. The crossing of the line $C$ is displayed in Figs.
\ref{figapp1} and \ref{figapp2}. A curled torus is represented by
two rectangles glued along a horizontal edge and with a particular
identification of the vertical edges (see Figs. \ref{figapp1},
$h=0$). This construction is not limited to a given
energy-momentum map but can be applied to any energy-momentum map
having a bifurcation diagram with such a line of singularities.
\begin{figure}
\begin{center}
\includegraphics[scale=0.6]{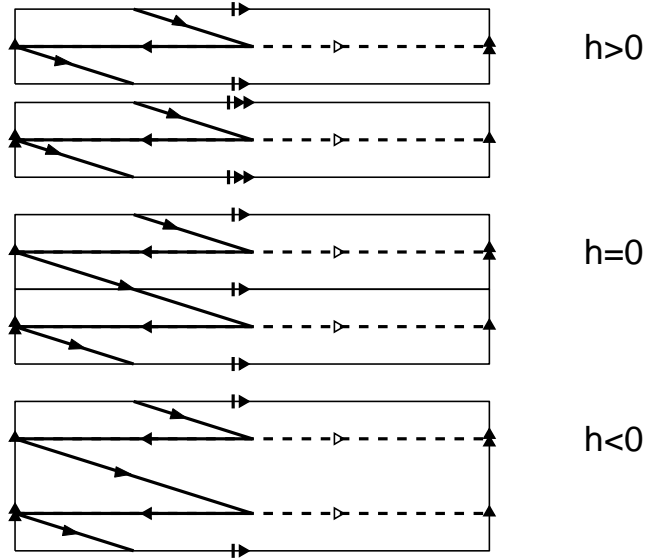}
\caption{\label{figapp1} Schematic representation of the
continuous transport of the basic cycles $(\beta_1,\beta_2)$ when
the singular line $C$ is crossed. $\gamma_1$ and
$\gamma_2$ are respectively represented in dashed and solid
lines.}
\end{center}
\end{figure}
\begin{figure}
\begin{center}
\includegraphics[scale=0.6]{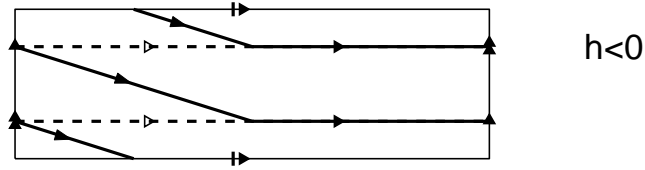}
\caption{\label{figapp2} Schematic representation of the basis
cycles for $h<0$. Representatives of $[\beta_1]$ and $[2\beta_2]$
are respectively represented in dashed and solid lines.}
\end{center}
\end{figure}
Let $\beta_1$ and $\beta_2$ be the two cycles associated to the
flows of $X_1$ and $X_2$ defined in Eqs. (\ref{fhm2a}). They are
the representatives of the classes of homology $[\beta_1]$ and
$[\beta_2]$ which form a basis of $H_1(T^2(h,j),\Z)$ where
$(h,j)\in \mathcal{R}_{reg}$. Only a subgroup of $H_1$ can be
transported continuously across the line $C$. We assume that a
basis for this subgroup is given by $[\beta_1]$ and $[2\beta_2]$.
$2\beta_2$ corresponds to the cycle $\beta_2$ covered twice. For
$h>0$, as representatives of $[\beta_1]$ and $[2\beta_2]$ we
consider the two cycles $\gamma_1=\beta_1$ and $\gamma_2$. The
cycle $\gamma_2$ is the union of two cycles $\beta_2$ with
starting points belonging to the same orbit of the flow of $X_1$
but separate by an angle $\theta=\pi$. To make a link with
analytical calculations, $\Theta$ is taken to be $\pi/2$ in this
case. $\gamma_1$ can be easily transported across $C$ and it
remains unchanged. To transport continuously $\gamma_2$, the two
cycles forming $\gamma_2$ have to be connected in one point in
$h=0$ and then merged to form only one cycle covered once for
$h<0$. Representatives of the basis of $H_1(F^{-1}(h,j),\Z)$ for
$h<0$ are given in Fig. \ref{figapp2}, note that $\Theta=-\pi/2$
since $\Theta$ has a
 discontinuity of size $\pi$ on $C$. Comparison of Fig. \ref{figapp1}
($h<0$) and Fig. \ref{figapp2} leads to the conclusion that after
one loop, the cycle $\gamma_2$ becomes a representative of the
equivalence class $[2\beta_2-\beta_1]$ and that the corresponding
monodromy matrix written formally in the basis
$([\beta_1],[\beta_2])$ is given by
\begin{eqnarray} \label{eqapp1}
M= \left( \begin{array}{cc}
1 & 0 \\
-1/2 & 1
\end{array} \right) \ .
\end{eqnarray}
\subsection{1:-n resonance} \label{appmn}
The geometric construction of Sec. \ref{app12} can be
straightforwardly generalized to other resonances. We consider the
resonance 1:-3 but other resonances can be treated along the same
lines. The 3-curled torus is represented in Fig. \ref{figapp3}
($h=0$) by three rectangles glued along a common edge. Note also
the particular identification of the vertical edges. Following
notations of Sec. \ref{app12}, a basis of the subgroup of
$H_1(F^{-1}(h,j),\Z)$ which can be transported continuously across
$C$ is given by $[\beta_1]$ and $[3\beta_2]$. For $h>0$, as a
representative of $[3\beta_2]$ we choose three cycles $\beta_2$
whose starting points belong to the same orbit of the flow of
$X_j$ with an angle $\theta$ of $2\pi/3$ between each other. We
also assume that $\Theta=\pi/3$ in this case. After crossing $C$,
this cycle belongs for $h<0$ to the equivalence class
$[3\beta_2-\beta_1]$ as shown by Fig. \ref{figapp4}. For $h<0$,
$\Theta=-\pi/3$ since the discontinuity of $\Theta$ on $C$ is
equal to $2\pi/3$. In the basis $([\beta_1],[\beta_2])$, the
monodromy matrix $M$ is given by
\begin{eqnarray} \label{app1}
M= \left( \begin{array}{cc}
1 & 0 \\
-1/3 & 1
\end{array} \right) \ .
\end{eqnarray}
\begin{figure}
\begin{center}
\includegraphics[scale=0.6]{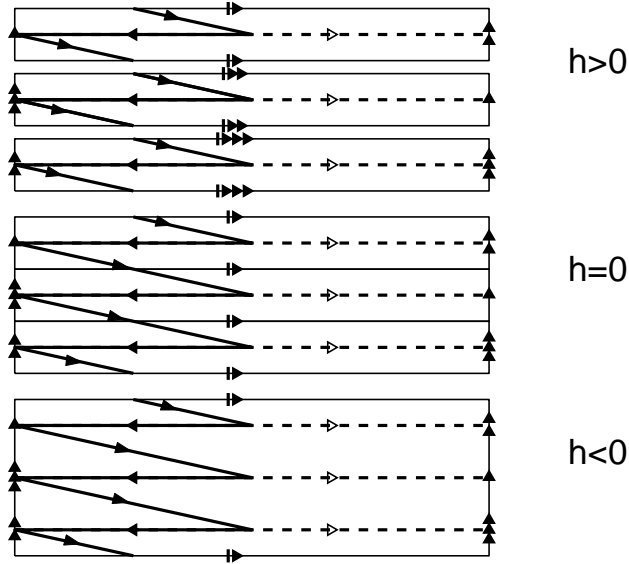}
\caption{\label{figapp3} Same as Fig. \ref{figapp1} but for the
1:-3 resonance.}
\end{center}
\end{figure}
\begin{figure}
\begin{center}
\includegraphics[scale=0.6]{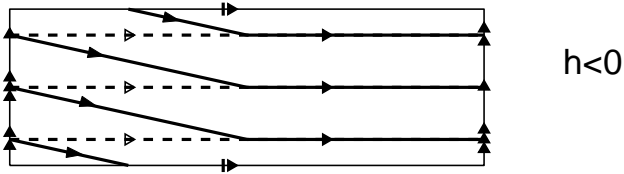}
\caption{\label{figapp4} Same as Fig. \ref{figapp2} but for the
1:-3 resonance.}
\end{center}
\end{figure}
\section{The semi-classical point of view} \label{semi}
We illustrate in this section the relation between classical
monodromy and its semi-classical counterpart for $1:-n$ and $m:-n$
resonant systems. To our knowledge, this point has not been
discussed up to now in the literature. We refer the reader to
Refs. \cite{san1,frac2} for a rigorous definition of this
semi-classical point of view. Here, we consider only the graphical
representations of semi-classical monodromy as a pictorial
illustration of classical monodromy.

We consider two quantum differential operators $\hat{J}$ and
$\hat{H}$ whose classical limits are the Hamiltonians $J$ and
$H$. These two operators commute i.e. $[\hat{J},\hat{H}]=0$. They
thus have a system of common eigenfunctions belonging to $L^2(\R
^2,dq_1\wedge dq_2)$. The corresponding eigenvalues form the
quantum joint spectrum of the energy-momentum map $F$ which is a
2-dimensional lattice of points. We can construct the
quantum-classical bifurcation diagram of $F$ by superimposing both
the quantum joint spectrum and the classical bifurcation diagram.
This has been done for the $1:-n$ and $m:-n$ resonant systems in
Figs. \ref{figapp10} and \ref{figapp11}. Note that the quantum
joint spectrum is defined for a given value $\hbar$ viewed here as
a parameter. Using EBK quantification rules, we also introduce the
semi-classical joint spectrum which is defined as the set of
points $(h,j)\in \mathcal{R}_{reg}$ where the numbers $n_1$ and
$n_2$ given by the relations
\begin{equation} \label{eqsemi1}
\hbar (n_i+\frac{\alpha_i}{4})=\oint _{\gamma_i} \textbf{pdq} \ ,
\end{equation}
are integers. In Eq. (\ref{eqsemi1}), \textbf{pdq} is the
Liouville 1-form and $\alpha_i$ the Maslov index associated to the
cycle $\gamma_i$ where $([\gamma_1],[\gamma_2])$ is a basis of
$H_1(T^2(h,j),\Z)$. We remark that the semi-classical lattice
differs from the quantum lattice by $o(\hbar)$ which is irrelevant
in our study. We also point out that the EBK quantification rules
are not valid near the line of singularities $C$ and have to be
replaced by singular rules\cite{colin3}. In contrast, the quantum
joint spectrum gives a smooth transition of the crossing $C$.
Moreover, locally around a regular value of $\mathcal{R}$, this
lattice is regular in the sense that there exists a map which
sends this lattice to $\hbar\Z^2$ as $\hbar$ tends to zero. A
systematic approach has been developed to check the regularity of
the global spectrum. The method consists in taking a cell i.e. a
quadrilateral whose vertices lie on the points of the lattice,
transporting continuously this cell along a loop $\Gamma$ and
comparing the final cell with the initial one. If the two cells
are different then the system has quantum monodromy. The rotation
matrix which sends the initial cell to the final one is the
quantum monodromy matrix $M_Q$. More precisely, if the cell is
supported by the two vectors $(w_1,w_2)$
 and if after a loop these vectors
are transformed into $(w_1',w_2')$ then $M_Q$ is defined by the
relation
\begin{eqnarray} \label{eqsemi2}
\left( \begin{array}{c}
w'_1 \\
w'_2
\end{array} \right)
=M_Q \left( \begin{array}{c}
w_1 \\
w_2
\end{array} \right) \ .
\end{eqnarray}
Using the semi-classical joint spectrum, it can be shown that
$M_Q=(M_{Cl}^t)^{-1}$ where $M_{Cl}$ is the classical monodromy
matrix. We recall that when the system has fractional monodromy
the size of the cell has to be increased to cross the line of
singularities. In a way analogous to the classical case, a simple
cell cannot be transported continuously across the line $C$. The
multiple cell then becomes the basic cell which is equivalent to
consider only a sublattice of the original lattice. Here, the
vertical lines of the quantum bifurcation diagram which are
parallel to $w_1$ are labeled by $n_1$, the quantum number
associated to $\phi_J$ which is a global quantum number. The cells
are thus multiplied in the other direction. For instance, for the
resonance 1:-3 the size is multiplied by 3. This point is
displayed in Fig. \ref{figapp10}. For a resonance $m:-n$, due to
the form of the monodromy matrix the size is increased from 1 to
$mn$ which explains why we have considered cells of size 6 in Fig.
\ref{figapp11} for the 2:-3 resonance. Examination of Figs.
\ref{figapp10} and \ref{figapp11} shows that in the basis
$(w_1,w_2)$ the quantum monodromy matrices are respectively equal
to
\begin{eqnarray} \label{eqsemi3}
\left( \begin{array}{cc} 1 & 1/3 \\
0 & 1
\end{array} \right) \ ,
\end{eqnarray}
for the 1:-3 resonance and
\begin{eqnarray} \label{eqsemi4}
\left( \begin{array}{cc} 1 & 1/6 \\
0 & 1
\end{array} \right) \ ,
\end{eqnarray}
for the 2:-3 resonance.
\begin{figure}
\begin{center}
\includegraphics[scale=0.4]{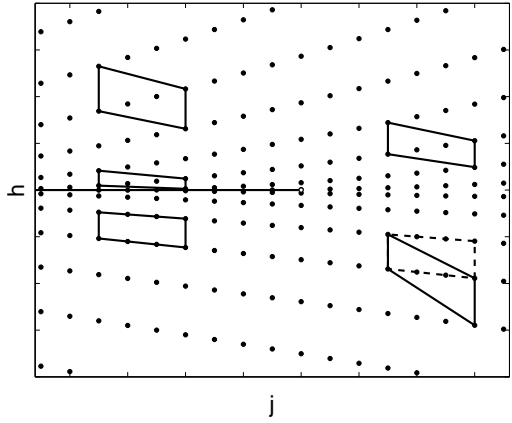}
\caption{\label{figapp10} Semi-classical bifurcation diagram for
the resonance $1:-3$ and the energy-momentum map of Eqs.
(\ref{res1n5}). The line of singularities $C$ is represented by a
solid line. The open dot indicates the position of the origin of
the bifurcation diagram. The final cell after a counterclockwise
closed loop around the origin is depicted in dashed lines.}
\end{center}
\end{figure}
\begin{figure}
\begin{center}
\includegraphics[scale=0.4]{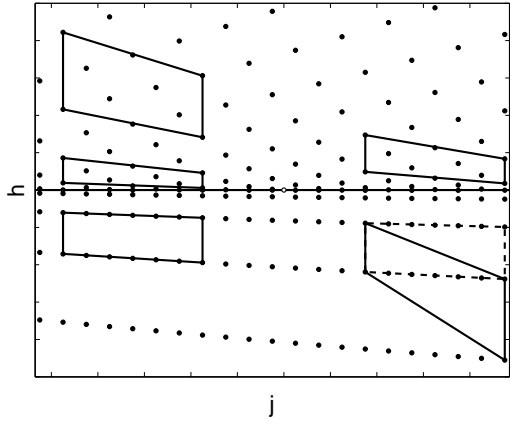}
\caption{\label{figapp11} Same as Fig. \ref{figapp10} but for the
2:-3 resonance. The corresponding energy-momentum map is given by
Eqs. (\ref{eqsemi5}).}
\end{center}
\end{figure}
The energy-momentum map used for the 2:-3 resonance is defined by
\begin{eqnarray} \label{eqsemi5}
F=\left\{ \begin{array}{ll} J \\
\pi_3+(\pi_1+J)(\pi_1-J) \\
\end{array} \right. \ .
\end{eqnarray}
\section{Reduction in the complex approach}\label{appred}
We detail in this section the reduction in the complex approach
for an $m:-n$ resonant system. This reduction is different from
the reduction used in the real approach which is associated to the
$S^1$-action of the flow of the Hamiltonian $J$. Both reductions
leave invariant the polynomials $(J,\pi_1,\pi_2,\pi_3)$ defined by
Eqs. (\ref{res1n3}).

We start from the complexified
phase space $T^*\C^2$ with $(p_1,q_1,p_2,q_2)\in \C^4$.
The reduction is based on two $SO(2,\C)$ actions $\Phi_1$ and $\Phi_2$. We recall that a matrix $R\in SO(2,\C)$ is a $2\times 2$ matrix which reads
\begin{eqnarray} \label{eqred2}
\left( \begin{array}{cc} a & -b \\
b & a
\end{array} \right) \ ,
\end{eqnarray}
where $(a,b)\in \C^2$ and $a^2+b^2=1$.
$\Phi_k$ ($k\in\{1,2\}$) is a map from $SO(2,\C)\times \C^2$ to $\C^2$ which
associates to each couple $\big(R,(q_k,p_k)\big)$ the point of
coordinates
\begin{eqnarray} \label{eqred3}
\left( \begin{array}{cc} a_k & -b_k \\
b_k & a_k
\end{array} \right)\left( \begin{array}{c} q_k \\
p_k
\end{array} \right)
=\left( \begin{array}{cc} a_kq_k-b_kp_k \\
b_kq_k+a_kp_k
\end{array} \right)
\ .
\end{eqnarray}
We next introduce new coordinates which can be written as
follows
\begin{eqnarray} \label{eqred4}
\left\{ \begin{array}{llll} \eta_1=q_1-ip_1 \\
\xi_1=q_1+ip_1 \\
\eta_2=q_2-ip_2 \\
\xi_2=q_2+ip_2 \\
\end{array} \right. \ .
\end{eqnarray}
Under the action of $\Phi_k$, these new coordinates transform into
$\lambda_k\xi_k$ and $\lambda_k^{-1}\eta_k$
 where $\lambda_k=a_k+ib_k$. Finally, simple algebra shows that
the complexified invariant polynomials $(J,\pi_1,\pi_2,\pi_3)$ are invariant under the conjoint action of
$\Phi_1$ and $\Phi_2$ if $\lambda_1^n\lambda_2^m=1$. One can conclude that the reduction is associated to a $\C^*$-action $\Phi$ from $\C ^*\times \C^4$
to $\C^4$ defined as follows
\begin{equation} \label{eqred5}
\Phi : \big(\lambda,(\xi_1,\eta_1,\xi_2,\eta_2)\big)\to (\lambda^m\xi_1,\lambda^{-m}\eta_1,\lambda^{-n}\xi_2,\lambda^n\eta_2) \ .
\end{equation}

\end{document}